\newtheorem{theorem}{Theorem}
\newtheorem{lemma}{Lemma}
\begin{document}
\title{Shape Formation by Programmable Particles}
\author{Giuseppe A.~Di Luna\footnotemark[1] \and Paola Flocchini\footnotemark[1] \and Nicola Santoro\footnotemark[2] \and Giovanni Viglietta\footnotemark[1] \and Yukiko Yamauchi\footnotemark[3]}
\date{}
\thispagestyle{empty}
\renewcommand{\thefootnote}{\fnsymbol{footnote}}
\footnotetext[1]{University of Ottawa, Canada. E-mails: \texttt{\{gdiluna, paola.flocchini, gvigliet\}@uottawa.ca}.}
\footnotetext[2]{Carleton University, Canada. E-mail: \texttt{santoro@scs.carleton.ca}.}
\footnotetext[3]{Kyushu University, Japan. E-mail: \texttt{yamauchi@inf.kyushu-u.ac.jp}.}
\renewcommand{\thefootnote}{\arabic{footnote}}
\setcounter{footnote}{0}

\maketitle

\begin{abstract}
\emph{Shape formation} (or \emph{pattern formation})  is a basic distributed  problem
for  systems of computational  mobile entities.
Intensively studied  for  systems  of
autonomous mobile robots, it has recently been investigated in the realm of  {\em programmable matter}, 
where entities are assumed to be  small and  with severely limited capabilities.
Namely, it  has been studied
in the geometric  {\em  Amoebot} model, where
 the anonymous entities, called {\em particles}, 
 operate on 
a  hexagonal tessellation of the plane  and have
limited  computational power (they have constant memory), strictly local interaction and communication capabilities (only with  particles in neighboring nodes of the  grid), 
 and limited motorial capabilities (from a grid node to an empty neighboring node);  their activation is controlled
 by an adversarial   scheduler.
Recent investigations  have  shown    how,
starting  from a well-structured configuration in which the particles form a (not necessarily complete) triangle,
the particles can form a large  class  of shapes.
This result
has been established under several assumptions: 
agreement on the clockwise direction (i.e., {\em chirality}),
a {\em sequential} activation schedule,
and {\em randomization} (i.e., particles can flip coins to elect a leader).

In this paper 
 we obtain several results that, among other things,   provide a   characterization of 
which shapes can be  formed {\em deterministically}  starting from any   {\em simply connected}  initial configuration of $n$ particles.
The characterization is constructive: we provide a
\emph{universal shape formation algorithm} that, for each feasible pair of shapes $(S_0, S_F)$, allows the
particles to form the final shape $S_F$ (given in input) starting from  
 the  initial shape $S_0$, unknown to the particles. The final configuration will be an appropriate scaled-up copy of $S_F$ depending on $n$.
 
If  {\em randomization} is allowed,   then  any input shape  can  be formed  from any initial (simply connected) shape
 by our algorithm,  provided that there are enough  particles.
 
Our algorithm works   without  chirality, proving that
 {\em chirality  is   computationally  irrelevant} for shape formation. Furthermore, it works
 under a strong adversarial scheduler, not necessarily sequential.

We also consider the complexity of shape formation both in terms of the number of rounds and the total number of moves performed by the particles executing a universal shape formation algorithm. We prove that our solution has a complexity of $O(n^2)$ rounds and moves: this number of moves is also asymptotically optimal.
\end{abstract}

\section{Introduction}\label{s:1}

\subsection{Background}
The term {\em  programmable matter}, introduced by Toffoli and Margolus over a quarter century ago~\cite{ToM91},
 is used to denote  matter that has the ability to change its physical properties (e.g., shape,  color, density, etc.)
 in a programmable fashion, based upon user input or autonomous sensing. 
  Often programmable matter is envisioned  as a very large number of very small locally interacting computational particles,
  programmed to  collectively perform a complex task. Such particles  could have applications 
 in a variety of  important situations: they
  could be employed to create smart materials,  used for autonomous monitoring and repair, 
be instrumental in  minimal invasive surgery, etc.  

As  recent advances in microfabrication and cellular engineering render the production of such particles increasingly possible,
there has been a convergence of theoretical research interests
on programmable matter from some areas of computer science, 
 especially robotics, sensor networks, molecular self-assembly, and  distributed computing.
Several theoretical models  of programmable matter have been proposed, ranging from DNA self-assembly systems (e.g.,~\cite{DemPSS11,michail1,michail2,Pa14,Ro06,ScW15})
to shape-changing synthetic molecules and  cells (e.g.,~\cite{WoCGD+13}), from metamorphic robots (e.g.,~\cite{Chi94,WaWA04}) to  nature-inspired synthetic
insects and micro-organisms (e.g.,~\cite{DerGSB+15,DoFRNV+16,LiTTRS10}), each model assigning  special capabilities and constraints to the entities  
and focusing on specific  applications.

Of particular interest,  from the distributed computing viewpoint,  is the  {\em geometric Amoebot} model
of programmable matter~\cite{CaDRR16,DaDGP+17,DaGPR+17,DerGMR+15,spaa,DerGMR+17,DerGSB+15}.
 In this model, introduced in~\cite{DerGSB+15} and so called because inspired by the behavior of amoeba,
programmable matter  is viewed as a swarm of decentralized autonomous self-organizing  entities, operating on 
a  hexagonal tessellation of the plane. 
These entities, called {\em particles},  are constrained by having simple computational capabilities (they are finite-state
 machines), strictly local interaction and communication capabilities (only with  particles located in neighboring nodes of the hexagonal grid), 
 and limited motorial capabilities (from a grid node to an empty neighboring node); furthermore, their activation is controlled
 by an adversarial (but  fair) synchronous scheduler.
 A feature of the Amoebot model is that particles can be in two modes: {\em contracted} and 
{\em expanded}. When contracted, a
 particle occupies only one node, while when expanded the particle occupies two neighboring nodes;
 it is indeed this ability of a particle  to expand and contract that allows it to move on the grid.
 The Amoebot model has been  investigated to understand the computational power of such simple entities;
  the focus has been  on applications such as
 coating~\cite{DaDGP+17,DerGMR+17}, gathering~\cite{CaDRR16},  and   shape formation~\cite{DerGMR+15,spaa,DerGSB+15}. The latter is also the topic of our investigation.

The \emph{shape formation} problem is  prototypical   for   systems of self-organizing  entities.
This problem, called \emph{pattern formation} in swarm robotics, requires the  entities to move in 
the spatial universe they inhabit in such a way that,
within finite time, their positions form the  geometric shape given in input  (modulo translation, rotation, scaling, and reflection),
and no further changes occur. 
Indeed, this problem has been intensively studied especially in active systems  such as 
autonomous mobile robots (e.g.,~\cite{AnSY95,DaFSY15,FlPSW08,FuYKY16,SuzY99,YaS10}) and 
modular robotic systems (e.g.,~\cite{ArR10,newref,RuCN14}). 

In the  Amoebots model, shape formation has been investigated in~\cite{DerGMR+15,spaa,DerGSB+15},
taking into account that, due to the ability of particles to expand, 
it might be possible to form shapes  whose    size is  larger than
the number of particles. 

The pioneering study of~\cite{DerGMR+15} on shape formation in the geometric Amoebot model showed 
how particles can build simple shapes, such as a hexagon or a triangle. 
Subsequent investigations have recently shown      how,
starting  from a well-structured configuration in which the particles form a (not necessarily complete) triangle,
they can form a larger class  of shapes~\cite{spaa}. 
This result
has been established under several assumptions: 
availability of {\em chirality}
(i.e., a globally consistent circular orientation of the plane shared by all particles), 
a {\em sequential} activation schedule (i.e., at each time unit the scheduler selects only one particle which will interact with its
neighbors and possibly move),
and, more important,  {\em randomization} (i.e., particles can flip coins to elect a leader).

These results  and assumptions immediately and naturally open fundamental research questions, including:
Are other shapes formable?  What can be done deterministically?  Is chirality necessary?
as well as  some less crucial but nevertheless  interesting questions, such as: 
What happens if the scheduler is not sequential? What if
the initial configuration is not well structured?

In this paper, motivated and stimulated by these questions, we continue the investigation  on shape formation in the geometric 
Amoebot model
and provide some definitive answers.

\subsection{Main Contributions}

We establish several results that, among other things,  provide a  constructive characterization of 
which shapes $S_F$ can be  formed {\em deterministically}  starting from an unknown   {\em simply connected}  initial configuration $S_0$ of $n$ particles.

As in~\cite{spaa}, we assume that the size of the description of  $S_F$ is constant with respect 
to the size of the system, so that it can be encoded by each particle as part of its internal memory. Such a description is available to all the particles at the beginning of the execution, and we call it their ``input''. The particles will form a final configuration that is an appropriate scaling, translation, rotation, and perhaps reflection of the input shape $S_F$. Since all particles of $S_0$ must be used to construct $S_F$, the scale $\lambda$ of the final configuration depends on $n$: we stress that $\lambda$ is unknown to particles, and they must determine it autonomously. 

Given  two shapes  $S_0$ and $S_F$, we say that  the pair $(S_0,S_F)$ is {\em feasible} if there exists a deterministic algorithm that, in every execution and
regardless of the activation schedule,  allows the particles to form $S_F$ starting from $S_0$ and no longer move.

On the contrary, a pair $(S_0,S_F)$ of shapes is {\em unfeasible} when the symmetry of the initial configuration $S_0$ prevents the formation of the final shape $S_F$. In Section~\ref{s:2}, we formalize the notion of {\em unbreakable} symmetry of shapes embedded in triangular grids, and in  Theorem~\ref{t:neg} we show that starting from an unbreakable $k$-symmetric configuration only unbreakable $k$-symmetric shapes can be formed.

Interestingly, for all the feasible pairs,  we provide a \emph{universal shape formation algorithm} in Section~\ref{s:3}. This algorithm does not need any information on $S_0$, except that it is simply connected.

These results concern the \emph{deterministic} formation of shapes. As a matter of fact, our algorithm uses a deterministic leader election algorithm as a subroutine (Sections~\ref{s:3.1}--\ref{s:3.4}).
If the initial shape $S_0$ is unbreakably $k$-symmetric, such an algorithm may elect as many as $k$ neighboring leader particles, where $k \in \{1,2,3\}$. It is trivial to see that, with a constant number of coin tosses, we can elect a unique leader among these $k$ with arbitrarily high probability.
Thus, our results immediately imply the existence of a \emph{randomized} universal shape formation algorithm for \emph{any} pair of shapes $(S_0,S_F)$ where $S_0$ is simply connected. This extends the result of~\cite{spaa}, which assumes the initial configuration to be a (possibly incomplete) triangle.

Additionally, our notion of shape generalizes the one used in~\cite{spaa}, where a shape is only a collection of triangles, while we include also 1-dimensional segments as its constituting elements. In Section~\ref{s:4}, we are going to show how the concept of shape can be further generalized to essentially include anything that is Turing-computable.

Our algorithm works 
 under a  stronger adversarial scheduler 
 that activates an arbitrary   number of particles at each stage (i.e., not necessarily just one, like the sequential scheduler),
and with a slightly less demanding 
communication system.

Moreover, in our  algorithm  no chirality is assumed: indeed, unlike in~\cite{spaa},  different particles may have different handedness. On the contrary, in the examples of unfeasibility given in Theorem~\ref{t:neg}, all particles have the same handedness. Together, these two facts allows us to conclude that {\em chirality  is   computationally  irrelevant} for shape formation.

Finally, we analyze the complexity  of shape formation in terms of  the total number of {\em moves} (i.e., contractions and expansions) 
performed by $n$ particles 
executing a universal shape formation algorithm, as well as in terms of the total number of \emph{rounds} (i.e., spans of time in which each particle is activated at least once, also called \emph{epochs}) taken by the particles. 
We first prove that any  universal shape formation algorithm requires  $\Omega(n^2)$ moves in the worst case (Theorem~\ref{t:lower}). We then show that
the total number of moves of our  algorithm  is $O(n^2)$ in the worst case (Theorem~\ref{tfinal}): that is, our solution is asymptotically optimal. The time complexity of our algorithm is also $O(n^2)$ rounds, and optimizing it is left as an open problem (we are able to reduce it to $O(n\log n)$, and we have a lower bound of $\Omega(n)$: see Section~\ref{s:4}).

Obviously, we must assume the size of $S_0$ (i.e., the number of particles that constitute it) to be sufficiently large with respect to the input description of the final shape $S_F$. More precisely, denoting the size of $S_F$ as $m$, we assume $n$ to be lower-bounded by a cubic function of $m$ (Theorem~\ref{tp6}). A similar restriction is also found in~\cite{spaa}.

\section{Model and Preliminaries}\label{s:2}

\textbf{Particles.}
A \emph{particle} is a conceptual model for a computational entity that lives in an abstract graph $G$. A particle may occupy either one vertex of $G$ or two adjacent vertices: in the first case, the particle is said to be \emph{contracted}; otherwise, it is \emph{expanded}.

\smallskip
\noindent\textbf{Movement.}
A particle may move through $G$ by performing successive \emph{expansion} and \emph{contraction} operations.\footnote{The model in~\cite{spaa} allows a special type of coordinated move called ``handover''. Since we will not need our particles to perform this type of move, we omit it from our model.} Say $v$ and $u$ are two adjacent vertices of $G$, and a contracted particle $p$ occupies $v$. Then, $p$ can expand toward $u$, thus occupying both $v$ and $u$. When such an expansion occurs, $u$ is said to be the \emph{head} of $p$, and $v$ is its \emph{tail}. From this position, $p$ can contract again into its head vertex $u$. As a general rule, when a particle expands toward an adjacent vertex, this vertex is by definition the particle's head. An expanded particle cannot expand again unless it contracts first, and a contraction always brings a particle to occupy its head vertex. When a particle is contracted, the vertex it occupies is also called the particle's head; a contracted particle has no tail vertex.

If a graph contains several particles, none of its vertices can ever be occupied by more than one particle at the same time. Accordingly, a contracted particle cannot expand toward a vertex that is already occupied by another particle. If two or more particles attempt to expand toward the same (unoccupied) vertex at the same time, only one of them succeeds, chosen arbitrarily by an adversarial scheduler (see below).

\smallskip
\noindent\textbf{Scheduler.}
In our model, time is ``discrete'', i.e., it is an infinite ordered sequence of instants, called \emph{stages}, starting with stage 0, and proceeding with stage 1, stage 2, etc. Say that in the graph $G$ there is a set $P$ of particles, which we call a \emph{system}. At each stage, some particles of $P$ are \emph{active}, and the others are \emph{inactive}. We may think of the activation of a particle as an act of an adversarial \emph{scheduler}, which arbitrarily and unpredictably decides which particles are active at each stage.
The only restriction on the scheduler is a bland \emph{fairness} constraint, requiring that each particle be active for infinitely many stages in total. That is, the scheduler can never keep a particle inactive forever.

\smallskip
\noindent\textbf{Sensing and reacting.}
When a particle is activated for a certain stage, it ``looks'' at the vertices of $G$ adjacent to its head, discovering if they are currently unoccupied, or if they are head or tail vertices of some particle. If the particle is expanded, it also detects which of these vertices is its own tail; all other particles are indistinguishable (i.e., they are \emph{anonymous}). Each active particle may then decide to either expand (if it is contracted), contract (if it is expanded), or stay still for that stage. All these operations are performed by all active particles simultaneously, and take exactly one stage. So, when the next stage starts, a new set of active particles is selected, which observe their surroundings and move, and so on.

\smallskip
\noindent\textbf{Memory.}
Each particle has an \emph{internal state} that it can modify every time it is activated. The internal state of any particle must be picked from a finite set $Q$; i.e., all particles have ``finite memory''.

\smallskip
\noindent\textbf{Communication.}
Two particles can also \emph{communicate} by sending each other \emph{messages} taken from a finite set $M$, provided that their heads are adjacent vertices of $G$. Specifically, when a particle $p$ is activated and sees the head of particle $p'$, it may send a message $m$ to it along the oriented edge $(u,v)$ connecting their heads. Then, the next time $p'$ is activated, it will receive and read the message $m$. That is, unless some particle (perhaps again $p$) sends another message $m'$ on the same oriented edge $(u,v)$ before $p'$ is activated, in which case $m$ is ``overwritten'' by $m'$, unbeknownst to $p'$ and the particle that sent $m'$. In other words, upon activation, a particle will always receive the most recent message sent to it from each of the vertices adjacent to its head, while older unread messages are lost. If $p'$ expands while $p$ is sending a message to it (i.e., in the same stage), the message is lost and is not received by any particle. When a message has been read by a particle, it is immediately destroyed.\footnote{The model in~\cite{spaa} has a more demanding communication system, which assumes each particle to have some local shared memory that all neighboring particles can read and modify.}

\smallskip
\noindent\textbf{Triangular network.}
In this paper, as in~\cite{spaa}, we assume the graph $G$ to be the dual graph of a regular hexagonal tiling of the Euclidean plane. So, in the following, $G$ will be an infinite regular triangular grid. We also denote by $G_D$ a fixed ``canonical'' \emph{drawing} of the abstract graph $G$ in which each face is embedded in the Cartesian plane as an equilateral triangle of unit side length with one edge parallel to the $x$ axis.

\smallskip
\noindent\textbf{Port labeling.}
Note that each vertex of $G$ has degree 6. With each particle $p$ and each vertex $v$ is associated a \emph{port labeling} $\ell(p,v)$, which is a numbering of the edges incident to $v$, from 0 to 5, in clockwise or counterclockwise order with respect to the drawing $G_D$. For a fixed particle $p$, port labels are assumed to be invariant under the automorphisms of $G$ given by translations of its drawing $G_D$. As a consequence, if the port labeling $\ell(p,v)$ assigns the label $i$ to the edge $(v,u)$, then the port labeling $\ell(p,u)$ assigns the label $(i+3)\ \rm{mod}\ 6$ to the edge $(u,v)$. However, different particles may have different port labels for the same vertex $v$, depending on what edge (incident to $v$) is assigned the label 0, and whether the labels proceed in clockwise or counterclockwise order around $v$. If they proceed in clockwise order, the particle is said to be \emph{right-handed}; otherwise, it is \emph{left-handed}. So, the \emph{handedness} of a particle does not change as the particle moves, but different particles may have different handedness.

\smallskip
\noindent\textbf{Stage structure.}
Summarizing, an active particle $p$ performs the following actions during a single stage: it reads its current internal state $q$, it looks at the contents $c_0,$ \dots $c_5$ of the vertices adjacent to its head (each $c_i$ has four possible values describing the vertex corresponding to port $i$ according to $p$'s labeling: it may denote an unoccupied vertex, $p$'s own tail, the head of another particle, or the tail of another particle), it reads the pending messages $m_0$, \dots, $m_5$ coming from the vertices adjacent to its head (again, indices correspond to port labels, and some $m_i$'s may be the empty string $\varepsilon$, denoting the absence of a message), it changes its internal state to $q'$, it sends messages $m'_0$, \ldots, $m'_5$ to the vertices adjacent to its head, which possibly replace older unread messages (if $m'_i=\varepsilon$, no message is sent through port $i$), and it performs an operation $o$ (there are eight possibilities for $o$: stay still, contract, or expand toward the vertex corresponding to some port label). These variables are related by the equation $A(q,c_0,\dots,c_5,m_0,\ldots, m_5)=(q',m'_0,\ldots, m'_5,o)$, where $A$ is a function.

Recall that the set $Q$ of possible internal states is finite, as well as the set $M$ of possible messages. Hence $A$ is a finite function, and we will identify it with the \emph{deterministic algorithm} that computes it.

We assume that, when stage 0 starts, all particles are contracted, they all have the same predefined internal state $q_0$, and there are no messages pending between particles.

\smallskip
\noindent\textbf{Shapes.}
In this paper we study shapes and how they can be formed by systems of particles. A \emph{shape} is a non-empty connected set consisting of the union of finitely many edges and faces of the drawing $G_D$.\footnote{In Section~\ref{s:4}, we will show that our results hold also for a much more general notion of shape.} We stress that a shape is not a subgraph of the abstract graph $G$, but it is a subset of $\mathbb R^2$, i.e., a geometric set. A shape $S$ is \emph{simply connected} if the set $\mathbb R^2\setminus S$ is connected (intuitively, $S$ has no ``holes''). The \emph{size} of a shape is the number of vertices of $G_D$ that lie in it.

We say that two shapes $S$ and $S'$ are \emph{equivalent} if $S'$ is obtained from $S$ by a similarity transformation, i.e., a composition of a translation, a rotation, an isotropic scaling by a positive factor, and an optional reflection. Clearly, our notion of equivalence is indeed an equivalence relation between shapes.

\begin{figure}[ht]
\begin{center}
  \subfloat{
  \includegraphics[scale=0.625]{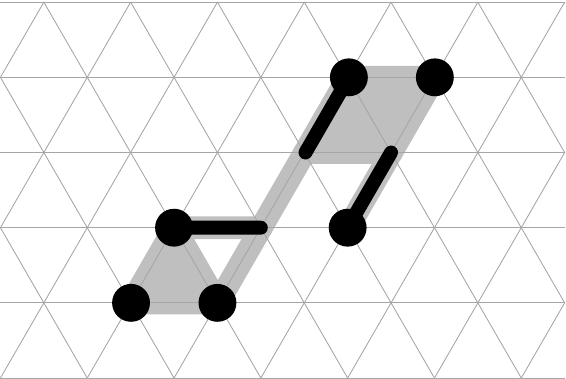}
  \label{f:shape1}
  }
  \hfill
  \subfloat{
  \includegraphics[scale=0.625]{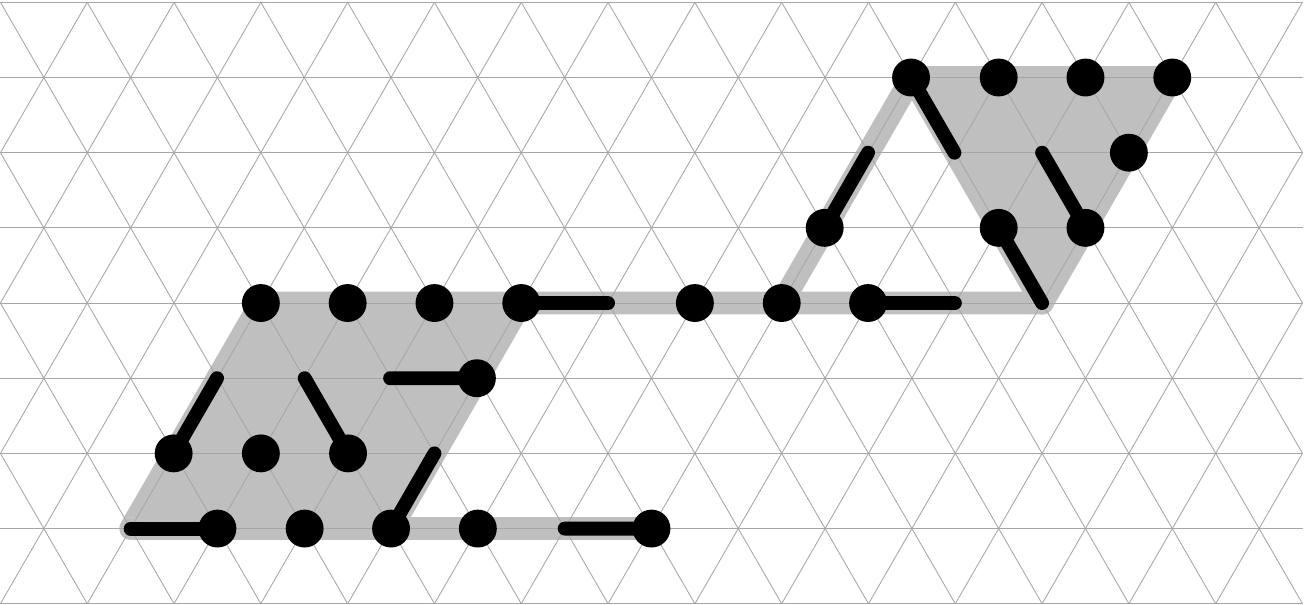}
  \label{f:shape2}
  }
\end{center}
\caption{Two systems of particles forming equivalent shapes. The shape on the left is minimal; the one on the right has scale $3$. Contracted particles are represented as black dots; expanded particles are black segments. Shapes are indicated by gray blobs.}
\label{f:shape}
\end{figure}

A shape is \emph{minimal} if no shape that is equivalent to it has a smaller size. Obviously, any shape $S$ is equivalent to a minimal shape $S'$. The size of $S'$ is said to be the \emph{base size} of $S$. Let $\sigma$ be a similarity transformation such that $S=\sigma(S')$. We say that the (positive) scale factor of $\sigma$ is the \emph{scale} of $S$.

\begin{lemma}\label{l:shapescale}
The scale of a shape is a positive integer.
\end{lemma}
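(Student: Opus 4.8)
The plan is to write $S=\sigma(S')$, where $S'$ is a minimal shape equivalent to $S$ and $\lambda>0$ is the scale factor of the similarity $\sigma$, and to prove $\lambda\in\mathbb Z^{+}$ in three stages: reduce $\sigma$ to a pure scaling, show $\lambda$ is rational, and finally upgrade rationality to integrality using the minimality of $S'$. Throughout, let $\Lambda$ denote the triangular lattice of vertices of the drawing $G_D$, and recall that every shape is a union of unit edges and unit faces of $G_D$, so its topological boundary lies on the grid lines of $G_D$. If $S'$ is a single straight segment, then by minimality it is a single unit edge and $S$ is a run of collinear grid edges, so $\lambda\in\mathbb Z^{+}$ directly; hence I assume from now on that $S'$ has a corner (a boundary vertex at which the boundary turns or branches). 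First I would normalize $\sigma$. Since $\sigma$ is a homeomorphism it maps $\partial S'$ onto $\partial S$; as both boundaries lie on grid lines, $\sigma$ carries a boundary edge of $S'$ (which points along one of the three grid directions) to a segment along a grid direction, so the orthogonal part $R$ of $\sigma$ sends a grid direction to a grid direction. This forces $R$ into the symmetry group $D_6$ of the triangular lattice, and since $R$ fixes $\Lambda$ setwise, $R(S')$ is again a minimal shape equivalent to $S'$ realized by a similarity of the same scale $\lambda$; replacing $S'$ by $R(S')$ we may assume $\sigma(x)=\lambda x+t$. A corner of $\partial S'$ is a grid vertex whose image is a point of $\partial S$ where the boundary turns, hence also a grid vertex, so after a translation we may place the origin at such a corner and assume $\sigma(x)=\lambda x$.

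Next I would establish rationality. Each maximal straight run of boundary edges of $S'$ joins two corners at integer distance $a$, and maps to a straight boundary run of $S$ joining two grid vertices at distance $\lambda a$, which is therefore a positive integer $b$. Thus $\lambda=b/a\in\mathbb Q$.

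The crux is the passage from rational to integer, and here minimality must enter. Write $\lambda=p/q$ in lowest terms and suppose, for contradiction, that $q\ge 2$. From $S=\tfrac pq S'$ one gets the set equality $W:=pS'=qS$, where $pS'$ and $qS$ are the integer blow-ups of $S'$ and of $S$ about the common origin $0$ (a vertex of both), each a genuine shape. The key claim I would isolate is a coprime blow-up lemma: a shape that is simultaneously a $p$-fold and a $q$-fold blow-up about a common origin, with $\gcd(p,q)=1$, is a $pq$-fold blow-up. Granting this, $W=pq\cdot V$ for a shape $V$, whence $S'=\tfrac1p W=qV$ exhibits $S'$ as a nontrivial $q$-fold blow-up of $V$, contradicting its minimality; therefore $q=1$ and $\lambda=p\in\mathbb Z^{+}$.

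The coprime blow-up lemma is the step I expect to be the main obstacle, and I would prove it by a Chinese-Remainder-style propagation. View the characteristic function $\chi$ of $W$ on the unit faces (and, for any one-dimensional part of $W$, on the unit edges): being a $p$-fold blow-up means $\chi$ is constant on each coarse side-$p$ triangle, and likewise for $q$. Two edge-adjacent unit faces lie in different $p$-blocks \emph{and} different $q$-blocks precisely when their shared edge sits on a grid line whose position is divisible by both $p$ and $q$, i.e.\ (since $\gcd(p,q)=1$) by $pq$, which is the same as saying the two faces lie in different $pq$-blocks. Hence within any coarse side-$pq$ triangle, any two edge-adjacent unit faces agree in their $p$-block or their $q$-block, so $\chi$ agrees on them; since the unit faces of a coarse $pq$-triangle form a connected adjacency graph, $\chi$ is constant on each $pq$-block, which is exactly the assertion that $W$ is a $pq$-fold blow-up. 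The care required is in treating the one-dimensional (edge) cells on the same footing as the faces and in verifying the connectivity used to propagate the value of $\chi$ across each coarse block.
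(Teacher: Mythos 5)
Your proposal is correct, but it takes a genuinely different route from the paper's. The paper's proof is a three-line sketch: cover $S$ by its maximal polygons and segments, note that all their side lengths are integers, and assert that the scale of $\sigma$ ``must be'' the greatest common divisor of those lengths. You instead normalize $\sigma$ to a pure scaling $x\mapsto\lambda x$ about a common lattice corner, obtain $\lambda=p/q\in\mathbb Q$ from corner-to-corner distances, and upgrade to integrality via a coprime blow-up lemma (a shape that is simultaneously a $p$-fold and a $q$-fold blow-up about a lattice point, $\gcd(p,q)=1$, is a $pq$-fold blow-up), letting minimality of $S'$ kill the denominator. The comparison is instructive: the paper's gcd assertion silently requires exactly the lattice-alignment bookkeeping you make explicit---divisibility of all lengths by $g$ does not by itself make $S/g$ a lattice shape, since corners must also land on vertices of $G_D$. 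Indeed, for shapes with holes the gcd claim can fail outright: a hexagonal annulus with outer side $4$ and inner side $2$, with the hole's center displaced from the outer corners by a vector outside $2\Lambda$, has all boundary lengths even yet admits no half-scale equivalent, so its scale is $1$ while the gcd is $2$. Your argument never invokes the gcd claim (nor simple connectivity), so it establishes the lemma in full generality; the price is the blow-up lemma, whose fiddly points---handling edge cells alongside face cells (in particular, unit edges of $W$ that lie on coarse lines as sides of faces of $W$, where propagation uses that the entire coarse side of an included coarse triangle is in $W$) and the edge-connectivity of the cells inside a coarse block---are precisely the ones you flagged, and they do check out.
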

\begin{proof}
Let $S$ and $S'$ be equivalent shapes, with $S'$ minimal, and let $\sigma$ be a similarity transformation with $S=\sigma(S')$. Observe that there is a unique covering of $S$ by maximal polygons and line segments with mutually disjoint relative interiors. Each of these segments and each edge of these polygons is the union of finitely many edges of $G_D$, and therefore it has integral length. It is easy to see that the scale factor of $\sigma$ must be the greatest common divisor of all such lengths, which is a positive integer.
\end{proof}

\smallskip
\noindent\textbf{Shape formation.}
We say that a system of particles in $G$ \emph{forms} a shape $S$ if the vertices of $G$ that are occupied by particles correspond exactly to the vertices of the drawing $G_D$ that lie in $S$.

Suppose that a system forms a shape $S_0$ at stage 0, and let all particles execute the same algorithm $A$ every time they are activated. Assume that there exists a shape $S_F$ such that, however the port labels of each particle are arranged, and whatever the choices of the scheduler are, there is a stage where the system forms a shape \emph{equivalent} to $S_F$ (not necessarily $S_F$), and such that no particle ever contracts or expands after that stage. Then, we say that $A$ is an \emph{$(S_0,S_F)$-shape formation algorithm}, and $(S_0,S_F)$ is a \emph{feasible} pair of shapes. (Among the choices of the scheduler, we also include the decision of which particle succeeds in expanding when two or more of them intend to occupy the same vertex at the same stage.)

In the rest of this paper, we will characterize the feasible pairs of shapes $(S_0,S_F)$, provided that $S_0$ is simply connected and its size is not too small. That is, for every such pair of shapes, we will either prove that no shape formation algorithm exists, or we will give an explicit shape formation algorithm. Moreover, all  algorithms have the same structure, which does not depend on the particular choice of the shapes $S_0$ and $S_F$. We could even reduce all of them to a single \emph{universal shape formation algorithm}, which takes the ``final shape'' $S_F$ (or a representation thereof) as a parameter, and has no information on the ``initial shape'' $S_0$, except that it is simply connected. As in~\cite{spaa}, we assume that the size of the parameter $S_F$ is constant with respect to the size of the system, so that $S_F$ can be encoded by each particle as part of its internal memory. More formally, we have infinitely many universal shape formation algorithms $A_m(S_F)$, one for each possible size $m$ of the parameter $S_F$.

Next, we will state our characterization of the feasible pairs of shapes, along with a proof that there is no shape formation algorithm for the unfeasible pairs. In Section~\ref{s:3}, we will give our universal shape formation algorithm for the feasible pairs.

\smallskip
\noindent\textbf{Unformable shapes.}
There are cases in which an $(S_0,S_F)$-shape formation algorithm does not exist. The first, trivial one, is when the size of $S_0$ is not large enough compared to the base size of $S_F$. The second is more subtle, and has to do with the fact that certain symmetries in $S_0$ cannot be broken.

A shape is said to be \emph{unbreakably $k$-symmetric}, for some integer $k>1$, if it has a center of $k$-fold rotational symmetry that does not coincide with any vertex of $G_D$. Observe that the order of the group of rotational symmetries of a shape must be a divisor of 6. However, the shapes with a 6-fold rotational symmetry have center of rotation in a vertex of $G_D$. Hence, there exist unbreakably $k$-symmetric shapes only for $k=2$ and $k=3$.

The property of being unbreakably $k$-symmetric is invariant under equivalence, provided that the scale remains the same.

\begin{lemma}\label{l:shapeequi}
A shape $S$ is unbreakably $k$-symmetric if and only if all shapes that are equivalent to $S$ and have the same scale as $S$ are unbreakably $k$-symmetric.
\end{lemma}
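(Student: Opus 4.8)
The plan is to prove the nontrivial implication; the reverse one is immediate, since $S$ is equivalent to itself with the same scale, so if every shape equivalent to $S$ with the same scale is unbreakably $k$-symmetric, then in particular $S$ is. So assume $S$ is unbreakably $k$-symmetric, with a center of $k$-fold rotational symmetry $c$ that is not a vertex of $G_D$, and let $S'=\tau(S)$ be any shape equivalent to $S$ having the same scale, where $\tau$ is a similarity transformation. First I would observe that $\tau$ is in fact an \emph{isometry}. Writing $S=\sigma(M)$ for a minimal shape $M$ equivalent to both $S$ and $S'$, we get $S'=(\tau\sigma)(M)$, so the scale of $S'$ equals the scale factor of $\tau$ times the scale of $S$; since the two scales agree by hypothesis, $\tau$ has scale factor $1$. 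Once $\tau$ is an isometry, the rotational symmetry transfers for free: if $\rho$ is the rotation by $2\pi/k$ about $c$, then $\tau\rho\tau^{-1}$ is a rotation by $\pm 2\pi/k$ about $\tau(c)$ fixing $S'$ setwise, so $S'$ has $k$-fold rotational symmetry about $\tau(c)$.

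The crux is therefore to show that $\tau(c)$ is not a vertex of $G_D$, and for this it suffices to prove that $\tau$ is a \emph{symmetry of the grid} $G_D$, i.e.\ that $\tau(G_D)=G_D$: such a $\tau$, together with its inverse, maps the vertex set of $G_D$ bijectively onto itself, so $c$ not being a vertex forces $\tau(c)$ not to be a vertex. I would base the proof of this on the following rigidity fact: \emph{any isometry carrying one edge of $G_D$ onto another edge of $G_D$ is a grid symmetry}. Indeed, there are exactly four isometries taking a given unit segment onto another (two choices for the images of its endpoints, times a reflection across its supporting line); on the other hand $G_D$ is edge-transitive and the stabilizer of an edge has order four (the identity, the half-turn about the edge's midpoint, and the reflections across the edge's supporting line and across its perpendicular bisector), so there are also exactly four grid symmetries taking the one edge onto the other. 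The two sets of four thus coincide.

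It then remains to exhibit a single edge of $G_D$ inside $S$ whose $\tau$-image is again an edge of $G_D$. Here I would use the \emph{frame} of $S$: the one-dimensional set $B$ consisting of the boundary of the two-dimensional part of $S$ together with the edges of $S$ that bound no face. The frame is intrinsic---it is exactly the set of points of $S$ lacking a two-dimensional neighborhood in $S$---so $\tau(B)$ is the frame $B'$ of $S'$. Now $B$ is a non-empty finite union of edges of $G_D$, hence it has at least one \emph{special} point (an endpoint, a point of degree at least three, or a corner where two edges meet non-collinearly), and every special point is a vertex of $G_D$; being an intrinsic metric feature, it is mapped by $\tau$ to a special point $v'$ of $B'$, again a vertex of $G_D$. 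Choosing an edge $e\subseteq B$ incident to a special vertex $v$, its image $\tau(e)$ is a unit segment issuing from the grid vertex $v'=\tau(v)$ and contained in the one-dimensional set $B'$; since $B'$ emanates from $v'$ only along edges of $G_D$, the segment $\tau(e)$ must coincide with such an edge. By the rigidity fact, $\tau$ is a grid symmetry, which finishes the argument.

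The step I expect to be the main obstacle is exactly this reduction, ``isometry between grid-shapes $\Rightarrow$ grid symmetry.'' Its delicacy is that, for a shape with a genuine two-dimensional part, an arbitrary unit segment lying inside $S$ need \emph{not} be an edge of $G_D$ (a short segment can sit obliquely in the interior of a filled region), so one cannot simply track an arbitrary edge of $S$; the whole purpose of passing to the one-dimensional frame $B$ is to confine $\tau(e)$ to a one-dimensional grid set and thereby force grid alignment. Some care is also needed to verify that the frame is always non-empty and always possesses a special point---in particular for the degenerate shapes, such as a single edge or a straight run of collinear edges (which have endpoints) and a closed cycle of edges (which must turn, and so has corners).
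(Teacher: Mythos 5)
Your proposal is correct, and its overall route is the same as the paper's: deduce from the equality of scales that the transformation $\tau$ is an isometry, exhibit one edge of $G_D$ in $S$ whose image is forced to be an edge of $G_D$ in $S'$, conclude that $\tau$ respects the vertex set of $G_D$, and transfer the center of symmetry (the paper dispatches the symmetry transfer with a one-line ``Clearly\dots''; your conjugation $\tau\rho\tau^{-1}$ is the same observation made explicit). Where you genuinely diverge is in how the two technical pillars are executed. First, the paper locates its edge via extreme points: a point of $S$ of maximum $x$-coordinate is a grid vertex, extremality is preserved by isometries, and an incident boundary edge is tracked. You instead pass to the intrinsic one-dimensional frame $B$ and its special points; this is somewhat more robust, since the paper's claim that the image extremal point ``must be a vertex of $G_D$, as well'' actually needs an extra argument (an extremal point of $S'$ in a rotated direction lies on a supporting line which could a priori meet $S'$ along non-vertex points; one has to use that the image extremal set is finite), whereas special points of a finite union of grid edges are grid vertices essentially by definition. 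Second---and this is the main added value---the paper's pivotal sentence ``This implies that a point $p$ is a vertex of $G_D$ if and only if $\sigma(p)$ is a vertex of $G_D$'' is asserted without justification; your counting argument (exactly four isometries carry one unit segment onto another, while edge-transitivity together with the order-four edge stabilizer give exactly four grid symmetries doing so, hence the two sets coincide) is a clean, self-contained proof of precisely this step. In short, the paper buys brevity by leaving both pillars implicit; your proof buys rigor, at the modest cost of having to verify that the frame is a non-empty union of grid edges possessing a special point, which you correctly flag and cover (including the straight-run and closed-cycle degenerate cases).
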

\begin{proof}
Observe that any point of $S$ with maximum $x$ coordinate must be a vertex of $G_D$. Let $v$ be one of such vertices. Since $S$ is connected and contains at least one edge or one face of $G_D$, there must exist an edge $uv$ of $G_D$ that lies on the boundary of $S$. Consider a similarity transformation $\sigma$ that maps $S$ into an equivalent shape $S'$ with the same scale (hence $\sigma$ is an isometry). Since $v$ is an extremal point of $S$, it must be mapped by $\sigma$ into an extremal point of $S'$, which must be a vertex $v'$ of $G_D$, as well. Analogously, $\sigma(uv)$ must be a segment located on the boundary of $S'$. The length of $\sigma(uv)$ is the length of an edge of $G_D$, and its endpoint $v'=\sigma(v)$ is a vertex of $G_D$. It follows that $\sigma(uv)$ is an edge of $G_D$. This implies that a point $p$ is a vertex of $G_D$ if and only if $\sigma(p)$ is a vertex of $G_D$.

Clearly, $S$ is rotationally symmetric if and only if $S'$ is. Suppose that $S$ has a $k$-fold rotational symmetry with center $c$, and therefore $S'$ has a $k$-fold rotational symmetry with center $c'=\sigma(c)$. By the above reasoning, $c'$ is a vertex of $G_D$ if and only if $c$ is, which is to say that $S'$ is unbreakably $k$-symmetric if and only if $S$ is.
\end{proof}

Nonetheless, if the scale changes, the property of being unbreakably $k$-symmetric may or may not be preserved. The next lemma, which extends the previous one, gives a characterization of when this happens.

\begin{lemma}\label{l:shapemin}
A shape $S$ is unbreakably $k$-symmetric if and only if any minimal shape that is equivalent to $S$ is also unbreakably $k$-symmetric, and the scale of $S$ is not a multiple of $k$.
\end{lemma}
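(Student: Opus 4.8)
The plan is to fix a minimal shape $S'$ equivalent to $S$ together with a similarity $\sigma$ of scale $\lambda$ with $S=\sigma(S')$; by Lemma~\ref{l:shapescale} the scale $\lambda$ is a positive integer, and by Lemma~\ref{l:shapeequi} it is immaterial which minimal representative I pick, since all of them have scale $1$ and are therefore simultaneously unbreakably $k$-symmetric or not. The whole argument then reduces to locating the center of rotational symmetry of $S$ relative to $G_D$ and tracking how its position depends on $\lambda$.

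First I would pin down the possible positions of a center of symmetry. Since $S$ is bounded, all its nontrivial rotational symmetries share a single center $c$; likewise $S'$ has a unique center $c'$, and because $\sigma$ conjugates rotations to rotations of the same order, $S$ is $k$-fold symmetric if and only if $S'$ is, with $c=\sigma(c')$. Reusing the argument in the proof of Lemma~\ref{l:shapeequi}, any isometry fixing a shape must map $G_D$ to itself (it carries one boundary edge of $G_D$ to an edge of $G_D$, hence the entire lattice to itself); applying this to the rotation realizing the $k$-fold symmetry of $S'$ shows that this rotation is an automorphism of $G_D$, so $c'$ is a rotation center of the triangular grid. The rotation centers of $G_D$ are exactly its vertices (order $6$), its edge midpoints (order $2$), and its face centroids (order $3$). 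As $k\in\{2,3\}$, this leaves three cases: $c'$ is a vertex (so $S'$ is not unbreakable), or $c'$ is an edge midpoint (only for $k=2$), or $c'$ is a face centroid (only for $k=3$); in the latter two cases $S'$ is unbreakably $k$-symmetric.

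Next I would compute the image $c=\sigma(c')$. The map $\sigma$ carries $G_D$ onto a coarse triangular grid of mesh $\lambda$ whose vertices form a subset of the vertices of $G_D$, whose edges are $G_D$-paths of length $\lambda$, and whose faces are triangular blocks of side $\lambda$. Hence $\sigma$ sends a vertex, edge midpoint, or face centroid of $G_D$ to a vertex, edge midpoint, or face centroid of this coarse grid, respectively. The computational core is the elementary observation, which I would verify in lattice coordinates, that a coarse-grid vertex is always a vertex of $G_D$; the midpoint of a coarse edge of length $\lambda$ is a vertex of $G_D$ if and only if $2\mid\lambda$; and the centroid of a coarse triangle of side $\lambda$ is a vertex of $G_D$ if and only if $3\mid\lambda$. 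The arithmetic content is exactly that the relevant displacement, $\tfrac{\lambda}{2}u$ for a primitive lattice direction $u$ in the edge case, or $\tfrac{\lambda}{3}(u+v)$ for two lattice directions $u,v$ in the centroid case, lands in the lattice precisely when $\lambda$ is divisible by $k$.

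Finally, assembling these facts yields the equivalence. If $c'$ is a vertex, then $c$ is a vertex for every $\lambda$, so $S$ is never unbreakable and neither is $S'$. If instead $c'$ is an edge midpoint ($k=2$) or a face centroid ($k=3$) — i.e.\ $S'$ is unbreakably $k$-symmetric — then $c$ fails to be a vertex of $G_D$ exactly when $\lambda$ is not a multiple of $k$, which is precisely the condition for $S$ to be unbreakably $k$-symmetric. The step I expect to be the main obstacle, and the one I would treat most carefully, is the clean matching of the divisor with $k$ in the two nontrivial cases; everything else is bookkeeping resting on Lemmas~\ref{l:shapescale} and~\ref{l:shapeequi}.
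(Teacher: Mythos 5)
Your proposal is correct, and its arithmetic core --- locating the unique center of rotational symmetry and checking whether scaling by $\lambda$ lands it on a vertex of $G_D$, with the divisibility criteria $2\mid\lambda$ and $3\mid\lambda$ --- is exactly the content of the paper's proof (and of its Figure~\ref{f:unbreakable}). Where you genuinely diverge is in how the scaling map is handled. You analyze the \emph{given} similarity $\sigma$ with $S=\sigma(S')$ and assert that it carries $G_D$ onto a coarse triangular grid aligned with $G_D$ whose vertices are vertices of $G_D$. The paper never proves, and never needs, any such statement about an arbitrary similarity: it instead manufactures a canonical one, namely the homothety of ratio $\lambda$ centered at a vertex $u$ of $G_D$ incident to the center of symmetry of $S'$. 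That homothety preserves the lattice trivially; the parity/divisibility analysis is carried out on the resulting shape $S''=\sigma(S')$, and the conclusion is transferred from $S''$ to $S$ by Lemma~\ref{l:shapeequi}, since $S''$ and $S$ are equivalent shapes of the same scale. This keeps Lemma~\ref{l:shapeequi} as a black box. Your route, in exchange, is more self-contained and makes explicit two facts the paper uses only implicitly: that a bounded figure has a unique center of rotation, and that the rotation centers of $G_D$ are exactly its vertices, edge midpoints, and face centroids.

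The one step you owe a proof for is precisely the coarse-grid claim, and it is not, as you suggest, a pure lattice-coordinate verification: an integer-ratio similarity does not respect the lattice in general. For instance, with $\lambda=7$ there are pairs of vertices of $G_D$ at distance exactly $7$ whose connecting segment is parallel to no edge of $G_D$ (the norm form $a^2+ab+b^2$ represents $49$ with $(a,b)=(3,5)$), so a priori $\sigma$ could map $G_D$ onto a \emph{rotated} mesh-$7$ grid, or onto no subgrid at all. What rules this out is that both $S'$ and $S=\sigma(S')$ are shapes, and the argument is the similarity analogue of the one in Lemma~\ref{l:shapeequi}: an extreme vertex of $S'$ maps to an extreme vertex of $S$ (hence a vertex of $G_D$), a boundary edge of $G_D$ at that vertex maps to a straight boundary segment of integer length $\lambda$ starting at that vertex, hence to a path of $\lambda$ collinear edges of $G_D$; this pins down two adjacent coarse vertices on the lattice and one coarse direction along a grid direction, and the invariance of the lattice under $60^\circ$ rotation then propagates lattice membership and alignment to all of $\sigma(G_D)$. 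With that paragraph inserted your proof is complete; alternatively, you can sidestep the issue entirely by adopting the paper's homothety-plus-transfer device.
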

\begin{proof}
By Lemma~\ref{l:shapescale}, the scale of $S$ is a positive integer, say $\lambda$. Let $S'$ be a minimal shape equivalent to $S$. Of course, equivalent shapes have the same group of rotational symmetries, and therefore $S'$ has a $k$-fold rotational symmetry if and only if $S$ does. We will first prove that, if $S'$ is not unbreakably $k$-symmetric, then neither is $S$. So, suppose that $S'$ has a $k$-fold rotational symmetry with center in a vertex $v$ of $G_D$. Let $\sigma$ be the homothetic transformation with center $v$ and ratio $\lambda$. Then, $S''=\sigma(S')$ is a shape with center $v$ and scale $\lambda$, which is therefore not unbreakably $k$-symmetric. Since $S$ is equivalent to $S''$ (as they are both equivalent to $S'$) and has the same scale, it follows by Lemma~\ref{l:shapeequi} that $S$ is not unbreakably $k$-symmetric, either.

Assume now that $S'$ is unbreakably $k$-symmetric. As already observed, we have two possible cases: $k=2$ and $k=3$. If $k=2$ (respectively, $k=3$), the center of symmetry $c$ of $S'$ must be the midpoint of an edge $uv$ of $G_D$ (respectively, the center of a face $uvw$ of $G_D$). Consider the homothetic transformation $\sigma$ with center $u$ and ratio $\lambda$. It is clear that $\sigma$ maps $S'$ into an equivalent shape $S''$ whose scale is $\lambda$ and whose center of symmetry is $\sigma(c)$. As Figure~\ref{f:unbreakable} suggests, $\sigma(c)$ is a vertex of $G_D$ if and only if $\lambda$ is even (respectively, if and only if $\lambda$ is a multiple of 3). It follows that $S''$ is unbreakably $k$-symmetric if and only if $\lambda$ is not a multiple of $k$. Since $S$ and $S''$ are equivalent (because both are equivalent to $S'$), Lemma~\ref{l:shapeequi} implies that $S$ is unbreakably $k$-symmetric if and only if its scale is not a multiple of $k$.
\end{proof}

\begin{figure}[ht]
\begin{center}
\includegraphics[width=1\linewidth]{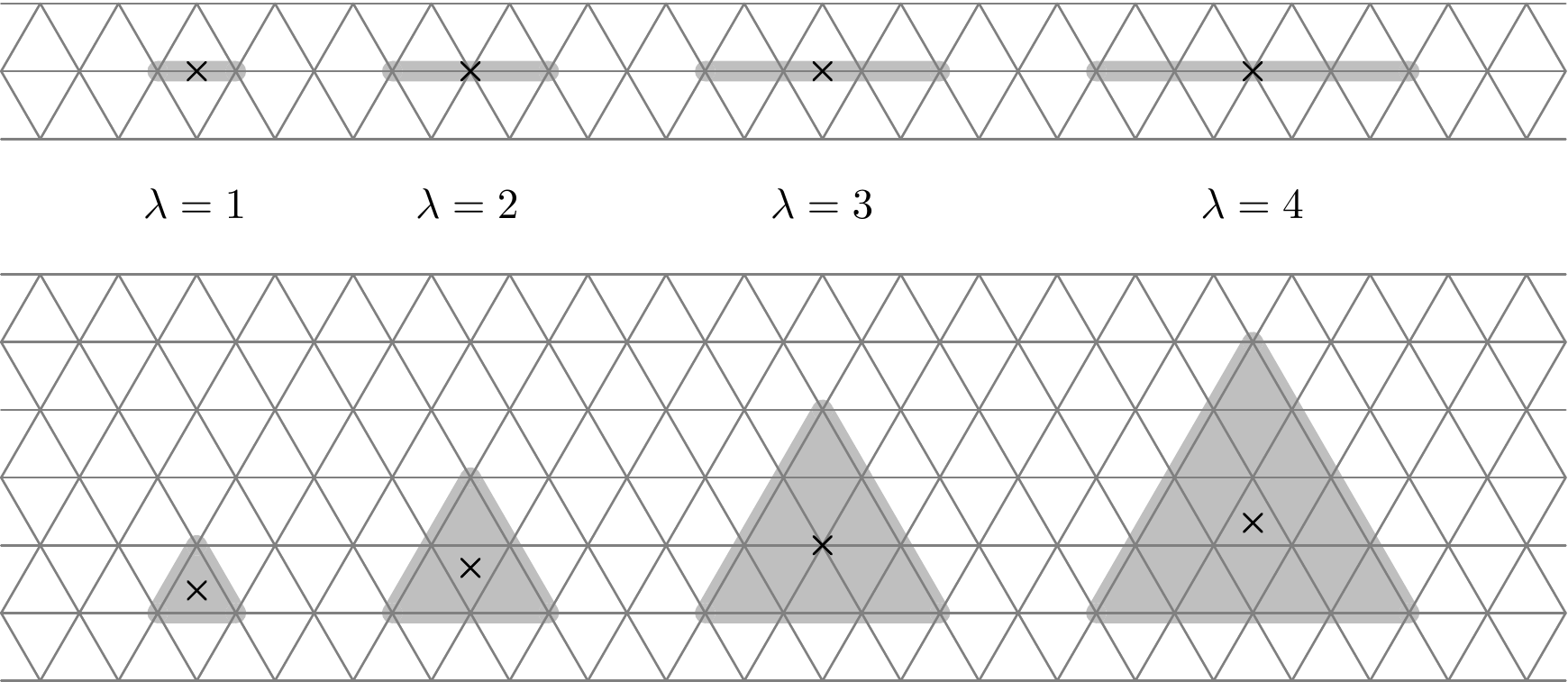}
\end{center}
\caption{If a shape is equivalent to an edge of $G_D$, its center is a vertex of $G_D$ if and only if its scale $\lambda$ is even. If a shape is equivalent to a face of $G_D$, its center is a vertex of $G_D$ if and only if its scale $\lambda$ is a multiple of $3$.}
\label{f:unbreakable}
\end{figure}

The term ``unbreakably'' is justified by the following theorem.

\begin{theorem}\label{t:neg}
If there exists an $(S_0,S_F)$-shape formation algorithm, and $S_0$ is unbreakably $k$-symmetric, then any minimal shape that is equivalent to $S_F$ is also unbreakably $k$-symmetric.
\end{theorem}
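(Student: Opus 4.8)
The plan is to exploit the anonymity of the particles to show that, when $S_0$ is unbreakably $k$-symmetric, an adversary (choosing both the port labels and the schedule) can force every reachable configuration to retain the $k$-fold rotational symmetry of $S_0$. Let $\rho$ be the rotation by $2\pi/k$ about the center of symmetry $c$ of $S_0$. Because $S_0$ is \emph{unbreakably} $k$-symmetric, $c$ is not a vertex of $G_D$, so $\rho$ fixes no vertex of $G$; consequently the vertices occupied in $S_0$ split into $\rho$-orbits of size exactly $k$, and no contracted particle is its own $\rho$-image. I would first fix a \emph{symmetric} port labeling: choose one particle in each orbit, give it an arbitrary right-handed labeling, and propagate it to the rest of the orbit via $\rho$. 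Since $\rho$ is orientation-preserving, all particles end up right-handed (matching the single-handedness claim in the introduction), and the labeled initial configuration is globally $\rho$-invariant.

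The heart of the argument is to define an adversarial schedule under which the following invariant persists at every stage: the configuration — the occupied vertices together with their head/tail roles and particle pairings, the internal states, and the pending messages on oriented edges — is invariant under $\rho$, with $\rho$-corresponding particles carrying identical internal states. This holds at stage $0$ by construction (all states equal $q_0$, no messages). For the inductive step I would activate every particle at each stage (a fair, symmetric schedule). By the inductive hypothesis, two $\rho$-corresponding particles read $\rho$-corresponding neighborhoods, states, and messages, so the \emph{deterministic} transition function $A$ makes them emit $\rho$-corresponding states, messages, and moves. The only residual freedom is the adversary's resolution of expansion conflicts, and the main obstacle is to show this freedom can always be exercised symmetrically. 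I would handle this by partitioning the contested target vertices into $\rho$-orbits (each of size $k$, since no vertex is $\rho$-fixed), picking a winner arbitrarily on one representative per orbit, and declaring the winners on the remaining vertices to be the $\rho$-images of that choice; as each particle expands toward a single vertex, no particle is named winner twice, so this is well defined and $\rho$-equivariant. Hence the post-move configuration is again $\rho$-invariant.

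A subtle point I would address explicitly is that no single particle can ever come to occupy a cell fixed by $\rho$, which is the one situation that would contradict the orbit structure. For $k=3$ this cannot happen, since a rotation of order $3$ fixes no edge (and expanded particles occupy edges). For $k=2$ the only candidate is the edge whose midpoint is $c$: but if a particle at a vertex $u$ tried to expand across this central edge toward $v=\rho(u)$, symmetry would force $v$ to be simultaneously occupied by the $\rho$-image particle, so the target is not empty and (the model having no handover) the expansion is forbidden. Thus the invariant genuinely survives and the central edge/face is never straddled.

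Finally, since the algorithm is an $(S_0,S_F)$-shape formation algorithm, under this particular choice of labels and schedule it must reach a stable configuration forming a shape $S_{\mathrm{final}}$ equivalent to $S_F$. By the invariant, $S_{\mathrm{final}}$ is $\rho$-invariant, hence has a $k$-fold rotational symmetry whose center is the fixed point $c$ of $\rho$; as $c$ is not a vertex of $G_D$, the shape $S_{\mathrm{final}}$ is unbreakably $k$-symmetric by definition. Applying the ``only if'' direction of Lemma~\ref{l:shapemin} to $S_{\mathrm{final}}$ then yields that the minimal shape equivalent to $S_{\mathrm{final}}$ — which is exactly a minimal shape equivalent to $S_F$ — is unbreakably $k$-symmetric, as required.
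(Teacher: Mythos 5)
Your proof is correct and follows the same overall strategy as the paper's: an adversary fixes $\rho$-symmetric port labelings, maintains by induction the $\rho$-invariance of the full configuration (occupied vertices, head/tail roles, internal states, pending messages), concludes that the formed shape is always unbreakably $k$-symmetric, and finishes by applying Lemma~\ref{l:shapemin}. The one genuine difference is the choice of adversarial schedule. The paper activates exactly one symmetry class per stage, which makes conflict resolution unnecessary: the $k$ simultaneously moving particles target the $\rho$-orbit of a single vertex, whose elements are pairwise distinct precisely because the center $c$ is not a vertex of $G_D$. You instead activate all particles at every stage, so expansion conflicts between particles in \emph{different} symmetry classes can occur, and you must argue --- as you do --- that the adversary can break ties $\rho$-equivariantly, using that each particle contests at most one vertex and that no contested vertex is $\rho$-fixed, so picking a winner on one orbit representative and propagating by $\rho$ is well defined. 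Both resolutions are valid; the paper's scheduler is the more economical device, while your synchronous schedule additionally shows that the symmetry obstruction persists even under full parallelism. A further small merit of your write-up is that you explicitly rule out (for $k=2$) a particle expanding across the central edge and thereby becoming its own $\rho$-image, a degenerate case that the paper's proof leaves implicit.
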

\begin{proof}
Assume that there is a $k$-fold rotation $\rho$ that leaves $S_0$ unchanged, and assume that its center is not a vertex of $G_D$. Then, the orbit of any vertex of $G_D$ under $\rho$ has period $k$. The system is naturally partitioned into \emph{symmetry classes} of size $k$: for any particle $p$ occupying a vertex $v$ of $G_D$ at stage 0, the symmetry class of $p$ is defined as the set of $k$ distinct particles that occupy the vertices $v$, $\rho(v)$, $\rho(\rho(v))$, \dots, $\rho^{k-1}(v)$.

Assume that the port labels of the $k$ particles in a same symmetry class are arranged symmetrically with respect to the center of $\rho$. Suppose that the system executes an $(S_0,S_F)$-shape formation algorithm and, at each stage, the scheduler picks a single symmetry class and activates all its members. It is easy to prove by induction that, at every stage, all active particles have equal views, receive equal messages from equally labeled ports, send equal messages to symmetric ports, and perform symmetric contraction and expansion operations. Note that, since only one symmetry class is active at a time, and the center of $\rho$ is not a vertex of $G_D$, no two particles ever try to expand toward the same vertex, and so no conflicts have to be resolved. Therefore, as the configuration evolves, it preserves its center of symmetry, and the system always forms an unbreakably $k$-symmetric shape. Since eventually the system must form a shape $S'_F$ equivalent to $S_F$, it follows that $S'_F$ is unbreakably $k$-symmetric. By Lemma~\ref{l:shapemin}, any minimal shape that is equivalent to $S'_F$ (or, which is the same, to $S_F$) is unbreakably $k$-symmetric, as well.
\end{proof}

In Section~\ref{s:3}, we are going to prove that the condition of Theorem~\ref{t:neg} characterizes the feasible pairs of shapes, provided that $S_0$ is simply connected, and the size of $S_0$ is large enough with respect to the base size of $S_F$.

\smallskip
\noindent\textbf{Measuring movements and rounds.}
We will also be concerned to measure the total number of \emph{moves} performed by a system of size $n$ executing a universal shape formation algorithm. This number is defined as the maximum, taken over all the feasible pairs $(S_0,S_F)$ where the size of $S_0$ is exactly $n$, all possible port labels, and all possible schedules, of the total number of contraction and expansion operations that all the particles collectively perform through the entire execution of the algorithm.

Next we will show a lower bound on the total number of moves of a universal shape formation algorithm.

\begin{theorem}\label{t:lower}
A system of $n$ particles executing any universal shape formation algorithm performs $\Omega(n^2)$ moves in total.
\end{theorem}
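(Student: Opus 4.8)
A system of $n$ particles executing any universal shape formation algorithm performs $\Omega(n^2)$ moves in total.

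The claim is a worst-case lower bound over all feasible pairs $(S_0, S_F)$ with $|S_0| = n$, all port labelings, and all schedules. So I get to *choose* a bad instance.

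The intuition: moves count contractions + expansions. One particle moving one grid-step = at least one expansion + one contraction = 2 moves (roughly), covering distance 1. So total moves is related to total distance traveled. If I can find an instance where the particles must collectively travel $\Omega(n^2)$ total distance, I'm done. Natural candidate: start as a long line/segment of length $n$, end as a line of length $n$ too, but displaced — or start as line, end as something compact. Actually the scale depends on $n$...

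Let me think about the cleanest bad instance. Take $S_0$ = a straight horizontal line of $n$ particles. Take $S_F$ = a single edge (segment of base size 2). Then the final shape is a segment of scale $\lambda$ with $2\lambda - 1$... no wait, size = number of vertices = $\lambda + 1$ for scale $\lambda$. Hmm, need $\lambda + 1 = n$? A segment of base size 2 (two endpoints) at scale $\lambda$ has $\lambda+1$ vertices. So to use all $n$ particles, $\lambda = n-1$, final shape is just a line of length $n-1$! That's the same shape. Bad example — moves could be 0.

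So I want $S_0$ and $S_F$ to be "far apart" in a distance sense even after you account for scaling and the displacement freedom (rotation/reflection/translation — but final position can be anywhere equivalent to $S_F$, the adversary... wait, no — the *algorithm* chooses where to form it, subject to scheduler; but for a *lower bound* I want to force distance regardless).

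The subtlety: the final shape need only be *equivalent* to $S_F$, so its placement is somewhat free. But it must be a connected configuration. For a lower bound I need: no matter where the particles choose to assemble the final shape, the total travel is large.

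Key idea — use a sum-of-distances / moment argument. Consider $S_0$ = horizontal line segment of $n$ particles. Consider $S_F$ = something whose scaled copy is "fat" / 2-dimensional, e.g. a triangle. A triangle of base size 3 at scale $\lambda$ has roughly $\lambda^2/2$... actually $\binom{\lambda+2}{2}$ vertices $\approx \lambda^2/2$. To use $n$ particles, $\lambda \approx \sqrt{2n}$, so the triangle has diameter $\approx \sqrt{n}$. The line has diameter $n$. These don't obviously force $n^2$ travel — converting a length-$n$ line to a $\sqrt n \times \sqrt n$ triangle: the farthest particle moves $\approx n$, but only a few move that far. Total might be... the $k$-th particle from the end roughly moves from position $k$ (along line) to somewhere within distance $\sqrt n$ of a compact blob near the line's center. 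A particle at distance $k$ from center must travel $\gtrsim k - \sqrt n$. Summing: $\sum_{k} (k - \sqrt n) \approx n^2/2$. That gives $\Omega(n^2)$!

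**Plan:**

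Fix the bad instance: $S_0$ a straight line of $n$ particles, $S_F$ a triangle (a genuinely 2-dimensional shape of base size 3). Verify $(S_0,S_F)$ is feasible (the line has no unbreakable $k$-symmetry issue that forbids forming a triangle — a line is itself 2-symmetric about its midpoint if midpoint is a vertex... careful: line of $n$ vertices, midpoint is a vertex iff $n$ odd. For $n$ even the midpoint is an edge-midpoint, so it IS unbreakably 2-symmetric. A triangle is not 2-symmetric. So I must take $n$ ODD to make the line's 2-fold center a vertex, hence breakable, hence $(S_0,S_F)$ feasible. Good — restrict to odd $n$; asymptotics unaffected.)

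Then the core argument is a *potential / moment* bound. Let me set up coordinates. Final configuration occupies the vertices of some scaled triangle, which is contained in a disk of diameter $O(\sqrt n)$ (since it has $n$ vertices and is 2-dimensional, its diameter is $\Theta(\sqrt n)$). Initially particle positions have $x$-coordinates spanning an interval of length $\approx n$ (the line). Consider the quantity: total number of moves $\geq$ total Manhattan (grid) distance traveled by all particles, because each move changes a particle's head position by at most one grid unit (an expansion moves the head by 1, a contraction by at most... the head moves to where it was, net displacement per move $\le 1$). So (total moves) $\ge \sum_p d(\text{start}_p, \text{end}_p)$ where $d$ is grid distance. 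Actually I should bound: the trajectory of each particle's head is a walk where each step has length $\le 1$ in the metric, and number of head-moves $\le$ number of moves of that particle. So total moves $\ge \sum_p (\text{net displacement of } p)$.

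Now lower-bound $\sum_p d(\text{start}_p,\text{end}_p)$. The $n$ endpoints cram into a region of diameter $c\sqrt n$. So at most, say, $(c\sqrt n + 1)^2$ points... fine, $n$ points in a blob of diameter $O(\sqrt n)$. The starts are spread along a line over length $n$. I claim for ANY matching (any assignment of which end-position each particle goes to — in fact the destinations are *fixed as a set* and we just have the true trajectories), the sum of distances is $\Omega(n^2)$.

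Cleanest: Let $B$ = bounding region of final config, diameter $D = O(\sqrt n)$. Every particle ends inside $B$. A particle starting at line-position $x_p$ (ranging over $\approx n$ distinct coordinates) ends at a point within $B$; since $B$ has $x$-extent $O(\sqrt n)$, the horizontal travel of particle $p$ is $\ge |x_p - x_0| - O(\sqrt n)$ where $x_0$ = center of $B$. Summing over the $\Omega(n)$ particles whose starting coordinate differs from $x_0$ by more than, say, $n/4$ (there are $\ge n/4$ of them, since the line has length $n$), each contributes $\ge n/4 - O(\sqrt n) \ge n/8$ for large $n$. Hence total $\ge (n/4)(n/8) = \Omega(n^2)$.

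**The main obstacle** I expect: cleanly justifying "total moves $\ge$ total net head-displacement." A single move (expansion OR contraction) displaces the head by at most one edge, and a particle that performs $k$ moves has its head traverse a path of $\le k$ edges, so net displacement $\le k$; thus $k \ge$ net displacement. I must make sure I measure displacement of a well-defined point of each particle (its head, say), handle the contracted-vs-expanded bookkeeping (the head is always defined), and confirm each of the eight elementary operations moves the head by $\le 1$. The geometric counting (diameter of an $n$-vertex 2-dimensional shape is $\Theta(\sqrt n)$, and the line spreads $\Omega(n)$ particles at horizontal distance $\Omega(n)$ from any fixed center) is routine once phrased as a first-moment estimate, and the feasibility check reduces to the odd-$n$ parity observation above via Theorem~\ref{t:neg}.
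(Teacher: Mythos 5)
Your proof is correct and is essentially the paper's own argument with the roles of the two shapes reversed: the paper starts from a compact hexagon ($n=3d(d+1)+1$ particles, diameter $\Theta(\sqrt n)$) and forces formation of a segment of length $\Omega(n)$, while you start from a segment and force formation of a compact triangle of diameter $O(\sqrt n)$; both then run the same first-moment count (moves $\geq$ expansions $\geq$ horizontal head displacement, so $\Omega(n)$ particles must each perform $\Omega(n)$ expansions), and both settle feasibility via the unbreakable-symmetry check together with the constructive result (Theorem~\ref{tfinal}, not Theorem~\ref{t:neg}, which only gives the negative direction). Your restriction to odd $n$ is no more restrictive than the paper's own restriction to hexagonal numbers, so it is at the same level of rigor as the published proof.
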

\begin{proof}
Let $d>0$ be an integer, and suppose that a system of $n=3d(d+1)+1$ particles forms a regular hexagon $H_d$ of side length $d$ at stage 0. Let the final shape $S_F$ be a single edge of $G_D$. We will show that any $(H_d,S_F)$-shape formation algorithm $A$ requires $\Omega(n^2)$ moves in total.

Since $S_F$ is an edge of $G_D$, a system executing $A$ from $H_d$ will eventually form a line segment $S'_F$ of length at least $n-1$ (recall that the particles do not have to be contracted to form a shape), say at stage $s$. Without loss of generality, we may assume that the center of $H_d$ lies at the origin of the Cartesian plane, $S'_F$ is parallel to the $x$ axis, and at stage $s$ the majority of particle's heads have non-negative $x$ coordinate.

Observe that, at stage 0, each particle's head has $x$ coordinate at most $d$. So, if the $x$ coordinate of a particle's head at stage $s$ is $k/2$, for some non-negative integer $k$, then the particle must have performed at least $\lceil k/2\rceil-d$ expansion operations between stage 0 and stage $s$. Also, if the particle does not occupy an endpoint of $S'_F$, there must be another particle whose head has $x$ coordinate at least $k/2+1$, which has performed at least $\lceil k/2+1\rceil-d$ expansion operations, etc.

It follows that a lower bound on the total number of moves that the system performs before forming $S'_F$ is
$$\sum_{i=0}^{\lceil n/2\rceil} (i-d)=\frac{\left\lceil \frac n2\right\rceil\left(\left\lceil \frac n2\right\rceil+1\right)}{2}-d\left(\left\lceil \frac n2\right\rceil+1\right)\geq \frac n2\left(\frac n4-d\right)=\Omega(n^2),$$
because $d=\Theta(\sqrt n)$.

To complete the proof, we should also show that $(H_d,S_F)$ is a feasible pair: indeed, the above lower bound would be irrelevant if there were no actual algorithm to form $S_F$ from $H_d$. However, we omit the tedious details of such an algorithm because we are going to prove a much stronger statement in Section~\ref{s:3}, where we characterize the feasible pairs in terms of their unbreakable $k$-symmetry. Note that the center of $H_d$ lies in a vertex of $G_D$, and hence $H_d$ is not unbreakably $k$-symmetric. Therefore, according to Theorem~\ref{tfinal}, $(H_d,S_F)$ is a feasible pair.
\end{proof}

In Section~\ref{s:3}, we will prove that our universal shape formation algorithm requires $O(n^2)$ moves in total, and is therefore asymptotically optimal with respect to this parameter.

Similarly, we want to measure how many \emph{rounds} it takes the system to form the final shape (a round is a span of time in which each particle is activated at least once). We will show that our universal shape formation algorithm takes $O(n^2)$ rounds.

\section{Universal Shape Formation Algorithm}\label{s:3}
\smallskip
\noindent\textbf{Algorithm structure.}
The universal shape formation algorithm takes a ``final shape'' $S_F$ as a parameter: this is encoded in the initial states of all particles. Without loss of generality, we will assume $S_F$ to be minimal. The algorithm consists of seven phases:
\begin{enumerate}
\item A \emph{lattice consumption phase}, in which the initial shape $S_0$ is ``eroded'' until 1, 2, or 3 pairwise adjacent particles are identified as ``candidate leaders''. No particle moves in this phase: only messages are exchanged. This phase ends in $O(n)$ rounds.
\item A \emph{spanning forest construction phase}, in which a spanning forest of $S_0$ is constructed, where each candidate leader is the root of a tree. No particle moves, and the phase ends in $O(n)$ rounds.
\item A \emph{handedness agreement phase}, in which all particles assume the same handedness as the candidate leaders (some candidate leaders may be eliminated in the process). In this phase, at most $O(n)$ moves are made. However, at the end, the system forms $S_0$ again. This phase ends in $O(n)$ rounds.
\item A \emph{leader election phase}, in which the candidate leaders attempt to break symmetries and elect a unique leader. If they fail to do so, and $k>1$ leaders are left at the end of this phase, it means that $S_0$ is unbreakably $k$-symmetric, and therefore the ``final shape'' $S_F$ must also be unbreakably $k$-symmetric (cf.~Theorem~\ref{t:neg}). No particle moves, and the phase ends in $O(n^2)$ rounds.
\item A \emph{straightening phase}, in which each leader coordinates a group of particles in the formation of a straight line. The $k$ resulting lines have the same length. At most $O(n^2)$ moves are made, and the phase ends in $O(n^2)$ rounds.
\item A \emph{role assignment phase}, in which the particles determine the scale of the shape $S'_F$ (equivalent to $S_F$) that they are actually going to form. Each particle is assigned an identifier that will determine its behavior during the formation process. No particle moves, and the phase ends in $O(n^2)$ rounds.
\item A \emph{shape composition phase}, in which each straight line of particles, guided by a leader, is reconfigured to form an equal portion of $S'_F$. At most $O(n^2)$ moves are made, and the phase ends in $O(n^2)$ rounds.
\end{enumerate}
No a-priori knowledge of $S_0$ is needed to execute this algorithm ($S_0$ just has to be simply connected), while $S_F$ must of course be known to the particles and have constant size, so that its description can reside in their memory. Note that the knowledge of $S_F$ is needed only in the last two phases of the algorithm.

\smallskip
\noindent\textbf{Synchronization.}
As long as there is a unique (candidate) leader $p$ in the system, there are no synchronization problems: $p$ coordinates all other particles, and autonomously decides when each phase ends and the next phase starts.

However, if there are $k>1$ (candidate) leaders, there are possible issues arising from the intrinsic asynchronicity of our particle model. Typically, a (candidate) leader will be in charge of coordinating only a portion of the system, and we want to avoid the undesirable situation in which different leaders are executing different phases of the algorithm.

To implement a basic synchronization protocol, we will ensure three things:
\begin{itemize}
\item All the (candidate) leaders must always be pairwise adjacent, except perhaps in the last two phases of the algorithm (i.e., the role assignment phase and the shape composition phase) and for a few stages during the handedness agreement phase and the straightening phase.
\item Every time a (candidate) leader is activated, it sends all other (candidate) leaders a message containing an identifier of the phase that it is currently executing (the message may also contain other data, depending on the phase of the algorithm).
\item Whenever a (candidate) leader transitions from a phase into the next, it waits for all other (candidate) leaders to be in the same phase (except when it transitions to the last phase).
\end{itemize}

This basic protocol is executed ``in parallel'' with the main algorithm, and it always works in the same way in every phase. In the following, we will no longer mention it explicitly, but we will focus on the distinctive aspects of each phase.

\subsection{Lattice Consumption Phase}\label{s:3.1}
\smallskip
\noindent\textbf{Algorithm.}
The goal of this phase is to identify 1, 2, or 3 \emph{candidate leaders}. This is done without making any movements, but only exchanging messages. Each particle's internal state has a flag (i.e., a bit) called \emph{Eligible}. All particles start the execution in the same state, with the Eligible flag set. As the execution proceeds, Eligible particles will gradually eliminate themselves by clearing their Eligible flag. This is achieved through a process similar to erosion, which starts from the boundary of the initial shape and proceeds toward its interior.

Suppose that all the particles in the system are contracted (which is true at stage 0). Then, we define four types of \emph{corner particles}, which will start the erosion process:
\begin{itemize}
\item a \emph{0-corner particle} is an Eligible particle with no Eligible neighbors;
\item a \emph{1-corner particle} is an Eligible particle with exactly one Eligible neighbor;
\item a \emph{2-corner particle} is an Eligible particle with exactly two Eligible neighbors $p_1$ and $p_2$, such that $p_1$ is adjacent to $p_2$;
\item a \emph{3-corner particle} is an Eligible particle with exactly three Eligible neighbors $p_1$, $p_2$, $p_3$, with $p_2$ adjacent to both $p_1$ and $p_3$. $p_2$ is called the \emph{middle neighbor}.
\end{itemize}
We say that a particle $p$ is \emph{locked} if it is a 3-corner particle and its middle neighbor $p'$ is also a 3-corner particle. If $p$ is locked, it follows that $p'$ is locked as well, with $p$ its middle neighbor. In this case, we say that $p$ and $p'$ are \emph{companions}.

\begin{figure}[ht]
\begin{center}
\includegraphics{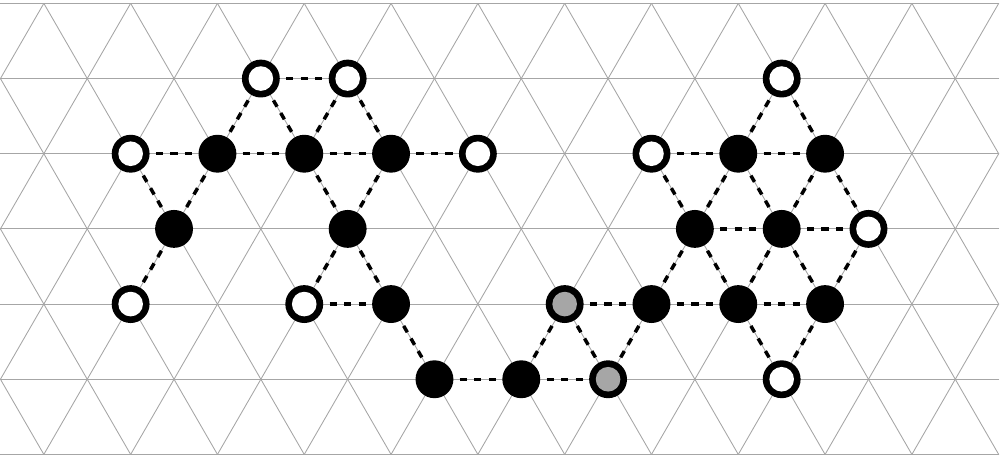}
\end{center}
\caption{The particles in white or gray are corner particles; the two in gray are locked companion particles. Dashed lines indicate adjacencies between particles.}
\label{f:consumption}
\end{figure}

Each particle has also another flag, called \emph{Candidate} (initially not set), which will be set if the particle becomes a candidate leader. Whenever a particle is activated, it sends messages to all its neighbors, communicating the state of its Eligible and Candidate flags. When a particle receives such a message, it memorizes the information it contains (perhaps overwriting outdated information). Therefore, each particle keeps an updated copy of the Eligible and Candidate flags of each of its neighbors. Once a particle knows the states of all its neighbors (i.e., when it has received messages from all of them), it also knows if it is a $k$-corner particle. If it is, it broadcasts the number $k$ to all its neighbors every time it is activated. In turn, the neighbors memorize this number and keep it updated.

There is a third flag, called \emph{Stable} (initially not set), which is cleared whenever a particle receives a message from a neighbor communicating that its internal state has changed. Otherwise, the flag is set, meaning that the states of all neighbors have been stable for at least one stage.

The following rules are also applied by active particles, alongside with the previous ones:
\begin{itemize}
\item If a particle $p$ does not know whether some if its neighbors were corner particles the last time they were activated (because it has not received enough information from them, yet), it waits.
\item Otherwise, $p$ knows which of its neighbors were corner particles the last time they were activated, and in particular which of them are Eligible. This also implies that $p$ knows if it is a corner particle, and if it is locked (for a proof, see Theorem~\ref{tp1}). If $p$ is not a corner particle or if it is locked, it waits.
\item Otherwise, $p$ is a non-locked corner particle. If its Eligible flag is not set, or if its Candidate flag is set, or if its Stable flag is not set, it waits.
\item Otherwise, $p$ changes its flags as follows.
\begin{itemize}
\item If $p$ is a 0-corner particle, it sets its own Candidate flag.
\item Let $p$ be a 1-corner particle. If its unique Eligible neighbor was a 1-corner particle the last time it was activated, $p$ sets its own Candidate flag; otherwise, $p$ clears its own Eligible flag.
\item Let $p$ be a 2-corner particle. If both its Eligible neighbors were 2-corner particles the last time they were activated, $p$ sets its own Candidate flag; otherwise, $p$ clears its own Eligible flag.
\item If $p$ is a 3-corner particle, it clears its own Eligible flag.
\end{itemize}
\end{itemize}

\smallskip
\noindent\textbf{Correctness.}
\begin{lemma}\label{lp1:corner}
If a system of contracted Eligible particles forms a simply connected shape, then there is a corner particle that is not locked.
\end{lemma}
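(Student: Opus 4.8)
The plan is to prove this by a discrete Gauss--Bonnet (angular-defect) argument. First I would dispose of the trivial local cases: any Eligible particle with at most one Eligible neighbor is a $0$- or $1$-corner, and locking is only defined for $3$-corners, so such a particle is automatically a non-locked corner and we are done. Hence I may assume every particle has at least two Eligible neighbors. Let $V$ be the set of occupied vertices, and let $R=\hat S(V)$ be the region obtained by filling in every edge joining two occupied vertices and every triangular face whose three vertices are occupied. Since the particles form a simply connected shape, $R$ is hole-free, so its Euler characteristic equals $1$ and the total exterior turning of its boundary, traversed counterclockwise with $R$ on the left, equals $360^\circ$.

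Next I would set up the local-to-global bookkeeping. For an occupied vertex $v$, let $d(v)$ be its number of Eligible neighbors and let $T(v)$ be the turning it contributes to the boundary walk. A direct local computation shows that wherever the Eligible neighbors of $v$ form a single contiguous arc (the case on the boundary of a $2$-dimensional piece), $T(v)=240^\circ-60^\circ\,d(v)$; thus $T(v)>0$ precisely when $d(v)\in\{2,3\}$, that is, exactly when $v$ is a $2$-corner or a $3$-corner, while $d(v)=4$ yields a straight passage and $d(v)=5$ a reflex corner of $-60^\circ$. Because the total turning is $+360^\circ$, at least one vertex of positive curvature exists, so at least one $2$- or $3$-corner exists. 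A $2$-corner is never locked, so the lemma can only fail if \emph{every} corner is a locked $3$-corner, and ruling this out is the heart of the matter.

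The key structural fact is that a locked $3$-corner $v$ has exactly three Eligible neighbors, and its middle neighbor $w$ is also a $3$-corner; unwinding the definitions, $v$, $w$ and their two remaining neighbors $a,b$ form a rhombus of two faces sharing the edge $vw$, where $v$ has no Eligible neighbor outside $\{a,w,b\}$ and $w$ none outside $\{a,v,b\}$. Consequently the two rhombus faces share an edge only with each other, so the rhombus is attached to the rest of the configuration only through the two tip vertices $a$ and $b$. To exploit this cleanly I would decompose the graph induced by $V$ into its biconnected components (blocks); using hole-freeness, every block is either a single edge or a triangulated disk (a chordless longer cycle would enclose a hole). I then pick a leaf block $B$ with its unique cut vertex $x$: every vertex of $B$ other than $x$ has all its neighbors inside $B$, so its corner and locking status is determined entirely within $B$.

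Finally, applying the turning count to $B$ alone gives $+360^\circ$ on its boundary; since $x$ by itself contributes strictly less than $360^\circ$, some vertex $u\neq x$ of $B$ has positive curvature and is therefore a genuine $2$- or $3$-corner of the whole configuration. If $u$ is a $2$-corner we are done. If $u$ is a $3$-corner with middle neighbor $m$, then either $m=x$, in which case the cut vertex $m$ has further neighbors outside $B$ and so is not a $3$-corner, whence $u$ is unlocked; or $m\neq x$, in which case a locked $u$ would make $u$ and $m$ companions of degree exactly $3$ with all neighbors in $B$, and the rhombus structure above forces $B$ to be a single rhombus, whose non-cut tip is then an unlocked $2$-corner. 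In every case an unlocked corner appears. I expect the main obstacle to be exactly this last step: showing that locked $3$-corners cannot absorb all the positive curvature requires both the rhombus/companion analysis and a careful treatment of pinch points and thin bridges, which the block decomposition is designed to handle; by contrast, the Gauss--Bonnet bookkeeping is routine once the finitely many local configurations are enumerated.
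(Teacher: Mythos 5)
Your proposal is correct in substance, and it follows a genuinely different route from the paper. The paper's proof first deletes one particle from each pair of locked companions, asserting (without proof) that the reduced system still forms a simply connected shape, contains no locked particles, and that every corner of the reduced system is a non-locked corner of the original; it then decomposes the shape into maximal polygons joined by one-dimensional chains, collapses the polygons to obtain a tree, and extracts the desired corner at a leaf (a chain endpoint, or a convex vertex of a leaf polygon). You instead keep the locked pairs in place and localize with the block decomposition of the occupied-vertex graph: hole-freeness makes every leaf block a triangulated disk, the total turning $360^\circ$ of its boundary (each boundary vertex contributing $240^\circ-60^\circ d_B(v)\le 120^\circ$) yields at least three positive-curvature boundary vertices, hence at least two avoiding the cut vertex, and each of these is a genuine $2$- or $3$-corner of the whole system; your rhombus analysis then correctly kills the locked case, since in a triangulated disk the complete face-fans around companions $u$ and $m$ force the boundary cycle to be exactly the rhombus $a$--$u$--$b$--$m$, whose free tip is an unlocked $2$-corner. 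What you gain is a self-contained treatment of locked pairs, exactly the point the paper handles by unproved assertions; what you pay is the curvature bookkeeping. On that score, note that the ``global'' Gauss--Bonnet claim in your second paragraph is not justified as written: your formula for $T(v)$ holds only at contiguous-arc vertices, and you never compute the contribution of pinch (multi-arc) vertices. It happens to be true that a vertex with $d$ occupied neighbors in $r\ge 2$ arcs contributes $360^\circ-60^\circ d-120^\circ r\le 0$, so the global claim is salvageable, but as proposed it is a gap --- fortunately one your block-level argument never relies on.

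One caveat affects your proof and the paper's equally. The inference ``the particles form a simply connected shape, hence $R$ is hole-free'' is not valid under the paper's literal definitions: place six particles on the six neighbors of an unoccupied vertex $w$; they form the simply connected shape consisting of five of the six ring edges, yet your $R$ is the full hexagonal ring enclosing $w$. In that configuration every particle has exactly two mutually non-adjacent Eligible neighbors, so there are no corner particles at all, and the lemma as literally stated fails; the paper's own proof breaks on the same configuration, since the endpoint of a polygonal chain need not be a $1$-corner when the grid joins it to particles that the shape does not contain. This is an imprecision in the lemma's statement that both proofs inherit, and your reading --- taking the hypothesis to be hole-freeness of the filled region $R$ --- is precisely the hypothesis under which the statement is true and your argument goes through.
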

\begin{proof}
Let $S$ be a simply connected shape formed by a system $P$ of contracted Eligible particles. For each pair of companion locked particles of $P$, let us remove one. At the end of this process, we obtain a reduced system $P'$ with no locked particles that again forms a simply connected shape $S'$. Note that all corner particles of $P'$ are non-locked corner particles of $P$; hence, it suffices to prove the existence of a corner particle in $P'$.

By definition of shape, $S'$ can be decomposed into maximal 2-dimensional polygons interconnected by 1-dimensional polygonal chains, perhaps with ramifications (if two polygons are connected by one vertex, we treat this vertex as a polygonal chain with no edges). Since $S'$ is simply connected, the abstract graph obtained by collapsing each maximal polygon of $S'$ into a single vertex forms a tree, which has at least one leaf. The leaf may represent the endpoint of a polygonal chain of $S'$, which is the location of a 1-corner particle of $P'$. Otherwise, the leaf represents a polygon $S''\subseteq S'$, perhaps connected to the rest of $S'$ by a polygonal chain with an endpoint in a vertex $v$ of $S''$. Since $S''$ is a polygon, it has at least three convex vertices, at most one of which is $v$. Each other convex vertices of $S''$ is therefore the location of a 2-corner particle or a 3-corner particle of $P'$.
\end{proof}

\begin{lemma}\label{lp1:connected}
If a system of contracted Eligible particles forms a simply connected shape, and any set of non-locked corner particles is removed at once, the new system forms again a simply connected shape.\footnote{For convenience, with a little abuse of terminology, we treat single vertices of $G_D$ and the empty set as shapes, even if technically they are not, according to the definitions of Section~\ref{s:2}.}
\end{lemma}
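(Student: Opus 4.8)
The plan is to reduce the simultaneous removal of the set $R$ of non-locked corner particles to a sequence of single removals performed in an \emph{arbitrary} order, and to show that each single removal keeps the current configuration a simply connected shape. Write $R=\{p_1,\dots,p_r\}$ in any order, and set $P_i=P\setminus\{p_1,\dots,p_{i-1}\}$, so that $P_1=P$ and $P_{r+1}=P\setminus R$. I would prove by induction that every $P_i$ forms a simply connected shape (using the convention of the footnote that single vertices and the empty set count as simply connected shapes), and that the removal of $p_i$ from $P_i$ is ``safe''. The inductive step has two ingredients: a structural claim guaranteeing that $p_i$ is still removable in $P_i$, and a local argument showing that removing it preserves both connectivity and simple connectivity.

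The structural claim, which is exactly where the \emph{non-locked} hypothesis enters, is the following: if $p_i$ is a $3$-corner particle in $P$, then its middle neighbor is never one of the $p_j$. Indeed, writing $p_i$'s three Eligible neighbors as $p_1',p_2',p_3'$ with $p_2'$ the middle one, the vertex $p_2'$ is adjacent to $p_i$, $p_1'$, and $p_3'$, hence has degree at least $3$; if its degree is exactly $3$, then $p_i$ is its middle neighbor and the two are companions, so $p_i$ would be \emph{locked}, contrary to $p_i\in R$; and if its degree exceeds $3$, then $p_2'$ is not a corner particle at all, so again $p_2'\notin R$. Consequently the middle neighbor survives in every configuration up to and including $P_i$. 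Tracking how $p_i$'s surviving Eligible neighbors shrink as earlier particles are deleted, one checks that in $P_i$ they always form a set that is mutually connected without passing through $p_i$: for a $3$-corner this happens through the (surviving) middle neighbor, to which any surviving non-middle neighbor stays adjacent; for a $2$-corner the two neighbors are directly adjacent; for a $1$-corner there is at most one neighbor. In particular $p_i$ never becomes a cut vertex of $P_i$.

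With this in hand, the single-removal safety argument is local. Since $p_i$ has at most three neighbors forming a connected arc, it sits at a convex boundary feature of the shape formed by $P_i$ (as in the polygon-and-chain decomposition used in the proof of Lemma~\ref{lp1:corner}), and is incident to at most two faces, which share the edge to the middle neighbor and thus form a connected fan. Removing $p_i$ therefore clips off this boundary fan together with the at most two boundary edges incident to $p_i$: connectivity is preserved because the surviving neighbors are already mutually connected, and no hole can be created because the removed faces lie on the outer boundary rather than enclosing a region. Thus $P_{i+1}$ forms a simply connected shape, completing the induction and yielding the lemma. The main obstacle is the structural claim of the second paragraph: making precise that a removed $3$-corner can never lose its middle neighbor is what rules out the disconnection exhibited by two companion particles (whose simultaneous removal would split the shape), and it is the one place where the definition of \emph{locked} is essential; the remaining boundary-clipping argument is routine once corner particles are certified to lie at convex boundary features.
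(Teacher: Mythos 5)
Your proof is correct, and it follows the same skeleton as the paper's: reduce the simultaneous removal to single removals performed one at a time in arbitrary order, and show by induction that each single removal preserves simple connectivity, with the non-locked hypothesis doing exactly the work you identify. The inductive bookkeeping, however, is genuinely different. The paper's invariant is \emph{dynamic}: it tracks the set $C'$ of particles still to be removed and re-establishes, after every removal, that each member of $C'$ is still a corner particle of the \emph{current} configuration and that $C'$ contains no pair of companion locked particles of the current configuration. Maintaining the second property is the paper's most delicate step: when the removed particle $p$ is not adjacent to some $p'\in C'$, a locked companion pair $(p',p'')$ could in principle be \emph{created} by the removal, and the paper excludes this by arguing that such a $p''$ would have needed four Eligible neighbors before the removal, hence could not have been a corner particle in $C'$ at all. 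Your proof replaces this dynamic invariant with a single \emph{static} claim about the original system $P$: the middle neighbor of any $3$-corner of $R$ lies outside $R$, since it is adjacent to that $3$-corner and to both of its other neighbors, so either it has a fourth Eligible neighbor (and is not a corner particle), or it is itself a $3$-corner whose middle neighbor is the given particle, which would make that particle locked, contradicting its membership in $R$. From this you get that every surviving neighbor of $p_i$ in $P_i$ stays adjacent to the protected middle neighbor (and for $2$- and $1$-corners the surviving neighbors are trivially mutually adjacent), so the local rerouting and boundary-clipping argument applies at every step without ever consulting locked-ness in intermediate configurations. What the paper's formulation buys is a slightly more general induction hypothesis (its $C'$ may contain locked particles, provided no two companions are both included); what yours buys is the elimination of the paper's hardest case, since locked-ness is invoked exactly once, in $P$. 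Both treatments of the topological step sit at the same informal level (the paper calls preservation of simple connectivity ``evident''; you argue that the clipped fan of at most two faces touches the outer region), and your small slip of ``at most two boundary edges incident to $p_i$'' (a $3$-corner whose three neighbors all survive loses three incident edges, one of them interior to the fan) is immaterial to the argument.
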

\begin{proof}
Instead of removing all the particles in the given set $C$ at once, we remove them one by one in any order, and use induction to prove our lemma. Let $P$ be a system of contracted Eligible particles forming a simply connected shape, and let $C'$ be a set of corner particles of $P$ such that, if a particle of $C'$ is locked, then its companion is not in $C'$. This condition is obviously satisfied by the given set $C$, because it does not contain locked corner particles at all.

Let $p\in C'$, and let $P'=P\setminus \{p\}$. We have to prove that $P'$ forms a simply connected shape, that $C'\setminus \{p\}$ is a set of corner particles of $P'$, and that $C'\setminus \{p\}$ does not contain any pair of companion locked particles of $P'$. Note that the fact that $P'$ forms a simply connected shape is evident, since $p$ is a corner particle of $P$. Therefore, any path in the subgraph of the grid $G$ induced by $P$ that goes through $p$ may be re-routed through the neighbors of $p$ in $P$.

Now, if $C'=\{p\}$, there is nothing else to prove. So, let $p'\neq p$ be another particle of $C'$. Suppose first that $p'$ is adjacent to $p$. Hence, removing $p$ reduces the number of Eligible neighbors of $p'$ by one. The only case in which $p'$ could cease to be a corner particle would be if $p$ were its middle neighbor, implying that $p$ and $p'$ would be locked companions. But this is not possible, because by assumption $C'$ does not contain a pair of locked companion particles. Therefore, $p'$ is necessarily a corner particle of $P'$. Note that $p'$ cannot be a 3-corner particle of $P'$, because one of its Eligible neighbors (namely, $p$) has been removed. Hence $p'$ cannot be locked in $P'$.

Suppose now that $p'$ is not adjacent to $p$. Then, obviously, $p'$ is a corner particle of $P'$, since removing $p$ does not change its neighborhood. We only have to prove that, if $p'$ is a locked 3-corner particle in $P'$, then $C'\setminus \{p\}$ does not contain the companion of $p'$. Assume the opposite: let $p'$ be a locked particle in $P'$, let $p''$ be its companion, and let $p'$ and $p''$ be in $C'\setminus \{p\}$, and hence in $C'$. By assumption, $p''$ cannot be locked in $P$, or else $C'$ would contain a pair of locked companion particles. So, $p$ must be adjacent to $p''$. Removing $p$ reduces the number of Eligible neighbors of $p''$, implying that $p''$ must have four Eligible neighbors in $P$ (since $p''$ is a 3-corner particle in $P'$). It follows that $p''$ cannot be a corner particle of $P$, and therefore it cannot be in $C'$, contradicting our assumption.
\end{proof}

\begin{theorem}\label{tp1}
Let $P$ be a system of $n$ contracted Eligible particles forming a simply connected shape $S_0$ at stage 0. If all particles of $P$ execute the lattice consumption phase of the algorithm, there is a stage $s$, reached in $O(n)$ rounds, where there are 1, 2, or 3 pairwise adjacent Candidate particles, and all other particles are non-Eligible. Moreover, at all stages from 0 to $s$, the system forms $S_0$, and the sub-system of Eligible particles forms a simply connected shape.
\end{theorem}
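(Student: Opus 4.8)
The plan is to establish four things in turn: that the system never moves and the Eligible particles always form a simply connected shape (the stated invariant), that a particle can correctly self-identify as a non-locked corner, that the erosion terminates in one of three ``irreducible'' configurations, and that all of this happens within $O(n)$ rounds.

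First, since the lattice consumption phase only exchanges messages and toggles flags, no particle ever expands or contracts, so the system forms $S_0$ at every stage. For the connectivity invariant I would induct on the stage. At stage $0$ the Eligible set is $S_0$, simply connected by hypothesis. The Eligible set changes only when particles clear their Eligible flags, and by the rules a particle does so only when it has classified itself as a non-locked corner. I would argue that the set $E_t$ of particles clearing their flag at stage $t$ is exactly a set of non-locked corner particles of the Eligible configuration at the \emph{start} of stage $t$; then Lemma~\ref{lp1:connected} yields simple connectivity of the new Eligible set (the set $E_t$ contains no locked particles at all, so the companion hypothesis of that lemma is vacuous). The crux is the \emph{Stable} flag: I would show that if $p$'s Stable flag is set when it acts at stage $t$, then $p$'s memorized copies of its neighbors' states coincide with their true states at the start of stage $t$. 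Indeed, a neighbor's change is detected by comparison against the memorized copy and clears Stable upon reception; conversely, a set Stable flag certifies that no ``changed'' message has been read recently, so $p$'s local view is current and its self-classification correct. This same Stable argument settles the self-identification claim promised in the algorithm: $p$ reads its own corner type from the memorized Eligible flags, and to detect locking when it is a $3$-corner with middle neighbor $p'$ it inspects the corner-number $p'$ broadcasts. The geometric observation is that if $p$ is a $3$-corner with middle neighbor $p'$ and $p'$ is itself a $3$-corner, then $p'$'s three Eligible neighbors are forced to be exactly $p$ and the two outer neighbors of $p$, among which $p$ is the unique one adjacent to both others; hence $p$ is automatically $p'$'s middle neighbor, so the two are companions, and $p$ is locked if and only if its middle neighbor broadcasts $3$.

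For termination I would observe that each particle performs at most one \emph{decisive} change (it clears Eligible once, or sets Candidate once, and thereafter only waits), so there are at most $n$ such changes and the execution reaches a fixed point. By fairness, at a fixed point every particle is Stable, so no Eligible non-Candidate particle can be a non-locked corner, else it would act. By the invariant the Eligible set is simply connected, so Lemma~\ref{lp1:corner} provides a non-locked corner, which must therefore be a Candidate. Finally, a $0$-, $1$-, or $2$-corner Candidate forces, by connectivity of the Eligible set, the whole Eligible set to be a single vertex, a single edge, or a single triangle, respectively; these give exactly $1$, $2$, or $3$ pairwise adjacent Candidates with all other particles non-Eligible, as claimed.

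For the $O(n)$ round bound I would compare the asynchronous execution to the synchronous erosion. Let $L_0 \supsetneq L_1 \supsetneq \cdots \supsetneq L_D$ be the layers obtained by repeatedly deleting all non-locked corners; since each step deletes at least one particle by Lemma~\ref{lp1:corner}, we have $D \le n$. The key monotonicity is that once a particle becomes a non-locked corner it stays one until it acts, because erosion only removes Eligible neighbors and can never create the locking condition nor restore an Eligible neighbor. Moreover each particle's broadcast state changes only $O(1)$ times in total --- Eligible clears once, Candidate sets once, and the corner-number changes at most once per eroding neighbor, of which there are at most six --- so each particle's Stable flag is cleared only $O(1)$ times. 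I would then prove by induction on $i$ that within $O(i)$ rounds the Eligible set is contained in $L_i$: once it is, every still-Eligible layer-$i$ particle is a non-locked corner of the current set, and the only obstruction to its acting is the one-stage Stable wait together with the $O(1)$ Stable-clearings caused by its at most six same-layer neighbors eroding, so the whole layer erodes within $O(1)$ further rounds. Summing over the $D \le n$ layers gives $O(n)$ rounds.

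The main obstacle is the interaction between the adversarial, non-sequential scheduler and the Stable mechanism, which pervades every part of the argument. I must show rigorously that a set Stable flag certifies an up-to-date local view, so that simultaneous erosions really form a valid non-locked-corner set and self-classification is exact, and, for the complexity bound, that the Stable flag of any particle is reset only a constant number of times, so that asynchrony inflates the synchronous erosion depth by at most a constant factor per layer. Establishing this bounded-reset property carefully under arbitrary activation patterns is the delicate technical heart of the proof.
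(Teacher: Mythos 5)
Your proposal follows the paper's skeleton (no movement, connectivity via Lemma~\ref{lp1:connected}, progress via Lemma~\ref{lp1:corner} and fairness, the three terminal configurations, an $O(n)$ bound), but the technical pivot you chose --- that a set Stable flag certifies an up-to-date local view --- is false precisely where you need it, namely for the corner-number broadcasts. The Stable flag of $p$ only monitors changes announced by the \emph{neighbors of $p$}; the corner status of $p$'s middle neighbor $p'$, however, depends on the Eligible flags of \emph{$p'$'s} neighbors, three of which are not adjacent to $p$. If such a particle $r$ eliminates itself after $p'$'s last activation, then $p'$ silently becomes a 3-corner (so $p$ becomes locked), $p'$'s pending broadcast still reads ``not a 3-corner'', and nothing clears $p$'s Stable flag. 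Hence your equivalence ``$p$ is locked if and only if its middle neighbor broadcasts $3$'' fails in the ``only if'' direction, and your claim that the set $E_t$ of simultaneous eliminators contains no locked particle --- which is what licenses your appeal to Lemma~\ref{lp1:connected} --- is not established. The paper handles staleness with an idea your proof is missing: \emph{monotonicity of Eligibility}. Since a particle can become non-Eligible but never Eligible again, a possibly stale claim of ``3-corner'' by the middle neighbor, combined with the fact that $p$'s three neighbors are Eligible \emph{now}, forces the middle neighbor to still be a 3-corner; i.e., a stale \emph{positive} claim remains valid. (What protects the algorithm against the opposite error --- a locked particle eliminating on a stale \emph{negative} claim --- is that two companions can never both be Stable and eliminate at the same stage, and that removing a set of corner particles no two of which are companions is exactly what the induction inside the proof of Lemma~\ref{lp1:connected} tolerates; your write-up contains neither observation.)

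The same confusion resurfaces in your round bound: the claim that ``once a particle becomes a non-locked corner it stays one until it acts, because erosion \dots can never create the locking condition'' is wrong. Erosion does create locking: if $p$ is a non-locked 3-corner and the fourth Eligible neighbor of $p$'s middle neighbor $p'$ is removed, then $p'$ becomes a 3-corner and $p$ becomes locked. Likewise, a layer-$i$ particle of your synchronous stratification need not be a corner of the current asynchronous configuration at all (its middle neighbor may already have eliminated itself, leaving it with two non-adjacent Eligible neighbors), so the induction ``each layer erodes in $O(1)$ further rounds'' does not go through as stated. The paper's $O(n)$ argument needs no stratification: in $O(1)$ rounds either some elimination or Candidate-setting occurs somewhere, or all states stabilize, and then Lemma~\ref{lp1:corner} plus fairness forces some non-locked corner to act; since at most $n-1$ such events can occur, the bound follows.
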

\begin{proof}
Recall that, in the lattice consumption phase, a particle never changes its Eligible or Candidate flags unless it is an Eligible, non-Candidate, Stable, non-locked corner particle that has enough information about its neighbors.

Whenever a particle $p$ is activated and reads the pending messages, everything it knows about the internal flags of its neighbors is correct and up to date. Indeed, these flags can be changed only by the neighbors themselves when they are activated, and whenever this happens they send the updated values to $p$. Therefore, $p$ always reads the most recent values of the flags of its neighbors, no matter how and when the scheduler activates them. So, $p$ is able to correctly determine if it is a corner particle by just looking at the Eligible flags of its neighbors and how they are arranged.

On the other hand, when $p$ receives a message from a neighbor $p'$ claiming that $p'$ is or is not a $k$-corner particle, this information may be outdated, because a neighbor of $p'$ may have eliminated itself, and $p'$ may have been inactive ever since. However, $p$ is still able to determine if it is locked or not. Indeed, suppose that $p$ has correctly determined that it is a 3-corner particle, implying that it currently has three consecutive Eligible neighbors $p_1$, $p_2$, and $p_3$. Suppose that its middle neighbor $p_2$ has claimed to be a 3-corner particle in its last message. This statement was correct the last time $p_2$ was active, implying that $p_2$ had three consecutive Eligible neighbors. Because particles can become non-Eligible but never become Eligible again, the three Eligible particles that $p_2$ saw must be $p_1$, $p$, and $p_2$, since they are currently Eligible. It follows that $p_2$ is still a 3-corner particle and hence $p$ is locked. The converse is also true, for the same reason.

We deduce that a particle will eliminate itself only if it truly is a non-locked corner particle. Also note that no particle is allowed to move during the lattice consumption phase. So, the system will always form the same shape $S_0$, and, by Lemma~\ref{lp1:connected}, the sub-system of Eligible particles will always be simply connected.

Next we prove that, if a particle ever sets its Candidate flag, then there is a stage where there are 1, 2, or 3 pairwise adjacent Candidates, and all other particles are non-Eligible. Say that  at some point a particle $p$ becomes a Candidate, which means that it was able to determine that it is a $k$-corner particle, with $0\leq k\leq 2$.

If $k=0$, then $p$ is the only Eligible particle left, because the sub-system of Eligible particles must be connected. If $k=1$, then $p$ has a unique Eligible neighbor $p'$, which was a 1-corner particle the last time it was activated. This means that the only Eligible neighbor of $p'$ was $p$, and hence $p$ and $p'$ are the only Eligible particles in the system. Eventually, $p'$ will become Stable and will either eliminate itself or become a Candidate. If $k=2$, then $p$ has two adjacent Eligible neighbors $p'$ and $p''$, which were 2-corner particles the last time they were activated. So, the only Eligible particles in the system are $p$, $p'$, and $p''$, which are pairwise adjacent. Both $p'$ and $p''$ will eventually become Stable, and they will either eliminate themselves or become Candidates.

We now have to prove that Eligible particles steadily eliminate themselves until only Candidates are left. Assume the opposite, and suppose that the execution of the algorithm reaches a point where Eligible particles stop becoming non-Eligible. By the above reasoning, we may assume that the system contains no Candidate particles at this point. As the sub-system of Eligible particles is simply connected at any time, by Lemma~\ref{lp1:corner} there are non-Candidate non-locked corner particles. Since no particle ever changes its internal flags again, all of them will eventually become Stable. So, there will be an Eligible, non-Candidate, Stable, non-locked corner particle that will either become non-Eligible or a Candidate, which contradicts our assumptions.

It remains to prove that at least one particle will become a Candidate. Assume the opposite. At each stage, some non-locked corner particles possibly eliminate themselves, and this process goes on until there are no Eligible particles left. Let $s$ be the stage when the last Eligible particles eliminate themselves (simultaneously). As all of them have to be non-locked corner particles at stage $s$, it is easy to see that only three configurations are possible:
\begin{itemize}
\item At stage $s$ there is only one Eligible particle. Since this is a 0-corner particle, according to the algorithm it will become a Candidate.
\item At stage $s$ there are only two adjacent 1-corner particles $p$ and $p'$. Recall that a particle has to be Stable in order to eliminate itself. Since $p$ is Stable at stage $s$, it means that there is a stage $s'<s$ during which $p$ has already sent a message to $p'$ saying that it was a 1-corner particle (otherwise, some neighbor of $p$ would have eliminated itself in the meantime, implying that $p$ would not be Stable). Since $p'$ is active at stage $s$, it must receive or have already received the message sent at time $s'$ by $p$. So, $p'$ knows that $p$ is a 1-corner, and hence it becomes a Candidate (and vice versa).
\item At stage $s$ there are only three pairwise adjacent 2-corner particles. Reasoning as in the previous case, we see that, since all three particles are Stable at stage $s$, they know that they are all 2-corner particles, and therefore they become Candidates.
\end{itemize}
In all cases, at least one particle becomes a Candidate, contradicting our assumption.

Finally, the upper bound of $O(n)$ rounds easily follows from the fact that, in a constant number of rounds, at least one corner particle becomes non-Eligible or a Candidate. This happens at most $n-1$ times, until only Candidates are left.
\end{proof}

\subsection{Spanning Forest Construction Phase}\label{s:3.2}
\smallskip
\noindent\textbf{Algorithm.}
The spanning forest construction phase starts when 1, 2, or 3 pairwise adjacent candidate leaders have been identified, and no other particle is Eligible. In this phase, each candidate leader becomes the root of a tree embedded in $G$. Eventually, the set of these trees will be a spanning forest of the subgraph of $G$ induced by the system $P$.

Each particle has a flag called \emph{Tree}, initially not set, whose purpose is to indicate that the particle has been included in a tree. Moreover, each particle also has a variable called \emph{Parent}, which contains the local port number corresponding to its parent, provided that the particle is part of a tree (the initial value of this variable is $-1$).

As in the previous phase, all particles send information to their neighbors containing part of their internal states. This information is recorded by the receiving particles: so, each particle has an \emph{Is-in-Tree} flag and an \emph{Is-my-Child} flag corresponding to each neighbor. All these flag are initially not set.

Finally, there is a \emph{Tree-Done} flag (initially not set) corresponding to each neighbor, which is used in the last part of the phase.

The following rules apply to all particles during the spanning forest construction phase:
\begin{itemize}
\item If a particle's Candidate flag is set and its Tree flag is not set, it sets its own Tree flag and leaves its own Parent flag to $-1$ (implying that it is the root of a tree).
\item If a particle's Tree flag is set, it sends a \emph{Parent} message to the port corresponding to its own Parent variable (assuming it is not $-1$), and it sends a \emph{Tree} message to all other neighbors.
\item If a particle receives a Tree message from a neighbor, it sets the Is-in-Tree flag relative to its port. Similarly, if it receives a Parent message from a neighbor, it sets both the Is-in-Tree and the Is-my-Child flag relative to its port.
\item If a particle's Candidate flag is not set, its Tree flag is not set, and the Is-in-Tree flags relative to some of its neighbors are set, then it sets its own Tree flag. Let $k$ be the smallest port number corresponding to a neighbor whose relative Is-in-Tree flag is set. Then, the particle sets its own Parent flag to $k$ (implying that that neighbor is now its parent).
\item If a particle $p$'s Tree flag is set, the Is-in-Tree flags corresponding to all its neighbors are set, and the Tree-Done flags relative to all it children are set (recall that its children are the neighbors whose relative Is-my-Child flag is set), then:
\begin{itemize}
\item If $p$ has a parent (i.e., its Parent variable is not $-1$), it sends a \emph{Tree-Done} message to its parent.
\item If $p$ has no parent (i.e., it is a candidate leader), it sends a Tree-Done message to its Candidate neighbors.
\end{itemize}
\item If a particle receives a Tree-Done message from one of its children, it sets the corresponding Tree-Done flag.
\end{itemize}

\smallskip
\noindent\textbf{Correctness.}
\begin{theorem}\label{tp2}
Let $P$ be the system resulting from Theorem~\ref{tp1}. If all particles of $P$ execute the spanning forest construction phase of the algorithm, then there is a stage, reached after $O(n)$ rounds, where every particle has the Tree flag set, each non-Candidate particle has a unique parent, and each particle has received a Tree-Done message from all its children. No particle moves in this phase.
\end{theorem}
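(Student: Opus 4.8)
The plan is to treat this phase as two successive processes running on the fixed connected graph induced by $P$: a top-down flooding that builds the forest, followed by a bottom-up convergecast carried by the Tree-Done messages. Before either, I would record the crucial invariant that \emph{no flag is ever cleared} during this phase, and that the Parent variable, once set to a value different from $-1$, never changes again. This monotonicity is exactly what neutralizes the asynchronous, message-overwriting semantics here: every in-tree particle re-sends the same Tree, Parent, and Tree-Done messages at each activation, so the ``most recent message'' a neighbor reads always carries correct, up-to-date information. In particular, since no particle moves in this phase, no message is lost to an expansion.

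First I would analyze the forest construction. Each candidate leader sets its Tree flag upon first activation and becomes a root with Parent $=-1$. A non-Candidate particle sets its Tree flag (together with its Parent pointer, once and for all) the first time it observes an Is-in-Tree flag, which can happen only after it has received a Tree or Parent message from an in-tree neighbor. The key bookkeeping fact is that the only messages a particle $q$ ever sends to a fixed neighbor are Parent messages if that neighbor is $q$'s parent and Tree messages otherwise; hence Is-in-Tree$[r]$ being set at $p$ for a genuine child $r$ forces Is-my-Child$[r]$ to be set as well, and conversely Is-my-Child$[r]$ can be set only if $r$'s (permanent) Parent points to $p$. Since the subgraph induced by $P$ forms the connected shape $S_0$ (Theorem~\ref{tp1}), an induction on BFS distance from the nearest root shows every particle eventually gets its Tree flag; and because the stage at which a particle sets its Tree flag strictly increases along every parent edge, following Parent pointers can never cycle and always terminates at a candidate leader. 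Thus the Tree flags determine a spanning forest rooted at the $k$ candidate leaders, and each non-Candidate particle acquires a unique, permanent parent.

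Next I would handle the Tree-Done convergecast. The decisive observation is that once all of $p$'s neighbors are in-tree (all its Is-in-Tree flags set), $p$'s set of children is \emph{final}, since no neighbor's Parent pointer can change thereafter. A leaf has no children, so its sending condition reduces to ``all neighbors in-tree,'' which eventually holds; an internal node forwards Tree-Done to its parent only after every child has done so. A bottom-up induction on the finite acyclic forest then shows the Tree-Done messages propagate from the leaves to the roots without deadlock, and that every particle (roots included) eventually has the Tree-Done flags of all its children set, i.e., has received a Tree-Done message from each child. Since no contraction or expansion operation appears anywhere in the phase, the final clause is immediate.

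For the round bound I would combine two linear estimates. Letting $D$ be the maximum BFS distance from a particle to its nearest root, the flooding advances one level per constant number of rounds (absorbing the one-round delay between sending and reading a message), so all Tree flags are set within $O(D)=O(n)$ rounds, after which a constant number of further rounds lets every particle learn that all its neighbors are in-tree. The convergecast then climbs the forest one level per constant number of rounds, and since the forest height is at most $n-1$, it completes within another $O(n)$ rounds, for a total of $O(n)$. I expect the main technical care to lie not in these counting arguments but in verifying, against the message-overwriting model, that a particle's recorded child set is both complete and permanent before it initiates its part of the convergecast; the monotonicity of all flags and the once-only setting of Parent pointers are precisely the facts that close this gap.
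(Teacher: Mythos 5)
Your proof is correct and follows essentially the same route as the paper's: a flooding argument showing the Tree flags spread from the Candidate roots to span the connected system with consistent, permanent Parent pointers, followed by a leaf-to-root convergecast of Tree-Done messages, with an $O(n)$ round bound obtained by elementary progress counting. Your explicit treatment of monotonicity and the message-overwriting semantics (in particular, that a particle's recorded child set is complete and final once all its Is-in-Tree flags are set) is a somewhat more careful rendering of steps the paper's proof treats as immediate, but the underlying argument is the same.
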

\begin{proof}
It is easy to prove by induction that, at every stage, the Tree particles form a forest with a tree rooted in each Candidate particle, and that the Parent variables of all Tree particles are consistent. Indeed, when any (positive) number of neighbors of a non-Tree particle $p$ become Tree particles and start sending Tree messages to $p$, $p$ chooses one of them as its parent as soon as it is activated, and sets its flags accordingly. It then communicates this change to its neighbors, which update their Is-in-Tree and the Is-my-Child flags consistently.

Since $P$ forms a connected shape (because it results from Theorem~\ref{tp1}), eventually all particles become part of some tree, and a spanning forest is constructed. So, eventually, some leaves of the forest observe that all their neighbors are Tree particles (because all their relative Is-in-Tree flags are set) and none of them is their child (because none of their relative Is-my-Child flags is set). These leaves send Tree-Done messages to their parents.

As more leaves send Tree-Done messages to their parents, some internal particles start observing that all their children are sending Tree-Done messages, and all other neighbors are Tree particles. These internal particles therefore send Tree-Done messages to their parents, as well. Eventually, the Candidate particles will receive Tree-Done messages from all their children. At this stage, every particle has the Tree flag set, each non-Candidate particle has a unique parent, and each particle has received a Tree-Done message from all its children.

To show that the phase ends in $O(n)$ rounds, it suffices to note that in a constant number of rounds either a new particle sets its Tree flag or forwards the Tree-Done message to its parent (or its Candidate neighbors if it has no parent).
\end{proof}

\subsection{Handedness Agreement Phase}\label{s:3.3}
When a candidate leader has received Tree messages from all its neighbors and Tree-Done messages from all its children, it transitions to the handedness agreement phase. Recall that each particle may label ports in clockwise or counterclockwise order: this is called the particle's handedness. By the end of this phase, all particles will agree on a common handedness. The agreement process starts at the candidate leaders and proceeds through the spanning forest constructed in the previous phase, from parents to children.

\smallskip
\noindent\textbf{Agreement among candidate leaders.}
In the first stages of this phase, the candidate leaders agree on a common handedness. This may result in the ``elimination'' of some of them. If there is a unique candidate leader, this part of the algorithm is trivial. So, let us assume that there are two or three candidate leaders.

Suppose that there are two candidate leaders $p$ and $p'$. Then, they have exactly two neighboring vertices $u$ and $v$ in common. By now, the candidate leaders know if $u$ and $v$ are occupied or not. There are three cases.
\begin{itemize}
\item Exactly one between $u$ and $v$ is occupied. Without loss of generality, $u$ is occupied by a particle $p_u$, and $v$ is unoccupied. Then, both $p$ and $p'$ send a \emph{You-Choose} message to $p_u$. When $p_u$ has received You-Choose messages from both, it arbitrarily picks one between $p$ and $p'$, say $p$. Then $p_u$ sends a \emph{Chosen} message to $p$ and a \emph{Not-Chosen} message to $p'$. As a consequence, $p'$ ceases to be a candidate leader (by clearing its own Candidate and Eligible flags), and $p$ becomes the parent of $p'$ (i.e., the Parent variable of $p'$ and the Is-my-Child variables of $p$ are appropriately updated).
\item $u$ is occupied by a particle $p_u$ and $v$ is occupied by a particle $p_v$. Let the edge $\{p,p'\}$ be labeled $i$ by $p$, and observe that $i-\ell(p,u)\equiv -i+\ell(p,v)\equiv \pm 1 \pmod 6$. Without loss of generality, $i-\ell(p,u)\equiv 1 \pmod 6$. Then, $p$ sends a \emph{You-Choose} message to $p_u$ and a \emph{You-do-not-Choose} message to $p_v$. $p'$ does the same. If one between $p_u$ or $p_v$ receives both You-Choose messages, it arbitrarily eliminates one between $p$ and $p'$, as explained above. Otherwise, both $p_u$ and $p_v$ receive a You-Choose message and a You-do-not-Choose message. This means that $p$ and $p'$ have the same handedness. So, $p_u$ and $p_v$ send \emph{Same-Handedness} messages to both $p$ and $p'$, who wait until they receive both messages.
\item Both $u$ and $v$ are unoccupied. As above, if the edge $\{p,p'\}$ is labeled $i$ by $p$, then $i-\ell(p,u)\equiv -i+\ell(p,v)\equiv \pm 1 \pmod 6$. Without loss of generality, assume that $i-\ell(p,u)\equiv 1 \pmod 6$. Then, $p$ attempts to expand toward $u$. Meanwhile, $p'$ does the same.
\begin{itemize}
\item If $p$ fails to expand toward $u$, it means that $p'$ has done it ($p$ realizes that this has happened because it cannot see its own tail the next time it is activated). In this case, $p$ sends an \emph{I-am-Eliminated} message to $p'$.
\item Suppose now that $p$ manages to expand toward $u$ (it realizes that it has expanded because it sees its own tail the next time it is activated). Then, $p$ looks back at $p'$, which is found at port $(i+1)\mod 6$ (see Figure~\ref{f:handedness1}). If $p$ sees a tail or an unoccupied vertex, it understands that $p'$ has expanded toward $v$. In this case, $p$ and $p'$ have the same handedness, and $p$ memorizes this information. If $p$ sees the head of $p'$, it sends a \emph{You-are-Eliminated} message to $p'$.
\end{itemize}
After this, if $p$ is still expanded, it contracts into $u$, it expands toward its original vertex, and contracts again. $p'$ does the same. Eventually, at least one between $p$ and $p'$ has realized that their handedness is the same, or has received a You-are-Eliminated or an I-am-Eliminated message. This information is shared by $p$ and $p'$ again when they are both in their initial positions and contracted. If one of them has to be eliminated, it does so by clearing its Candidate and Eligible flags, and becomes a child of the other candidate leader, as explained above.
\end{itemize}

\begin{figure}[ht]
\begin{center}
\includegraphics[scale=1]{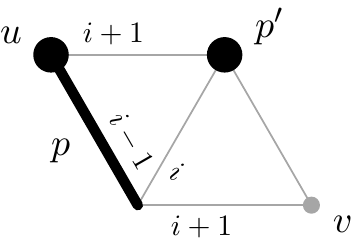}
\end{center}
\caption{The case of the agreement protocol between candidate leaders in which $u$ and $v$ are unoccupied. If $p$ expands toward the vertex corresponding to the label $(i-1)\mod 6$, it finds $p'$ at the vertex corresponding to the label $(i+1)\mod 6$.}
\label{f:handedness1}
\end{figure}

Suppose now that there are three candidate leaders $p$, $p'$, and $p''$. So, $p$ knows that there are candidate leaders corresponding to its local ports $i$ and $i'$, with $i'\equiv i+1 \pmod 6$. Then, $p$, sends an \emph{I-Choose-You} message through port $i$ and an \emph{I-do-not-Choose-You} message through port $i'$. Meanwhile, $p'$ and $p''$ do the same. Eventually, each candidate leader receives two messages.
\begin{itemize}
\item If one of them, say $p$, receives two I-Choose-You messages, it sends You-are-Eliminated messages to both $p'$ and $p''$. Then, $p'$ and $p''$ cease to be candidate leaders and become children of $p$.
\item Otherwise, each candidate leaders receives one I-Choose-You and one I-do-not-Choose-You message. This means that all candidate leaders have the same handedness. Each of them sends an \emph{I-am-not-Chosen} message to the others. When they receive each other's messages, they realize that their handedness is the same.
\end{itemize}

\smallskip
\noindent\textbf{Basic handedness communication.}
As a basic operation, we want to let a parent ``impose'' its own handedness onto a child. Of course, this cannot be done by direct communication, and we will therefore need a special \emph{handedness communication technique}, which we describe next.

Assume that a contracted particle $p$ intends to communicate its handedness to one of its children, a contracted particle $p'$. For now, we will make the simplifying assumption that all other particles are contracted and idle. We will show later how to handle the general case in which several particles are operating in parallel.

Say that the edge $\{p,p'\}$ is labeled $i$ by $p$ and $i'$ by $p'$. There are exactly two vertices $u$ and $v$ of $G_D$ that are adjacent to both $p$ and $p'$. Suppose first that at least one between $u$ and $v$ is not occupied by any particle. If both are unoccupied, $p$ will arbitrarily choose one of them. Without loss of generality, let us assume that $u$ is unoccupied, and $p$ has chosen it. Let $j=(\ell(p,u)-i)\mod 6$, and observe that $j=\pm 1$, since $u$ is adjacent to $p'$.
\begin{itemize}
\item $p$ memorizes $i$ and $j$, and expands toward $u$.
\item Then, $p$ computes the port corresponding to $p'$ as $(i-j)\mod 6$, and sends $p'$ a \emph{Handedness-A} message containing $j$.
\item Say $p'$ receives the Handedness-A message from port $i''$, and let $j'=(\mbox{Parent}-i'')\mod 6$ (recall from Section~\ref{s:3.2} that $\mbox{Parent}=i'$, because $p$ is the parent of $p'$). Now, $p$ and $p'$ have the same handedness if and only if $j=j'$. So, $p'$ memorizes this information and replies with a \emph{Handedness-OK} to port $i''$.
\item When $p$ receives the Handedness-OK message, it contracts into $u$.
\item Then, $p$ expands toward its original location and contracts again.
\end{itemize}

Suppose now that $u$ and $v$ are both occupied by particles $p_u$ and $p_v$, respectively: this case is illustrated in Figure~\ref{f:handedness2}. We say that $p_u$ and $p_v$ are \emph{auxiliary particles}.
\begin{itemize}
\item $p$ sends a \emph{Lock} message to both $p_u$ and $p_v$ (the purpose of this message will be explained later).
\item $p_u$ and $p_v$ reply by sending \emph{Locked} messages back to $p$.
\item When $p$ has received Locked messages from both $p_u$ and $p_v$, it sends a \emph{Get-Ready} message to $p'$.
\item When $p'$ receives the Get-Ready message, it sets an internal \emph{Ready} flag and sends an \emph{I-am-Ready} message back to $p$ (the purpose of the Ready flag will be explained later).
\item When $p$ receives the I-am-Ready message from $p'$, it sends the number $(\ell(p,u)-i)\mod 6$ to $p_u$ and the number $(\ell(p,v)-i)\mod 6$ to $p_v$.
\item Say that $p_u$ receives the number $j$ from $p$. Then, $p_u$ sends a \emph{Handedness-B} message containing the number $j$ to the (at most two) common neighbors of $p$ and $p_u$. Note that $p'$ is one of these neighbors (see Figure~\ref{f:handedness2}). $p_v$ does the same thing.
\item Whenever a particle receives a Handedness-B message from a neighbor, it responds with a \emph{Handedness-B-Acknowledged} to the same neighbor.
\item Say that $p'$ receives a Handedness-B message containing the number $j=(\ell(p,u)-i)\mod 6$ from $p_u$, and say that $\ell(p',u)=i''$. As before, $p'$ computes $j'=(\mbox{Parent}-i'')\mod 6$, and determines if it has the same handedness as $p$ by comparing $j$ and $j'$. If $p'$ receives a number from $p_v$, it does the same thing.
\item When $p'$ has received numbers from both $p_u$ and $p_v$, it sends a Handedness-OK message to $p$.
\item When $p$ receives the Handedness-OK message from $p'$, it sends \emph{Unlock} messages to both $p_u$ and $p_v$.
\item When $p_u$ and $p_v$ receive an Unlock message from $p$ and a Handedness-B-Acknowledged message from every neighbor to which they sent Handedness-B messages, they send an \emph{Unlocked} message back to $p$.
\item $p$ waits until it receives Unlocked messages from both $p_u$ and $p_v$.
\end{itemize}

\begin{figure}[ht]
\begin{center}
\includegraphics[scale=1]{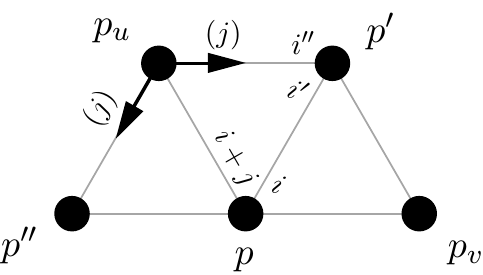}
\end{center}
\caption{The case of the handedness communication protocol in which $u$ and $v$ are occupied. Arrows indicate messages. The particles $p$ and $p'$ have the same handedness if and only if $j\equiv i'-i''\pmod 6$.}
\label{f:handedness2}
\end{figure}

\smallskip
\noindent\textbf{Main handedness agreement algorithm.}
The main part of the handedness agreement algorithm starts when all the candidate leaders have the same handedness. Next we describe the main algorithm for a generic particle $p$.

If $p$ is a candidate leader or if it receives a \emph{Begin-Handedness-Communication} message from its parent, $p$ starts communicating its handedness to its children. $p$ picks one child and executes the handedness communication technique with it. Then it does so with the next child, etc.

When a child $p'$ realizes that its handedness is not the handedness of $p$, it sets a special internal flag that reminds it to apply the function $f(i)=5-i$ to all its port labels. If the flag is not set, $f$ is the identity function. The composition $\widetilde\ell=f\circ\ell$, where $\ell$ is the labeling of $p'$, is called the \emph{corrected labeling} of $p'$, and will be used by $p'$ instead of $\ell$. In other terms, $p'$ ``pretends'' to have the handedness of $p$, and it behaves accordingly for the rest of the execution of the shape formation algorithm.

When $p$ has communicated its handedness to all its children, it sends a Begin-Handedness-Communication message to its first child. Then $p$ waits until the child has sent it a \emph{Done-Handedness-Communication} message back. Subsequently, $p$ sends a Begin-Handedness-Communication message to its second child, etc.

When the last child of $p$ has sent a Done-Handedness-Communication message to it (or if $p$ is a leaf of the spanning forest of $P$), $p$ sends a Done-Handedness-Communication message back to its father (provided that $p$ is not a candidate leader).

\smallskip
\noindent\textbf{Resolving conflicts.}
Note that several pairs of particles may be executing the handedness communication technique at the same time: precisely, as many as the trees in the spanning forest, i.e., as many as the candidate leaders. These particles may interfere with each other when they try to expand toward the same vertex or when they send messages to the same particle. In the following, we explain how these conflicts are resolved.

To begin with, each particle $p$ memorizes which of its surrounding vertices are initially occupied by other particles. Then, when $p$ executes the handedness communication technique, it looks at its neighboring vertices $u$ and $v$. If any of them is supposed to be occupied but is currently unoccupied or it is a tail vertex, $p$ waits.

Similarly, if $p$ fails to expand toward a supposedly empty vertex $u$ because another particle has expanded toward it at the same time, $p$ waits until $u$ is unoccupied again (recall that $p$ realizes that its expansion attempt has failed if it cannot see its own tail).

After an auxiliary particle $p_u$ has sent a Locked message to $p$, it ignores all Lock messages from any other particle until it has sent an Unlocked message back to $p$. This prevents $p_u$ from becoming an auxiliary particle in two independent handedness communication operations simultaneously.

Similarly, when $p_u$ is an auxiliary particle of $p$ and $p'$, it sends Handedness-B messages to its common neighbors with $p$. So, another particle $p''\neq p'$, which is not involved in the operation, might receive this message and behave incorrectly. Three situations are possible:
\begin{itemize}
\item If $p''$ has already been the recipient of a (completed) handedness communication operation, it simply ignores this message.
\item If $p''$ has never been the recipient of a handedness communication operation, its Ready flag is still not set. So, $p''$ just responds to $p_u$ with a Handedness-B-Acknowledged message without doing anything. On the other hand, $p_u$ will not become unlocked until it has received this message. Therefore, when $p''$ will indeed be involved in a handedness communication operation, there will not be a pending Handedness-B message directed to $p''$.
\item Suppose that $p''$ is currently involved in a handedness communication operation. We claim that $p_u$ cannot be an auxiliary particle of this operation. This is because $p_u$ has already been locked by $p$, who is involved in an operation with $p'$, and therefore it cannot be an auxiliary particle in any other operation. Therefore, when $p''$ receives a Handedness-B message from $p_u$, it ignores it because it knows that $p_u$ is not its auxiliary particle ($p''$ only responds with a Handedness-B-Acknowledged message).
\end{itemize}

\smallskip
\noindent\textbf{Correctness.}
\begin{theorem}\label{tp3}
Let $P$ be the system resulting from Theorem~\ref{tp2}, forming a shape $S_0$. If all particles of $P$ execute the handedness agreement phase of the algorithm, then there is a stage, reached after $O(n)$ rounds, where all candidate leaders have received a Done-Handedness-Communication message from all their children. At this stage, $P$ forms $S_0$ again, all candidate leaders have the same handedness, and each other particle knows whether it has the same handedness as the candidate leaders. In this phase, at most $O(n)$ moves are performed in total.
\end{theorem}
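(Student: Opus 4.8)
The plan is to decompose the argument along the three structural stages of the phase: (i) the agreement among the (at most three) candidate leaders, (ii) a single invocation of the basic handedness communication technique in isolation, and (iii) the orchestration of these invocations over the whole spanning forest, together with the resolution of conflicts between concurrent invocations. A preliminary ingredient, used throughout, is the elementary algebraic fact underlying every handedness comparison. Using the port-labeling invariance from Section~\ref{s:2} (the edge that $v$ labels $i$ is labeled $(i+3)\bmod 6$ by its other endpoint) together with the definition of handedness as the sense of rotation of the labels, one checks by a short computation that two neighboring particles share a handedness exactly when the relevant label differences agree modulo $6$ — i.e., $j=j'$ in the message-only comparison and $j\equiv i'-i''\pmod 6$ in the auxiliary-particle comparison of Figure~\ref{f:handedness2}.

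With this lemma in hand, stage (i) is a finite case analysis of the leader-agreement protocol (one common vertex occupied, both occupied, both empty): each case correctly detects whether two leaders share a handedness and, on disagreement, eliminates exactly one leader, re-attaching it as a child. Since there are at most three leaders, this stage costs $O(1)$ rounds and $O(1)$ moves, and any leader that expanded in the ``both empty'' case contracts back, so the system again forms $S_0$ and the survivors share a handedness. For stage (ii), I would prove that one execution of the handedness communication technique, performed by a parent $p$ on a child $p'$ with all other particles idle, terminates in $O(1)$ rounds and $O(1)$ moves, returns every involved particle to its original contracted position, and lets $p'$ correctly decide whether its handedness equals that of $p$; this is again a short case split (at least one of $u,v$ empty versus both occupied) invoking the lemma, and the only moves are the expand/contract round trip of $p$ in the first case, while the auxiliary particles never move.

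For the orchestration, I would argue by induction on the spanning forest in DFS order: the \emph{Begin-Handedness-Communication} messages sweep down and the \emph{Done-Handedness-Communication} messages sweep back up, so that each of the $n-k$ non-root particles is the recipient of exactly one technique execution, after which it knows, via the corrected labeling $\widetilde\ell=f\circ\ell$, whether it matches its parent's handedness and hence, inductively, whether it matches the leaders'. Because the serialization by \emph{Done} messages guarantees that within each tree at most one technique runs at a time, there are at most $k\le 3$ concurrent executions; charging $O(1)$ rounds and $O(1)$ moves to each of the $O(n)$ executions and each of the $O(n)$ edge traversals yields the claimed $O(n)$ bounds, and since every execution restores its participants to $S_0$, the final configuration is $S_0$ again.

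The crux, and the step I expect to be the main obstacle, is the \emph{conflict resolution}: showing that the up-to-three concurrent executions neither corrupt one another (safety) nor deadlock (liveness). For safety I would verify, case by case against the ``Resolving conflicts'' rules, that a particle never misreads a stray \emph{Handedness-B} message as its own — this is exactly what the \emph{Ready} flag and the observation ``$p_u$ is already locked, hence not my auxiliary particle'' guarantee — and that the locking discipline forbids a particle from acting as an auxiliary in two operations at once, while the waiting rules for vertices ``supposed to be occupied/empty'' ensure a particle proceeds only when its local geometry matches the idealized isolated case of stage (ii). For liveness I would argue that every instance of waiting is transient: a particle blocked on an expansion or on a lock is blocked only by the $O(1)$ participants of the at most $k-1$ other active operations, each of which returns all of its participants and vertices to the configuration $S_0$ upon completion; one then shows that at least one concurrently active operation is never permanently blocked, so by fairness the operations complete one after another and the total waiting is bounded, keeping the round complexity at $O(n)$. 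Establishing this last no-deadlock claim rigorously under the adversarial, message-overwriting, not-necessarily-sequential scheduler — rather than merely plausibly — is where the real work lies.
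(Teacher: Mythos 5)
Your decomposition (leader agreement, single isolated operation, orchestration over the forest) matches the paper's, and your identification of the crux is accurate — but that is exactly the problem: the proposal stops where the proof has to begin. You write that ``one then shows that at least one concurrently active operation is never permanently blocked'' and concede that establishing this rigorously ``is where the real work lies.'' That no-deadlock claim is the substantive content of the paper's proof of Theorem~\ref{tp3}, and your sketch of it is circular as stated: you argue a blocked particle waits only on participants of other active operations, ``each of which returns all of its participants and vertices to the configuration $S_0$ upon completion'' — but whether those operations complete is precisely what is in question, since they may in turn be waiting, possibly in a cycle. An argument that assumes the other operations terminate in order to show the blocked one terminates proves nothing about cyclic wait-for dependencies.

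The missing idea is a geometric one. Since each tree of the spanning forest hosts at most one handedness communication operation at a time, and there are at most three trees, any cyclic chain of lock dependencies $p_1\to q_1\to p_2\to q_2\to\dots$ has length $k\le 3$. The paper then rules out $k=2$ and $k=3$ by examining what such a cycle would force geometrically in $G_D$: for $k=2$, the sequence $(p_1,q_1,p_2,q_2)$ is a $4$-cycle, so the child of $p_1$ (which must neighbor $p_1$, $q_1$, and $q_2$) would have to be $p_2$ — impossible, since $p_2$ lies in a different tree of the forest; for $k=3$, the sequence $(p_1,q_1,p_2,q_2,p_3,q_3)$ forms a regular hexagon, so the children of $p_1$, $p_2$, and $p_3$ would all have to occupy the hexagon's center, contradicting the fact that a particle has a unique parent. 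The two non-lock waiting scenarios (waiting for a contended vertex to empty, waiting for an expanded particle to retract) also need the observation that a particle that succeeds in expanding can always finish its operation — because the protocol keeps its original vertex reserved — together with the fact that each pair of particles performs at most one operation together, so contention on any vertex is finite. Without these arguments, your $O(n)$ round bound and the claim that the phase terminates at all are unsupported; the rest of your outline (the modular-arithmetic handedness lemma, the per-operation $O(1)$ accounting, the restoration of $S_0$) is correct and agrees with the paper.
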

\begin{proof}
The agreement protocol among candidate leaders works in a straightforward way in every case. Indeed, only the candidate leaders are ever allowed to move, and the other particles never send any message unless prompted by the candidate leaders themselves.

Eventually, all candidate leaders have the same handedness, and the main part of the handedness agreement phase starts. We have already proved that there can be no conflicts, in that particles involved in different handedness communication operations do not interfere with one another. We only have to prove that there can be no \emph{deadlocks}, and hence the execution never gets stuck. There are essentially three ways in which a deadlock may occur, which will be examined next.

The first potential deadlock situation is the one in which a particle $p$ intends to expand toward a vertex $u$ that was originally unoccupied, but now is occupied by some other particle $q$. According to the protocol, $p$ has to wait for $u$ to be unoccupied again. However, while $p$ is temporarily inactive, $q$ may finish its operation and leave $u$, and another particle $q'$ may occupy $u$. If new particles keep occupying $u$ before $p$ does, then $p$ will never complete its operation. Observe that, if a particle $q$ manages to occupy $u$, then it is able to finish its handedness communication operation. Indeed, $q$ will have to contract into $u$ and then go back to its original location. In turn, the original location of $q$ will necessarily be unoccupied, because the protocol prevents any particle from expanding into that vertex. Since no two particles perform a handedness communication operation together more than once, after a finite number of stages $p$ will not have to contend $u$ with any other particle, and will therefore be free to occupy it.

The second potential deadlock situation is similar: $p$ waits for some other particle $p'$ to contract or come back to its original location. If $p'$ keeps expanding to different locations to interact with other particles, $p$ will wait forever. Again, this situation is resolved by observing that, once $p'$ has expanded, it necessarily terminates its handedness communication operation. Also, $p'$ can only be involved in finitely many such operations.

The final potential deadlock situation is the following. A particle $p_1$ begins a handedness communication operation and locks an auxiliary particle $q_1$. However, the other particle that it intends to lock, $q_2$, is already locked by some other particle $p_2$. In turn, $p_2$ intends to lock another particle $q_3$ that is already locked, etc. If the $k$th particle in this chain, $p_k$, has locked $q_k$ but also wants to lock $q_1$, there is a deadlock. Observe that in a single tree of the spanning forest of $P$ there can be at most one handedness communication operation in progress. Since there are at most three such trees (because there are at most three candidate leaders), $k\leq 3$.
\begin{itemize}
\item If $k=1$, obviously there can be no deadlock.
\item If $k=2$, the sequence $(p_1,q_1,p_2,q_2)$ is a cycle in $G_D$ (with a little abuse of notation, we use particles' names to indicate the vertices they occupy). $p_1$ intends to communicate its handedness to its child, which is a neighbor of $p_1$, $q_1$, and $q_2$: therefore, it must be $p_2$. However, $p_2$ cannot be a child of $p_1$, because it lies in a different tree of the spanning forest.
\item If $k=3$, the sequence $(p_1,q_1,p_2,q_2,p_3,q_3)$ is a cycle in $G_D$. The only possibility is for these six particles to form a regular hexagon in $G_D$. Since the child of $p_1$ must be a neighbor of $p_1$, $q_1$, and $q_2$, it must occupy the center of the hexagon. Similarly, the same particle must be the child of $p_2$ and $p_3$, which is impossible, because a particle cannot have more than one parent.
\end{itemize}
In any case, there can be no deadlock.

Since no deadlocks can occur, eventually each non-Candidate particle is involved in a handedness communication operation with its parent, it learns if it has the same handedness as its parent, and it sends Done-Handedness-Communication messages to it. The initial agreement protocol among candidate leaders consists of a constant number of moves. Each handedness communication operation also consists of a constant number of moves, and exactly one such operation is performed for each non-Candidate particle of $P$. In total, at most $O(n)$ moves are performed in this phase. Moreover, whenever a particle moves, it then goes back to its original location before the phase is finished. It follows that $P$ forms $S_0$ again when the phase ends.

Similarly, since in a constant number of rounds a new particle either learns if it has the same handedness as its parent or forwards a Done-Handedness-Communication to its parent, the phase terminates in $O(n)$ rounds.
\end{proof}

\subsection{Leader Election Phase}\label{s:3.4}
When a candidate leader receives a Done-Handedness-Communication message from its last child, it knows that its entire tree has agreed on the same handedness. So, it transitions to the leader election phase. The goal of this phase is to elect a single leader among the candidates, if possible.

By Theorem~\ref{tp3}, at the end of the handedness agreement phase the system forms the initial shape $S_0$ again. In order to elect a leader, the candidates ``scan'' their respective trees of the spanning forest, searching for asymmetric features of $S_0$ that would allow them to decide which candidate should become the leader. This task is made possible by the fact that all particles agree on the same handedness. If no asymmetric features are found and no leader can be elected, then $S_0$ must be unbreakably $k$-symmetric, and the system will proceed to the next phase with $k$ leaders.

Technically, becoming a leader means setting an internal \emph{Leader} flag (which is initially not set) and clearing the Candidate flag.

\smallskip
\noindent\textbf{Neighborhood encoding.}
We preliminarily define a finite-length code $C(p)$ that a particle $p$ can use to describe its neighborhood to other particles. The code is a string of six characters from the alphabet $\{\mbox{L},\mbox{P},\mbox{C},\mbox{N}\}$. The $i$th character describes the content of the vertex $v$ such that $\widetilde\ell(p,v)=i$, where $\widetilde\ell$ is the corrected labeling of $p$ (refer to Section~\ref{s:3.3} for the definition of corrected labeling). The character is chosen as follows:
\begin{itemize}
\item L if $v$ is occupied by a candidate leader;
\item P if $v$ is occupied by the parent of $p$;
\item C if $v$ is occupied by a child of $p$;
\item N otherwise.
\end{itemize}
This information is readily available to $p$: indeed, at this point of the execution of the algorithm, $p$ is well aware of which of its neighboring vertices are occupied, where its parent is, where its children are, etc.

Note that, by Theorem~\ref{tp3}, using $\widetilde\ell$ in all particles' computations (as opposed to $\ell$) is equivalent to assuming that all particles have the same handedness (i.e., the handedness of the candidate leaders).

\smallskip
\noindent\textbf{Basic election technique.}
In the main leader election algorithm, the candidate leaders will repeatedly use the following ``tentative election procedure''.

Suppose that there are $k=2$ candidate leaders in $P$, namely $p_1$ and $p_2$. Let $p_1$ know the neighborhood code $C(q_1)$ of some particle $q_1$ in its tree. Similarly, $p_2$ knows the neighborhood code $C(q_2)$ of some particle $q_2$ in its tree. Then, $p_1$ sends $C(q_1)$ to $p_2$, and $p_2$ sends $C(q_2)$ to $p_1$. When they know both codes, they compare them. If $C(q_1)=C(q_2)$, the symmetry-breaking attempt fails, and the procedure ends. Otherwise, we can assume without loss of generality that $C(q_1)$, as a string, is lexicographically smaller than $C(q_2)$. So, $p_1$ becomes a Leader particle and the parent of $p_2$, while $p_2$ clears its Candidate and Eligible flags, and becomes a child of $p_1$. Of course, if $C(q_2)$ turns out to be lexicographically smaller, then $p_2$ becomes the Leader.

Suppose now that there are $k=3$ candidate leaders $p_1$, $p_2$, and $p_3$. Let each candidate $p_i$ know the neighborhood code $C(q_i)$ of some particle $q_i$ in its tree. As in the previous case, each candidate leader sends its code to the other two. When a candidate leader knows all three codes, it compares them. Without loss of generality, assume that $C(q_1)\leq C(q_2)\leq C(q_3)$ (lexicographically). There are three cases:
\begin{itemize}
\item If $C(q_1)<C(q_2)$, then $p_1$ becomes the unique Leader particle. $p_2$ and $p_3$ cease to be candidate leaders and become children of $p_1$.
\item If $C(q_1)=C(q_2)$ and $C(q_2)<C(q_3)$, then $p_3$ becomes the unique Leader particle. $p_1$ and $p_2$ cease to be candidate leaders and become children of $p_3$.
\item Otherwise, the three codes are equal, and the symmetry-breaking attempt fails.
\end{itemize}

\smallskip
\noindent\textbf{Main leader election algorithm.}
If there is only one candidate leader in $P$, it becomes a Leader particle, and the leader election phase ends there. So, let us assume that $P$ contains $k=2$ or $k=3$ pairwise adjacent candidate leaders.

Each candidate leader $p_i$ starts by sending its own neighborhood code $C(p_i)$ to the other candidate leaders, and the basic election procedure explained above is executed. If a Leader particle is elected, the phase ends.

If the election attempt fails, $p_i$ asks its first child $p'_i$ to fetch the neighborhood codes of the first particle in its subtree (i.e., $p'_i$ itself). When $p_i$ obtains this code, it uses it for another election attempt procedure. If the attempt fails, $p_i$ asks $p'_i$ for the code of another particle in its subtree, etc.

When $p'_i$ has exhausted its entire subtree, it sends a \emph{Subtree-Exhausted} message to $p_i$, which proceeds to querying its second child, and so on.

In turn, $p'_i$ and all other internal particles of the trees act similarly. When such a particle is instructed by its parent to fetch the neighborhood codes of the particles in its subtree, its starts with its own code, then queries its fist child, and the process continues recursively at all levels of the tree. When there are no more particles to query in the subtree, the particle sends a Subtree-Exhausted message to its parent.

If the candidate leader $p_i$ receives a Subtree-Exhausted message from its last child, and no leader has been elected, then $p_i$ becomes a Leader particle (as we will see in Theorem~\ref{tp4}, this means that $S_0$ is unbreakably $k$-symmetric).

\smallskip
\noindent\textbf{Canonical order of children.}
For this algorithm to work properly, we have to define a \emph{canonical order} in which a particle $p$ queries its children for their codes.

If $p$ is a candidate leader, its \emph{base neighbor} is defined as the unique candidate leader located in a vertex $v$ such that the port label $(\ell(p,v)+1)\mod 6$ does not correspond to a vertex occupied by another candidate leader. If $p$ is not a candidate leader, then its base neighbor is defined to be its parent.

The canonical order of the children of $p$ is the order in which they are found as $p$ scans its neighbors in clockwise order starting from its base neighbor. The ``clockwise order'' is defined according to the handedness of the candidate leaders, which $p$ is supposed to know, due to Theorem~\ref{tp3}.

\smallskip
\noindent\textbf{Synchronization.}
There is one last addition to make to the above protocol, which pertains to synchronization. Recall that, when a candidate leader $p_1$ obtains a code $C(q_1)$ from one of the particles in its tree, it sends it to the other candidate leaders. Then, $p_1$ waits until it has obtained codes from all other candidate leaders. Suppose that another candidate leader $p_2$ obtains the code $C(q_2)$ of a particle in its tree some stages after $p_1$. So, $p_2$ sends $C(q_2)$ to $p_1$ and receives $C(q_1)$ from it. Now $p_2$ has all the codes it needs, and it executes the election procedure, failing to elect a leader. Therefore, $p_2$ obtains a new code $C(q'_2)$ from another particle, and sends it to the other candidate leaders, including $p_1$. However, as $p_2$ was operating, the scheduler may have kept $p_1$ inactive: as a result, the message containing $C(q_2)$ was overwritten by the one containing $C(q'_2)$ before $p_1$ was able to read it. When $p_1$ is activated again, it compares $C(q_1)$ with $C(q'_2)$ (instead of $C(q_2)$), and it behaves incorrectly.

To avoid this desynchronization problem, we put a counter modulo 2 (i.e., a single bit) in the internal memory of each candidate leader. Whenever a candidate leader obtains a new code $C(q)$ from a particle in its tree, it increments the counter modulo 2 and it attaches its value to $C(q)$ before sending it to the other candidates.

Now, if a candidate leader $p_1$ receives a code with an unexpected counter bit from another candidate $p_2$, it implicitly knows that the previous election attempt has failed. In that case, $p_1$ obtains a new code from another particle in its tree, and proceeds with the protocol as usual.

On the other hand, if a leader is elected, there are no particular problems: as soon as a candidate leader $p$ realizes that an election procedure has succeeded, it transitions to the next phase, and communicates this information to the other candidate leaders, as explained at the beginning of Section~\ref{s:3}. While doing so, $p$ also adds information on who the Leader particle is, so that the other candidate leaders can change their internal variables consistently, even if they have failed to receive the last code from $p$.

\smallskip
\noindent\textbf{Correctness.}
\begin{theorem}\label{tp4}
Let $P$ be the system resulting from Theorem~\ref{tp3}, forming a shape $S_0$. If all particles of $P$ execute the leader election phase of the algorithm, then there is a stage $s$, reached after $O(n^2)$ rounds, where one of the two following conditions holds:
\begin{itemize}
\item There is a unique Leader particle in $P$, which is the root of a well-defined spanning tree of $P$.
\item There are $k=2$ or $k=3$ mutually adjacent Leader particles in $P$, and $S_0$ is unbreakably $k$-symmetric. Each Leader particle is the root of a well-defined tree: these $k$ trees collectively form a spanning forest of $P$ whose plane embedding has a $k$-fold rotational symmetry around the center of $S_0$.
\end{itemize}
At stage $s$, all non-Leader particles are non-Eligible. No particle moves in this phase.
\end{theorem}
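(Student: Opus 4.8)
The plan is to separate the several easy structural claims from the one genuinely substantive claim. That no particle moves is immediate: every rule of the leader election phase only sends messages or updates internal flags, so (since by Theorem~\ref{tp3} the system forms $S_0$ at the start of the phase) it keeps forming $S_0$ throughout. The assertion ``all non-Leader particles are non-Eligible at stage $s$'' is likewise bookkeeping: non-candidate particles cleared their Eligible flag back in the lattice consumption phase and never set it again, while a candidate that loses a tentative election clears both its Candidate and Eligible flags and attaches itself as a child of the winner. Thus the surviving Leaders are exactly the unbroken candidates, and whenever a candidate is absorbed its tree is re-rooted into the winner's tree, so the Tree/Parent structure stays a well-defined spanning tree (if one Leader survives) or a spanning forest (if $k$ survive).

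The heart of the matter is the dichotomy. I would define $\rho$ to be the rotation by $2\pi/k$ whose center is the common meeting point of the $k$ mutually adjacent candidate leaders: for $k=2$ the midpoint of the edge they share, for $k=3$ the center of the triangular face they bound. In either case this center is \emph{not} a vertex of $G_D$, so if $\rho$ turns out to be a symmetry of $S_0$ then $S_0$ is \emph{unbreakably} $k$-symmetric by definition. The argument then runs: either some tentative election succeeds, in which case a single Leader is elected and the first alternative holds; or every tentative election fails and the canonical traversals of all $k$ trees are exhausted with matching codes at every position, and I must show $\rho$ is a symmetry of $S_0$. For termination and the round bound, the canonical order is well defined because Theorem~\ref{tp3} makes all particles of a tree share the leaders' handedness (so ``clockwise'' is unambiguous) and the base neighbor anchors the scan; the modulo-$2$ counter attached to each code guarantees that, despite asynchrony, each leader always compares codes occupying the \emph{same} position of the traversal, so the $k$ traversals advance in lockstep. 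Fetching and relaying one code between a tree particle and its root costs $O(n)$ rounds, and at most $O(n)$ codes are fetched before a tree is exhausted, so the phase ends in $O(n^2)$ rounds.

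The crux is the implication: if every tentative election fails through the full traversal, then $\rho$ is a symmetry of $S_0$. I would prove this by induction along the lock-step canonical order, maintaining the invariant that the $j$-th particle $q$ visited in tree $i$ and the $j$-th particle $q'$ visited in tree $i+1$ (indices mod $k$) satisfy $q'=\rho(q)$, with $\rho$ carrying the parent of $q$ to the parent of $q'$. The base case is the roots $p_i\mapsto p_{i+1}$, which hold by the choice of the center of $\rho$ and the $\rho$-equivariance of the base-neighbor rule (here the agreed handedness is essential, since it fixes the sense of the scan consistently across the cyclically arranged leaders). For the inductive step, the failed comparison forces the two codes to coincide; since the parent directions already correspond under $\rho$, the children of $q$ and $q'$ then sit at corresponding directions and the occupied/empty pattern around $q$ maps to that around $q'$. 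Scanning clockwise from the corresponding base neighbor therefore matches the $j$-th child of $q$ with the $j$-th child of $q'$, and $\rho$ carries one to the other, extending the invariant. As the trees span $P$ and $\rho$ cyclically permutes them while preserving adjacency and occupancy, $\rho$ maps $S_0$ onto itself; hence $S_0$ is unbreakably $k$-symmetric and the plane embedding of the forest has the claimed $k$-fold symmetry.

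The step I expect to be the main obstacle is precisely this last induction: turning the purely local statement ``the two neighborhood codes are equal'' into the global geometric statement ``$\rho$ is an isometry of $S_0$.'' The delicate point is reference-frame bookkeeping—each particle reads its neighborhood in its own port labeling, so one must verify that code equality, read through the base-neighbor–anchored canonical order and the common handedness, really pins down the \emph{single} rotation $\rho$ rather than merely some port-preserving combinatorial correspondence. Making the base-neighbor rule exactly $\rho$-equivariant, and separately checking the $k=2$ and $k=3$ cases of where the center of $\rho$ falls, is where the care lies; everything else reduces to the structural observations above.
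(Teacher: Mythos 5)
Your proposal is correct and takes essentially the same route as the paper's proof: both define $\rho$ as the $k$-fold rotation about the center of the candidate-leader cluster (a non-vertex of $G_D$), both argue that the base-neighbor--anchored canonical order and the shared handedness keep the $k$ tree traversals comparing $\rho$-corresponding particles in lockstep so that total failure of all election attempts forces $\rho$-symmetry of the spanning forest and hence of $S_0$, and both obtain the $O(n^2)$ bound as $O(n)$ election attempts of $O(n)$ rounds each. The only difference is rigor, to your credit: the paper dismisses your ``crux'' with ``it is easy to prove by induction\dots'' (while noting, as you do, that $\rho$-symmetric particles may still produce unequal codes because of port-label offsets, which harmlessly yields a unique Leader), whereas you actually carry out the induction and flag the reference-frame bookkeeping needed to pin down the single rotation $\rho$.
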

\begin{proof}
Assume there are $k=2$ or $k=3$ candidate leaders at the beginning of this phase, because otherwise the theorem is trivial. Let $c$ be the center of the subsystem formed by the Candidate particles, and let $\rho$ be the $k$-fold rotation around $c$.

It is easy to prove by induction that the candidate leaders perform several tentative election procedures, each time with the neighborhood code of a new particle in their respective tree, until a Leader is elected or no more particles are left in the tree of some candidate leader. Moreover, the fact that all particles agree on the clockwise direction, the way the canonical order of children is defined, and the information contained in the neighborhood codes imply that the candidate leaders will always compare the codes of particles that are symmetric under $\rho$, until asymmetric particles are found.

So, a Leader particle will definitely be elected if there are two particles in the trees of two different candidate leaders whose neighborhoods are not symmetric under $\rho$. This necessarily happens if the trees of the spanning forest are not symmetric under $\rho$ (and it happens before a candidate leader runs out of particles in its tree). If the trees are symmetric, then in particular they have the same size, and $S_0$ is unbreakably $k$-symmetric. In this case, particles that are symmetric under $\rho$ may still produce different codes (because, in general, their codes are ``rotationally equivalent'', but not necessarily identical), which results in the election of a Leader. Otherwise, all election attempts will fail, the candidate leaders will run out of particles at the same time, and they will all become Leaders.

Since the phase terminates after $O(n)$ election attempts, each of which lasts $O(n)$ rounds, the whole phase takes $O(n^2)$ rounds.
\end{proof}

\subsection{Straightening Phase}\label{s:3.5}
At the beginning of this phase, there are $k=1$, $k=2$, or $k=3$ Leaders, each of which is the root of a tree of particles. These $k$ trees are rotated copies of each other, and the Leaders are pairwise adjacent.

The goal of each Leader is to coordinate the ``straightening'' of its tree. That is, in the final stage of this phase, the system must form $k$ straight line segments, each of which has a Leader located at an endpoint. Moreover, if $k>1$, each Leader must also lie on the extension of another of the $k$ segments.

\smallskip
\noindent\textbf{Choosing the directrices.}
Each Leader $p_i$ will choose a ray in the plane (i.e., a half-line) as its \emph{directrix} $\gamma_i$. By the end of the straightening phase, all particles will be located on these $k$ directrices.

If $k=1$, the unique Leader $p_1$ arbitrarily chooses a neighboring vertex $v$, and picks the ray from $p$ through $v$ as its directrix $\gamma_1$.

If $k=2$, there are two adjacent Leaders $p_1$ and $p_2$. $p_1$ chooses its neighbor $v$ that is opposite to $p_2$, and the ray from $p_1$ through $v$ is its directrix $\gamma_1$. On the other hand, $p_2$ chooses the symmetric ray as its directrix $\gamma_2$.

If $k=3$, there are three pairwise adjacent Leaders $p_1$, $p_2$, and $p_3$. Each Leader $p_i$ picks its ``left'' neighboring Leader $p_j$ (according to its handedness), and lets $v_i$ be its neighboring vertex that is opposite to $p_j$. Then, $p_i$ defines its directrix $\gamma_i$ to be the ray from $p_i$ through $v_i$. Since the Leaders have the same handedness (see Theorem~\ref{tp3}), their three directrices are pairwise disjoint and form angles of $120^\circ$ with each other.

\smallskip
\noindent\textbf{Basic pulling procedure.}
For this sub-protocol, we assume to have a linearly ordered \emph{chain} of particles $Q\subseteq P$, the first of which is called the \emph{Pioneer} particle. Each particle of $Q$ except the Pioneer has a unique \emph{Predecessor} in $Q$, located in an adjacent vertex of $G_D$. Similarly, each particle except the last one has a unique \emph{Follower} in $Q$, located in an adjacent particle. All particles of $Q$ are initially contracted.

Say the Pioneer particle $q$ intends to move into a neighboring unoccupied vertex $v$, which is called its \emph{destination}. The \emph{pulling} procedure will make $q$ move into $v$, and will subsequently make each Follower move into the vertex previously occupied by it Predecessor. At the end of the procedure, the particles of $Q$ will still form a chain with the same Follower-Predecessor relationships, and all particles will be contracted.

To begin with, $q$ sends a \emph{Follow-Me} message to its Follower $q'$, and then it expands toward $v$ and contracts again in $v$. $q'$ will read the message from $q$ and will send a similar Follow-Me message to its Follower $q''$. Then, as soon as $q'$ sees that the original location of $q$ is empty, it expands toward it and contracts again.

The procedure continues in this fashion until the last particle of $Q$ has moved and contracted into its Predecessor's original location. At this point, the last particle sends a \emph{Movement-Done} message to its Predecessor, which reads it and forwards it to its Predecessor, and so on. When the Pioneer receives a Movement-Done message, the procedure ends.

\smallskip
\noindent\textbf{Main straightening algorithm.}
The idea of this phase is that each Leader $p_i$ will identify a directrix $\gamma_i$ (as explained above), and a Pioneer $q_i$ will walk along $\gamma_i$, pulling particles onto it from the tree $T_i$ of $p_i$ (executing the pulling procedure described above). While the Pioneer is doing that, the Leader remains in place, except perhaps for a few stages, when it is part of a chain of particles that is being pulled by the Pioneer. Eventually, all the particles of $T_i$ will form a line segment on the directrix, and the Leader will be at an endpoint of such a segment, opposite to the Pioneer.

If $q_i$ encounters another particle $r$ on $\gamma_i$, belonging to some tree $T_j$, it ``transfers'' its role to $r$, and ``claims'' the subtree $T'_j$ of $T_j$ hanging from $r$, detaching $r$ from its parent. The next time the new Pioneer $r$ has to pull a chain of particles, it will pull it from $T'_j$. For this reason, $r$ is called an \emph{entry point} of the directrix. This algorithm is summarized in Figure~\ref{f:straight}.

\begin{figure}[ht!]
\begin{center}
  \subfloat[Each Pioneer is obstructed by a particle on its directrix.]{
  \includegraphics[width=0.475\textwidth]{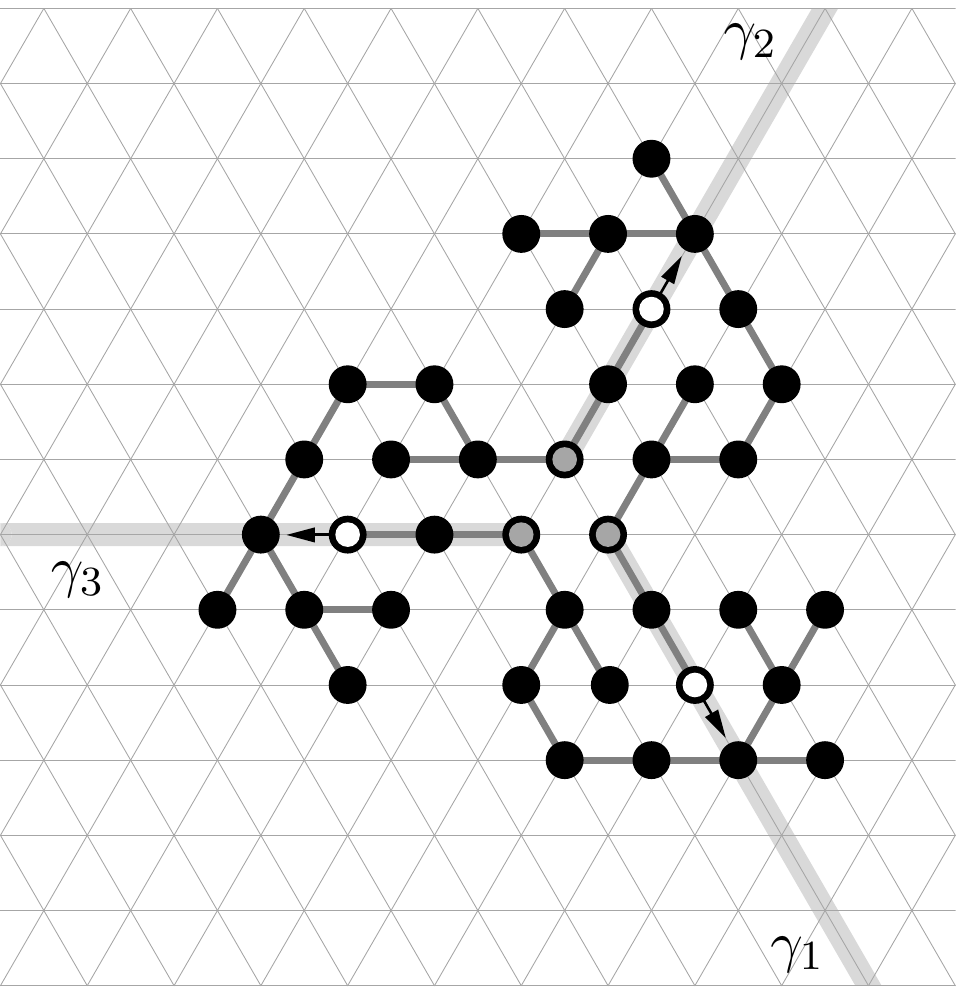}
  \label{f:straight1}
  }
  \hfill
  \subfloat[The obstructing particles detach from their Parents and become the new Pioneers.]{
  \includegraphics[width=0.475\textwidth]{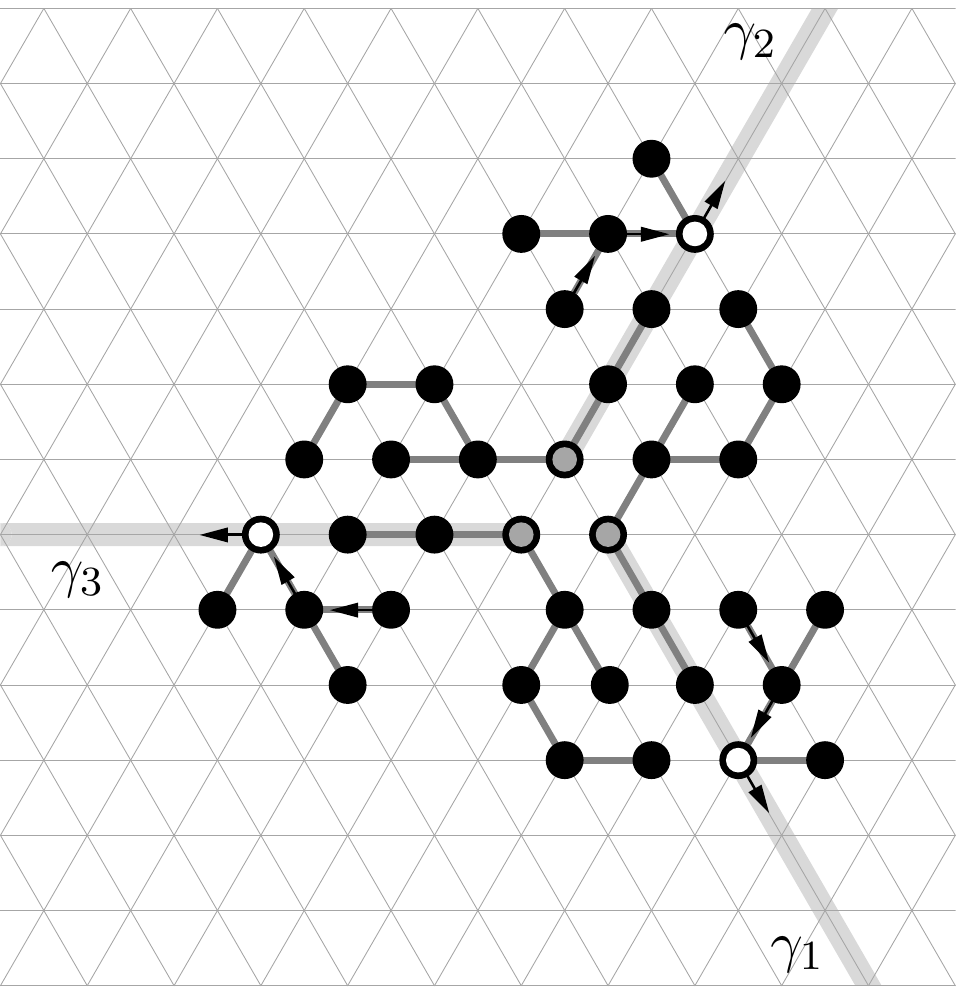}
  \label{f:straight2}
  }
  \\
  \subfloat[Each new Pioneer pulls a chain of particles from the closest entry point.]{
  \includegraphics[width=0.475\textwidth]{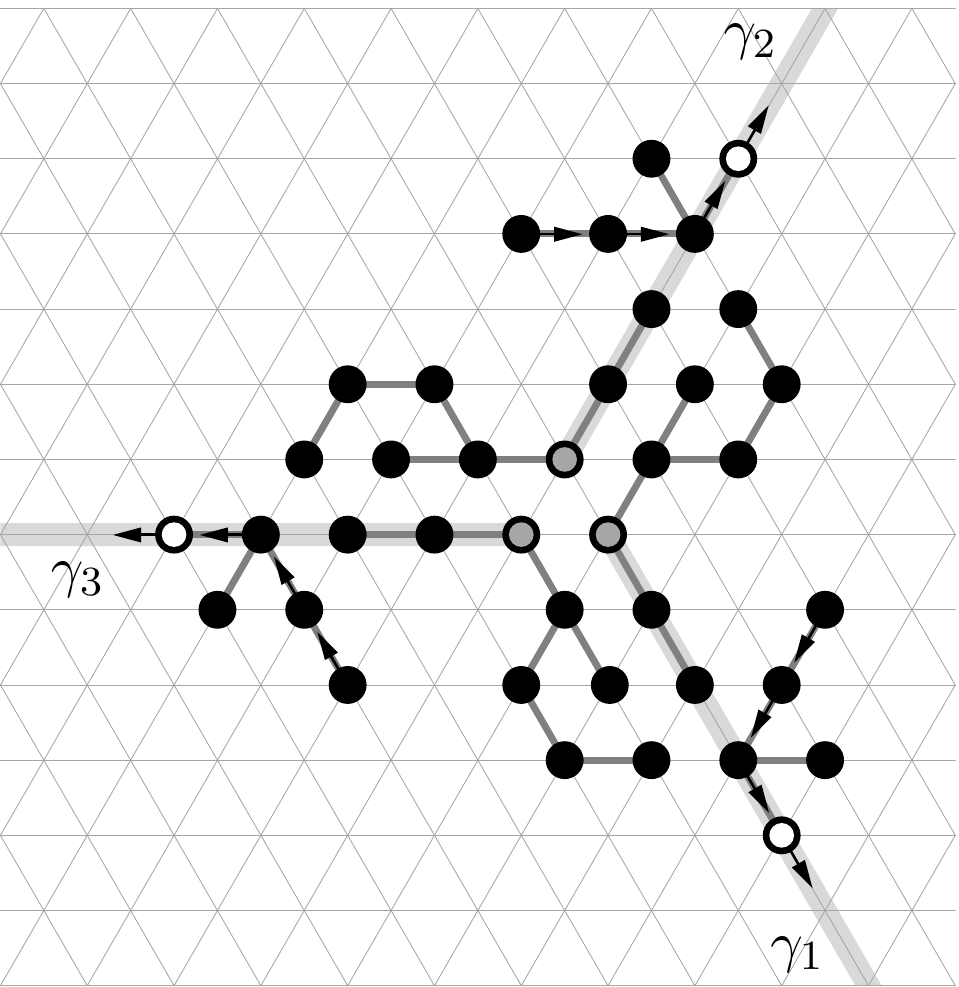}
  \label{f:straight3}
  }
\end{center}
\caption{Three stages of the straightening phase. The particles in gray are the Leaders; the ones in white are the Pioneers. The edges of the spanning forest are drawn in dark gray, and the arrows indicate where the particles are directed in the pulling procedure.}
\label{f:straight}
\end{figure}

Every time a Pioneer advances along its directrix, it notifies its Leader, who will synchronize with the other Leaders. This is to ensure that the straightening of every tree proceeds at the same pace.

Technically, a Pioneer is identified by an internal \emph{Pioneer} flag, and an entry point is identified by an \emph{Entry-Point} flag. Both flags are initially not set. As the phase starts, each Leader sets its own Pioneer and Entry-Point flags. Then, the following operations are repeated until the end of the phase.
\begin{itemize}
\item The algorithm works in \emph{steps}: when a Leader decides to start a new step, it sends a \emph{Pull} message to the next particle on its directrix. This message is forwarded by the particles along the directrix, until it reaches the Pioneer (of course, if the Leader and the Pioneer are the same particle, no message is actually sent).
\item When the Pioneer $q$ receives the Pull message, it looks at the next vertex $v$ along the directrix (i.e., in the direction opposite to the Leader). Suppose first that $v$ is occupied by a (contracted) particle $r$.
\begin{itemize}
\item $q$ sends a \emph{You-are-a-Pioneer} message to $r$ and clears its own Pioneer flag.
\item $r$ reads the message and becomes a Pioneer. If $r$ has children, it also sets its own Entry-Point flag, becoming an entry point.
\item $r$ informs its parent $r'$ that it is no longer its child, and erases its own Parent variable.
\item $r'$ modifies its internal variables accordingly and sends a message back to $r$.
\item When the new Pioneer $r$ receives this message, it proceeds with the algorithm.
\end{itemize}
\item Suppose now that $v$ (i.e., the next vertex along the directrix) is unoccupied.
\begin{itemize}
\item The Pioneer $q$ starts executing the pulling procedure with destination $v$.
\item When $q$ reaches $v$, it makes sure that its Entry-Point flag is not set.
\item When the Follow-Me message that is forwarded along the directrix reaches the first entry point $e$ (possibly, $e$ is the Pioneer itself), $e$ forwards the message to its first child according to the canonical order defined in Section~\ref{s:3.4}. Similarly, whenever a particle in the subtree $T$ hanging from $e$ receives a Follow-Me message from its parent, it forwards it to its first child according to the canonical order. 
\item When $e$ moves, it also clears its Entry-Point flag and sends a \emph{You-are-an-Entry-Point} message to its Follower $e'$ (i.e., its first child).
\item If $e'$ has children, it sets its own Entry-Point flag upon receiving this message. Otherwise, $e'$ does not become an entry point, and claims the next particle on the directrix as its Follower (if such a particle exists).
\item If $e$ is a Leader, it clears its Leader flag, while $e'$ sets its own and becomes the new Leader.
\item When a particle $t$ of $T$ sends a Follow-Me message to its first child $t'$, it also attaches its neighborhood code $C(t)$ to the message, as defined in Section~\ref{s:3.4}. $t'$ memorizes the code.
\item When $t'$ moves to take the place of $t$, it updates its internal variables according to $C(t)$. Of course, if $t'$ was a leaf of $T$, it does not include its previous location in the list of its children.
\item When a leaf of $T$ moves to take the place of its Predecessor, it sends a Movement-Done message to its parent, which is forwarded to the Pioneer. When the Pioneer receives the message, it proceeds with the algorithm.
\end{itemize}
\item The Pioneer sends a \emph{More-Entry-Points?} message along its directrix, which is forwarded by the particles lying on it, until it reaches the Leader.
\item As the particles (including the Pioneer) forward the More-Entry-Points? message, they add information to it, i.e., they set a flag in the message if they are entry points of the directrix.
\item When the Leader reads the More-Entry-Points? message, it knows if the phase is over (i.e., there are no more entry points on its directrix), or if it has to start another step (i.e., the Pioneer has to pull more particles).
\item If the phase is not over, the Leader $p$ synchronizes with the other Leaders (of course, if $k=1$, this step is skipped). If any of the other Leaders is not found in its usual position (because it is still executing a pulling procedure and is being replaced by a new Leader), $p$ waits for the new Leader to appear. The actual synchronization is done by exchanging \emph{Next-Straightening-Step} messages, together with the value of a counter modulo 2, as described in the ``Synchronization'' paragraph of Section~\ref{s:3.4}. When $p$ receives such messages from all other Leaders, its starts the next step. 
\end{itemize}

\smallskip
\noindent\textbf{Correctness.}
\begin{theorem}\label{tp5}
Let $P$ be the system resulting from Theorem~\ref{tp4}, with $k$ Leader particles. If all particles of $P$ execute the straightening phase of the algorithm, then there is a stage, reached after $O(n^2)$ rounds, where all particles are contracted, the $k$ Leaders are pairwise adjacent, and the system forms $k$ equally long straight line segments, each of which has a Leader located at an endpoint. Moreover, if $k>1$, each such segment has a second Leader lying on its extension. In this phase, at most $O(n^2)$ moves are performed in total.
\end{theorem}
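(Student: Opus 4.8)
The plan is to prove the statement through a loop invariant maintained at the boundaries between consecutive \emph{steps}, combined with a rotational-symmetry argument for the case $k>1$. First I would record a preliminary fact about the basic pulling procedure: given a chain $Q$ with a contracted Pioneer and an unoccupied destination $v$ adjacent to it, the procedure terminates with every particle of $Q$ contracted, each particle having moved into the vertex formerly occupied by its Predecessor, with the Predecessor--Follower structure preserved and exactly one leaf of the pulled subtree consumed onto the directrix. This follows by a straightforward induction along $Q$: a Follower expands only after it has seen its Predecessor's former vertex become empty, so no two particles of $Q$ ever contend the same vertex and the chain advances rigidly. Deadlock-freedom is immediate, since the destination is genuinely empty and the vertices vacated behind the chain are never re-occupied by outsiders during the pull; the cost is $O(|Q|)$ rounds and $O(|Q|)$ moves.

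With this in hand I would maintain the following invariant at each step boundary, when all particles are contracted: on every directrix $\gamma_i$ the occupied vertices from the Leader $p_i$ out to the current Pioneer form a contiguous \emph{straightened prefix}; the remaining particles hang as trees from \emph{entry points} lying on this prefix; and the inner endpoint of the prefix is the fixed original location of $p_i$ and carries the Leader flag. The invariant holds initially (prefix $=\{p_i\}$, single entry point $p_i$, whole tree $T_i$ hanging from it). A single step preserves it in two cases. If the next directrix vertex $v$ is occupied by $r$, the Pioneer role is transferred to $r$, which detaches from its parent and, if it has children, becomes an entry point carrying its hanging subtree; no particle moves, the prefix grows by one vertex, and contiguity is preserved because $v$ is the immediate successor of the old Pioneer. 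If $v$ is empty, the Pioneer pulls one chain: the prefix shifts forward by one, the first entry point $e$ is replaced by its next child in canonical order, one off-prefix particle joins the prefix, and when $e=p_i$ the Leader flag migrates to the child that now occupies the fixed inner endpoint, restoring the last clause. Since each step grows the prefix by exactly one particle and the final prefix has length at most $n$, there are $O(n)$ steps per directrix; each step is one pulling procedure plus $O(n)$-round message round-trips (the \emph{Pull} and \emph{More-Entry-Points?} traversals and the synchronization), giving $O(n^2)$ rounds, while each pull moves a chain of length $O(n)$ and role transfers cost no moves, giving $O(n^2)$ moves in total. The phase ends, as detected by the \emph{More-Entry-Points?} query, precisely when no entry points remain, i.e.\ when every particle lies on some $\gamma_i$ and is contracted.

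For $k=1$ this already yields a single contracted segment with the Leader at its endpoint. For $k>1$ the remaining claims follow from a $\rho$-symmetry invariant maintained at step boundaries, in the spirit of the proof of Theorem~\ref{t:neg}. By Theorem~\ref{tp4} the initial spanning forest is symmetric under the $k$-fold rotation $\rho$ about the center $c$ of $S_0$, and by Theorem~\ref{tp3} all particles share a handedness, so the directrices are the $\rho$-images of one another (forming $120^\circ$ angles when $k=3$). Because the \emph{Next-Straightening-Step} barrier forces the $k$ Leaders to proceed in lockstep at step granularity, and all trees execute identical operations under $\rho$ (the canonical order of children is defined through the common handedness), the configuration is $\rho$-symmetric at every step boundary; hence the $k$ prefixes stay equally long and each directrix owns equally many remaining particles, so all $k$ segments reach the common length $n/k$ simultaneously. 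The same symmetry, together with the fact that $c$ is not a vertex of $G_D$ so that $\rho$ fixes no vertex, shows that the vertex sets reachable by the $k$ pulls during one step are pairwise disjoint $\rho$-images; thus even under asynchronous activation no two particles contend the same vertex. Finally, the directrix choice keeps each Leader at its original, pairwise-adjacent position and makes the backward extension of each $\gamma_i$ pass through another Leader, giving the last claim.

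The main obstacle I anticipate is exactly this interaction between asynchrony and symmetry: a Pioneer of one tree may reach, on its directrix, a particle belonging to a \emph{different} tree and ``claim'' the subtree hanging from it, which a priori threatens both the symmetry and the equal-length property. The resolution is that, by the $\rho$-symmetric invariant, any such claim is mirrored simultaneously on the $\rho$-image directrix, so each tree loses to one neighbor exactly the rotated copy of what it gains from another, keeping the per-directrix particle count balanced. Verifying that the entry-point bookkeeping --- the migration of the Entry-Point and Leader flags and the detachment of the claimed particle from its former parent --- keeps the contiguous-prefix/hanging-forest invariant intact through these cross-tree transfers is the most delicate part of the argument, but it reduces to a routine case check on the local messages exchanged within a single step.
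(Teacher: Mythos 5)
Your proposal is correct and follows essentially the same route as the paper's proof: induction over synchronized steps maintaining the $k$-fold rotational symmetry of the forest (so that cross-tree claims by Pioneers are mirrored and chains stay disjoint), combined with the count of $O(n)$ pulling procedures, each costing $O(n)$ moves and rounds. Your explicit contiguous-prefix/hanging-forest invariant is a slightly more formal packaging of what the paper argues implicitly, and the only detail you wave at that the paper spells out is why no messages are lost during pulls (every addressee is a still, contracted particle).
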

\begin{proof}
Suppose that $k>1$. By assumption, as the phase starts, the spanning forest of $P$ is symmetric under a $k$-fold rotation of the plane. Also, each tree of the forest is attached to a directrix by an entry point (initially, only the Leaders are entry points). We can easily prove by induction that these properties are preserved after each step of the algorithm.

This is because the $k$ Leaders wait for each other at the end of every step, until they are all ready to start the next step. Moreover, if a Pioneer encounters a vertex occupied by a particle $r$ on its directrix, then so do all other Pioneers, and vice versa. Additionally, the subtree hanging from $r$ is symmetric to the ones that are hanging from the particles that are encountered by the other $k-1$ Pioneers during the same step.

Note that, when $r$ becomes an entry point, the tree to which it belongs splits in two, because $r$ is detached from its parent, and the whole subtree hanging from $r$ is attached to the directrix, as $r$ becomes a new entry point. However, this keeps the structure connected.

On the other hand, when a pulling procedure is executed, all the particles in the chain that belong to a tree choose their Follower according to the canonical order of their children. Hence, as $k$ pulling procedures are executed by the $k$ Pioneers and their chains during a step, symmetric particles on different chains move in symmetric ways, and the overall symmetry of the system is preserved. Again, this keeps the structure connected.

Because of this symmetry, no conflicts between different Pioneers can ever arise. For instance, it is impossible for a leaf $f$ of a tree to be pulled along the chain led by a Pioneer while another Pioneer is sending a You-are-a-Pioneer message to $f$. Also, the $k$ pulling procedures that are executed in the same step involve disjoint chains: indeed, the $k$ directrices are disjoint, and the subtrees hanging from different entry points are disjoint.

Also observe that, even as the chains move, no messages are ever lost. This is because, at any time, at most $k$ independent pulling procedures are being executed. In each pulling procedure, every time a message is sent, the addressee is a still and contracted particle that necessarily receives and reads the message as soon as it is activated. Additionally, if a Leader is not in place because it is being substituted by its Follower, the other Leaders do not send synchronization messages its way, but wait until the new Leader is in position.

Hence, each step correctly terminates and results in the advancement of every Pioneer and the addition of a new particle to every directrix. If the Leaders order the beginning of a new step, it is because the More-Entry-Points? messages have revealed the presence of more entry points on the directrices. The straightening phase only ends when no more entry points are found: since the structure is connected, this means that all particles are indeed aligned on the directrices, forming $k$ line segments. By the symmetry of the system, these line segments must have the same length.

To prove that at most $O(n^2)$ moves are made in total, observe that each pulling procedure causes a new particle to join the portion of a directrix located between a Leader and a Pioneer. The particles located in this portion never leave the directrix, but only move along it. So, at most $n$ pulling procedures are performed. Also, each pulling procedure involves at most $n$ particles, and causes each of them to perform a single expansion and a single contraction. Since no other moves are made by the system, the $O(n^2)$ bound follows.

Similarly, since a pulling procedure is completed after $O(n)$ rounds, the whole straightening phase takes $O(n^2)$ rounds.
\end{proof}

\subsection{Role Assignment Phase}\label{s:3.6}
At the end of the straightening phase, the system forms $k$ equally long line segments, arranged as described in Theorem~\ref{tp5}, each of which contains a Leader particle. If $k>1$, it means that the shape $S_0$ that the particles originally formed was unbreakably $k$-symmetric, as Theorems~\ref{tp1}--\ref{tp5} summarize. Due to Theorem~\ref{t:neg}, if this is the case, we have to assume that the ``final shape'' $S_F$ that the system has to form is also unbreakably $k$-symmetric.

Recall that a representation of $S_F$ is given to all the particles as input, and resides in their internal memory since the first stage of the execution. For the purpose of the universal shape formation algorithm, we assume the size of $S_F$ to be a constant with respect to the number of particles in the system, $n$ (cf.~Section~\ref{s:2}). Also, we may assume $S_F$ to be a minimal shape: if it is not, the particles replace its representation with that of a minimal shape equivalent to $S_F$, which has a smaller size and is readily computable. Finally, since the handedness agreement phase has been completed, all the particles can be assumed to have the same handedness (see Theorem~\ref{tp3}). Without loss of generality, we assume that their notion of clockwise direction coincides with the ``correct'' one, i.e., the one defined by the cross product of vectors in $\mathbb R^2$.

The goal of the role assignment phase is twofold:
\begin{itemize}
\item The particles determine the scale of the shape $S'_F$, equivalent to $S_F$, that they are going to form. Indeed, if $n$ is large enough, there is a scaled-up copy of $S_F$ that can be formed by exactly $n$ particles, keeping in mind that, in the final configuration, particles can be contracted or expanded. 
\item Each particle is assigned a constant-size identifier, describing which element of $S_F$ (i.e., a vertex, the interior of an edge, or the interior of a triangle) the particle is going to form in the shape composition phase. Recall that we are assuming $S_F$ to be composed of a constant (i.e., independent of $n$) number of triangles and edges, in accordance to the definition of universal shape formation (see Section~\ref{s:2}). The size of the identifier is proportional to the size of $S_F$, and can therefore be stored in a single particle's internal memory.
\end{itemize}

\smallskip
\noindent\textbf{Subdividing the final shape into elements.}
Recall that a shape is the union of finitely many edges and faces of $G_D$. Of course, all edges of $G_D$ have length $1$, and all faces of $G_D$ are equilateral triangles of side length $1$. Let the final shape $S_F$ be of the form $S_F=e_1\cup\dots\cup e_j\cup t_1\cup\dots\cup t_{j'}$, where the $e_i$'s are edges of $G_D$ and the $t_i$'s are (triangular) faces of $G_D$.

Let $S'_F$ be a shape equivalent to $S_F$. By Lemma~\ref{l:shapescale}, the scale of $S'_F$ is a positive integer $\lambda$ (recall that $S_F$ is minimal). That is, there is a similarity transformation $\sigma\colon\mathbb R^2\to\mathbb R^2$ such that $\sigma(e_i)$ is a segment of length $\lambda$ (i.e., it is the union of $\lambda$ consecutive segments of $G_D$) contained in $S'_F$ and $\sigma(t_i)$ is an equilateral triangle of side length $\lambda$ contained in $S'_F$.

Let $B$ be the set of vertices of $G_D$ that are contained in $S'_F$. We partition $B$ into three families of \emph{elements} as follows:
\begin{itemize}
\item If $v$ is a vertex of $G_D$ contained in $S_F$, then $\sigma(v)$ constitutes a \emph{super-vertex} of $S'_F$.
\item For every $e_i$, the vertices of $G_D$ that are contained in $\sigma(e_i)$ and are not super-vertices of $S'_F$ constitute a \emph{super-edge} of $S'_F$. Similarly, for every side $s$ of every triangle $t_i$, the vertices of $G_D$ that are contained in $\sigma(s)$ and are not in a super-vertex of $S'_F$ constitute a super-edge of $S'_F$.
\item For every $t_i$, the vertices of $G_D$ that are contained in $\sigma(t_i)$ and are not super-vertices of $S'_F$ or contained in super-edges of $S'_F$ constitute a \emph{super-triangle} of $S'_F$.
\end{itemize}
Observe that every super-vertex of $S'_F$ is a vertex of $G_D$, every super-edge of $S'_F$ is a set of $\lambda-1$ consecutive vertices of $G_D$, and every super-triangle of $S'_F$ is a set of $(\lambda-1)(\lambda-2)/2$ vertices of $G_D$ whose convex hull is an equilateral triangle of side length $\max\{0,\lambda-3\}$.

\begin{figure}[ht!]
\begin{center}
  \subfloat[An unbreakably $2$-symmetric shape with scale $5$ with a minimal equivalent shape consisting of two adjacent faces and two dangling edges]{
  \includegraphics[width=0.65\textwidth]{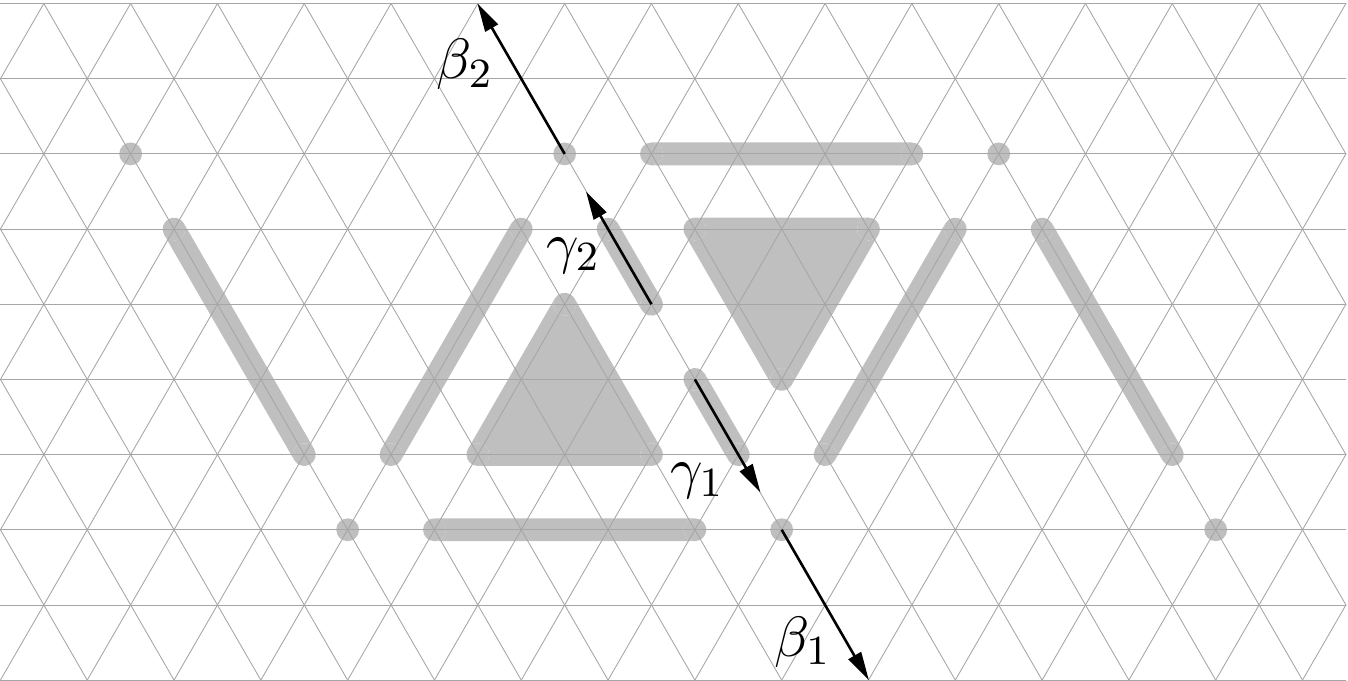}
  \label{f:elements1}
  }
  \\
  \subfloat[An unbreakably $3$-symmetric shape with scale $13$ with a minimal equivalent shape consisting of a single face]{
  \includegraphics[width=0.65\textwidth]{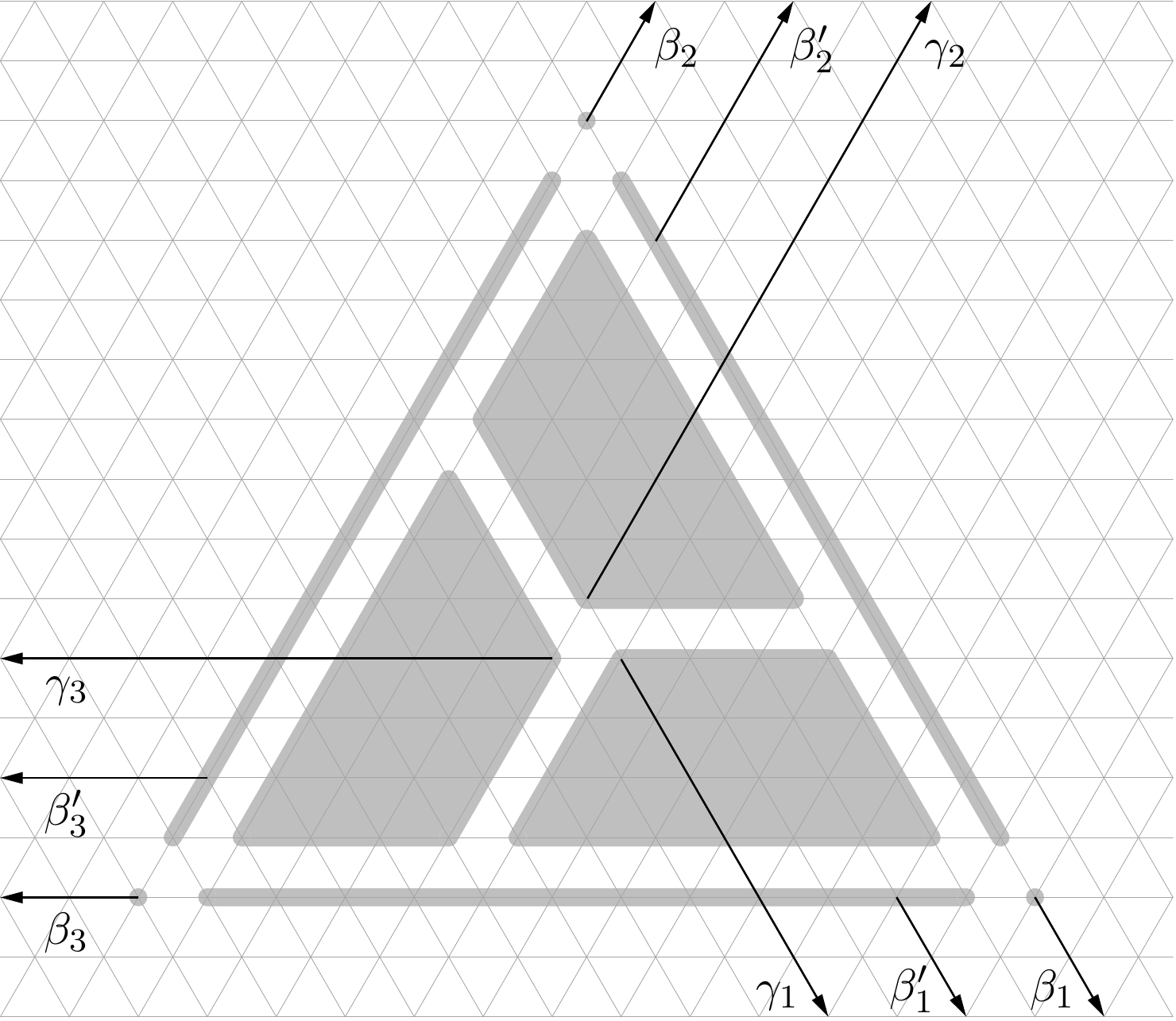}
  \label{f:elements2}
  }
\end{center}
\caption{Subdivision into elements (gray blobs) of unbreakably $k$-symmetric shapes. The directrices, the backbone, and the co-backbone are also represented.}
\label{f:elements}
\end{figure}

There is a small exception to our previous definition of element. Suppose that the system executing the role assignment phase has $k>1$ Leaders, and therefore $S_F$ is unbreakably $k$-symmetric. Suppose that $S'_F$ is unbreakably $k$-symmetric, as well: so, by Lemma~\ref{l:shapemin}, $\lambda$ is not a multiple of $k$. Finally, suppose that $S'_F$ contains its own center. We have two cases:
\begin{itemize}
\item If $k=2$, then the center of $S'_F$ is located in the midpoint of a super-edge $e$ consisting of an even number of vertices of $G_D$ (see Figure~\ref{f:elements1}). So, $e$ is divided by its midpoint into two \emph{partial super-edges} $e'$ and $e''$. In this case, $e$ is not an element of $S'_F$, but $e'$ and $e''$ are.
\item If $k=3$, then the center $c$ of $S'_F$ is located in the center of a super-triangle $t$ consisting of a number of vertices of $G_D$ that is a multiple of $3$ (see Figure~\ref{f:elements2}). Let $v_1$, $v_2$, $v_3$ be the vertices of $t$, taken in counterclockwise order. Let $\zeta$ be the ray emanating from $c$ in the direction of the vector $\overrightarrow{v_1v_2}$, and let $\zeta'$ and $\zeta''$ be the two rays emanating from $c$ and forming angles of $120^\circ$ with $\zeta$. These three rays partition $t$ into three symmetric \emph{partial super-triangles} $t'$, $t''$, and $t'''$ (note that no vertex of $G_D$ lies on any of these rays). In this case, $t$ is not an element of $S'_F$, but $t'$, $t''$, and $t'''$ are.
\end{itemize}
Observe that the set $\zeta\cup\zeta'\cup\zeta''$ has a 3-fold rotational symmetry, and so does the partition of $t$ into the elements $t'$, $t''$, and $t'''$. Given $t$, different particles may disagree on which vertex is $v_1$ and which vertex is $v_2$, and thus they may disagree on the orientation of $\zeta$. However, since all particles have the same handedness, they agree on the clockwise direction: so they agree on $\zeta\cup\zeta'\cup\zeta''$, and therefore also on the partition of $t$ into elements.

We denote by $m_v$ the number of super-vertices of $S'_F$, by $m_e$ the number of its super-edges, and by $m_t$ the number of its super-triangles. Of course, these numbers are independent of the scale of $S'_F$, and only depend on $S_F$.

By definition, forming $S'_F$ means occupying all vertices of $B$ with particles. This is equivalent to forming all super-vertices, all (partial) super-edges, and all (partial) super-triangles of $S'_F$, i.e., all the elements of $S'_F$.

\smallskip
\noindent\textbf{Combinatorial adjacency between elements.}
For the next part of the algorithm, we have to define a symmetric \emph{combinatorial adjacency} relation between elements of $S'_F$. This is slightly different from the relation induced by the neighborhood of the vertices of $G_D$ that constitute the elements.

The combinatorial adjacency rules are as follows:
\begin{itemize}
\item A super-vertex located in a vertex $v$ of $G_D$ and a (partial) super-edge $e$ are combinatorially adjacent if $e$ has an endpoint that neighbors $v$.
\item A (partial) super-edge $e$ and a (partial) super-triangle $t$ are combinatorially adjacent if every vertex of $G_D$ that is in $e$ has a neighbor in $t$.
\end{itemize}

Note that this relation induces a bipartite graph on the elements of $S'_F$: combinatorial adjacency only holds between a (partial) super-edge and a super-vertex or a (partial) super-triangle, and never between elements of the same kind or between super-vertices or (partial) super-triangles.

\smallskip
\noindent\textbf{Subdividing the elements among Leaders.}
Suppose that there are $k>1$ Leaders, and the shape $S'_F$ (similar to $S_F$) is unbreakably $k$-symmetric. We are going to show how each Leader selects the elements of $S'_F$ that the particles on its directrix will form in the shape composition phase. The result of this selection is exemplified in Figure~\ref{f:selection}.

Let $\sigma$ be a similarity transformation that maps $S_F$ to $S'_F$, and let $c$ be the center of $S_F$. We will assume the scale of $S'_F$ to be $\lambda\geq 4$.

We will first define $k$ rays, called the \emph{backbone} of $S'_F$ (see Figure~\ref{f:elements}). The backbone is an important structure that will be used extensively in the shape composition phase of the algorithm. Additionally, if $k=3$, we will also define a \emph{co-backbone} of $S'_F$, which is another set of $k$ rays that will only be used in the present paragraph. The definitions are as follows.
\begin{itemize}
\item If $k=2$, then $c$ is located in the midpoint of an edge $e$ of $G_D$. Let $v_1$ and $v_2$ be the endpoints of the segment $\sigma(e)$. One ray $\beta_1$ of the backbone is defined as the ray emanating from $v_1$ in the direction opposite to $v_2$. The other ray $\beta_2$ of the backbone emanates from $v_2$ in the direction opposite to $v_1$.
\item If $k=3$, then $c$ is located in the center of a triangular face $t$ of $G_D$. Let $v_1$, $v_2$, $v_3$ be the vertices of the triangle $\sigma(t)$, taken in counterclockwise order. One ray $\beta_1$ of the backbone is defined as the ray emanating from $v_1$ in the direction opposite to $v_2$. Similarly, the second ray $\beta_2$ emanates from $v_2$ in the direction opposite to $v_3$, and the third ray $\beta_3$ emanates from $v_3$ in the direction opposite to $v_1$.

The first ray $\beta'_1$ of the co-backbone is obtained by translating $\beta_1$ by $2$ in the direction parallel to the vector $\overrightarrow{v_1v_3}$. Similarly, the second ray $\beta'_2$ is obtained by translating $\beta_2$ by $2$ in the direction parallel to the vector $\overrightarrow{v_2v_1}$, and the third ray $\beta'_3$ is obtained by translating $\beta_3$ by $2$ in the direction parallel to the vector $\overrightarrow{v_3v_2}$.
\end{itemize}

Suppose that there is a bijection between Leader particles and rays of the backbone upon which all particles agree. Without loss of generality, let us say that the Leader $p_i$ ``claims'' the ray $\beta_i$ of the backbone of $S'_F$, for $1\leq i\leq k$. If $k=3$, the Leader $p_i$ also claims the ray $\beta'_i$ of the co-backbone.

Now, each Leader $p_i$ selects the elements of $S'_F$ that are fully contained in its own ray $\beta_i$ of the backbone. If $k=3$, then $p_i$ also selects the elements of $S'_F$ that have a non-empty intersection with its own ray $\beta'_i$ of the co-backbone (recall that $\lambda\geq 4$, hence the super-edges consist of at least two points, and the super-triangles consist of at least one point). Furthermore, $p_i$ selects the unique partial super-edge or partial super-triangle of $S'_F$ (depending on whether $k=2$ or $k=3$) that is closest to $\beta_i$.

Then, each Leader repeatedly selects an element of $S'_F$ that is combinatorially adjacent to an element that it has already selected and that has not been selected by any Leader, yet. While doing so, it makes sure that, if it has selected a super-vertex located on the backbone, then it also selects a (partial) super-edge that is combinatorially adjacent to it (in other words, there must be no ``isolated'' super-vertices in its selection). It is easy to see that selecting elements in this fashion is always possible.

\begin{figure}[ht!]
\begin{center}
\includegraphics[scale=0.7]{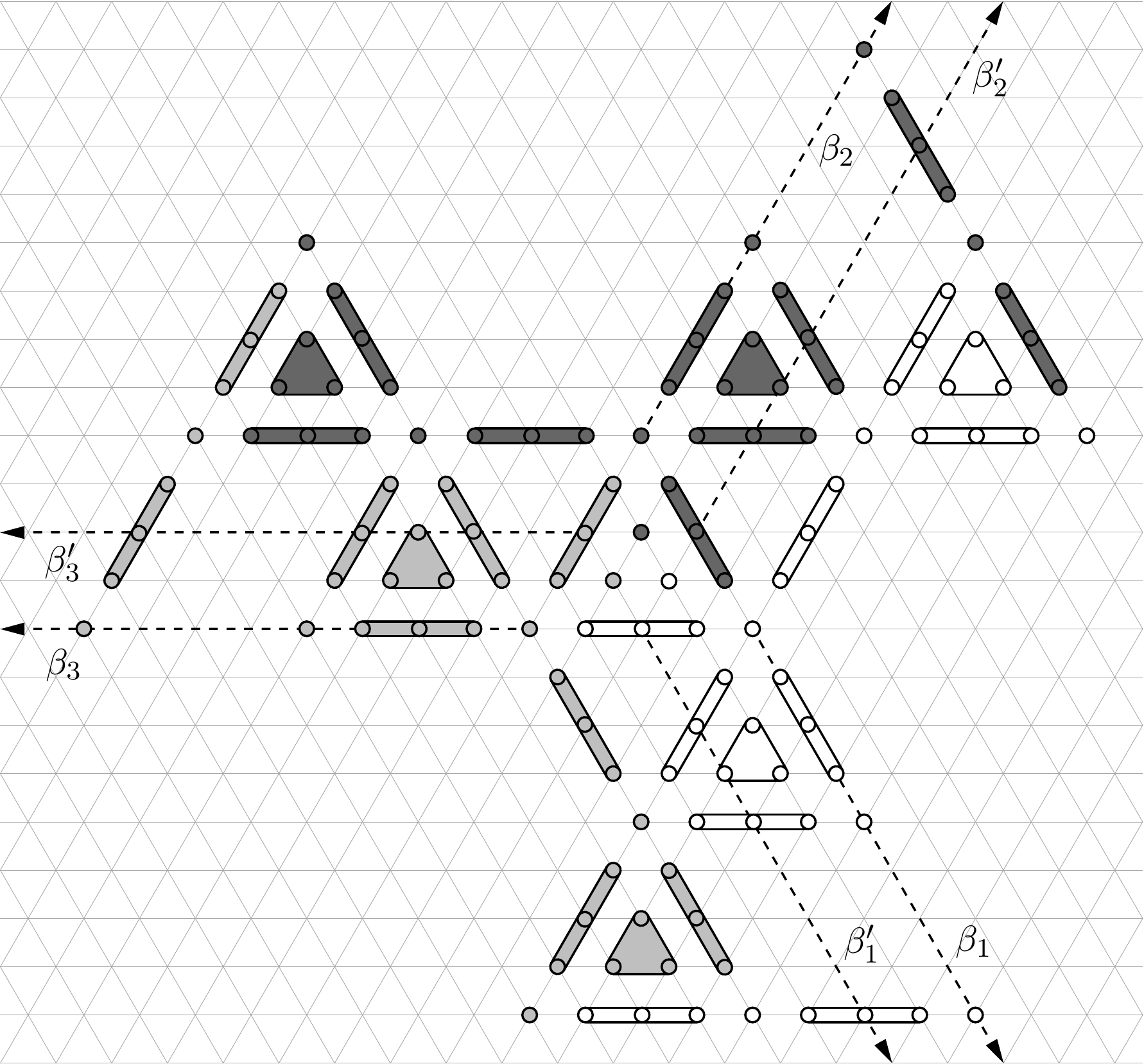}
\end{center}
\caption{The elements of an unbreakably $3$-symmetric shape with a possible subdivision among leaders. Blobs of the same color represent elements selected by the same leader.}
\label{f:selection}
\end{figure}

The actual selection algorithm is not important, as long as it is deterministic and only depends on the combinatorial adjacency relation between elements of $S'_F$. Note that, since the algorithm is deterministic and $S'_F$ is rotationally symmetric, the selections that the Leaders make are symmetric, too. In particular, the Leaders agree on each other's selections, and eventually $S'_F$ is divided into $k$ symmetric regions, each belonging to a different Leader.

Recall that $S_F$ has constant size, and hence $S'_F$ has a constant number of elements inducing a combinatorial adjacency relation of constant size. If the selection algorithm only depends on the combinatorial structure of the elements of $S'_F$ and on their local spatial layout (e.g., how different super-edges adjacent to the same super-vertex are laid out around it), then the algorithm can be executed internally by any particle in a single stage, even with its limited memory capabilities.

\smallskip
\noindent\textbf{Turing machine analogy.}
Let us focus on a single directrix. As the phase proceeds, the Leader of this directrix will ``walk'' along it, ``updating'' the states of the particles it encounters. Obviously, the Leader cannot physically move through another particle, but it will rather send it an \emph{I-am-Moving-to-your-Location} message. Then, the Leader will clear its own Leader flag, and the particle that receives the message will set its own. In other words, particles stay still, and the leadership is transferred from a particle to a neighboring one along the directrix.

Additional information can be attached to the I-am-Moving-to-your-Location message, containing a constant-size ``virtual internal state'' of the Leader. So, a particle that is hosting the Leader has its own internal state (as usual), but can also access and update the virtual internal state of the Leader (through an exchange of messages). In the following, to help intuition, we will pretend that the Leader is not a virtual particle with a virtual state, but a special particle with its own state that can walk through other particles.

As the Leader walks along the directrix, it updates its own internal state, as well as the states of the particles it encounters, much like the head of a Turing machine does as it scans the cells of a tape. So, using the states of the particles located on the directrix, the Leader can compute any function that is computable by a deterministic Turing machine on an blank tape of $n/k$ cells. This already gives the Leader an arsenal of subroutines and techniques with which it can operate on the states of the particles.

Say that the Leader wants to perform the same operation on a row of $j$ particles, where $j$ is too large to be stored in the Leader's internal memory. Suppose that the Leader has already constructed a representation of $j$, such as a binary code that fits in the states of the first $O(\log j)$ particles of the directrix. Then, the Leader can come back and decrement this number every time it operates on a particle. When the counter reaches $0$, the Leader knows it has to stop.

With this technique, the Leader can also ``shift'' the states of an entire row of $j$ particles by $j'$ positions to the left or to the right along the directrix, provided that the numbers $j$ and $j'$ are represented in binary in the states of a few particles. It can ``swap'' the states of two rows of $j$ particles, etc.

If the Leader has represented two numbers $a$ and $b$ in binary, it can easily compute the binary representation of their sum or their product with standard techniques, provided that the total number of particles on the directrix is at least $O(\log(a+b))$. Furthermore, given the representation of a number $x$, the Leader can compute any polynomial function of $x$ with constant coefficients, provided that there are $O(\log(x))$ particles on the directrix.

\smallskip
\noindent\textbf{Linearization of the elements.}
In the role assignment phase, the particles will use a new internal variable, called \emph{Role}, whose initial value is \emph{Undefined}. By the end of the phase, each particle will have a well-defined Role. Assigning a Role to a particle essentially means telling the particle which element of $S'_F$ it will contribute to forming in the shape composition phase. Once a Leader has selected a set of elements of $S'_F$, it will ``label'' each such element with a unique identifier. Since the number of elements is bounded by a constant, the Leader can memorize the correspondence between identifiers and elements in its internal memory. Then, it will put an identifier in the Role variable of each particle on its directrix, thus effectively assigning each particle an element. The way in which the Leader labels elements is deterministic: so, if the particles have the same representation of $S_F$ in memory, they implicitly agree also on the labeling of the elements of $S'_F$.

The particles that are given the same element identifier will all be contiguous along the directrix: they will form a \emph{chunk}. Chunks can be laid out in any order along a directrix, and can then be moved around and sorted with the general techniques outlined above.

Say that the scale of $S'_F$ is $\lambda$. Then, the sizes of the chunks are as follows:
\begin{itemize}
\item Each chunk corresponding to a super-vertex consists of a single particle.
\item Each chunk corresponding to a super-edge consists of $\lambda-1$ particles.
\item Each chunk corresponding to a super-triangle consists of $(\lambda-1)(\lambda-2)/2$ particles.
\item The chunk corresponding to a partial super-edge consists of $(\lambda-1)/2$ particles. In this case, $k=2$: so $S_F$ must be unbreakably $2$-symmetric, and we assume $\lambda$ to be odd (cf.~Lemma~\ref{l:shapemin}).
\item The chunk corresponding to a partial super-triangle consists of $(\lambda-1)(\lambda-2)/6$ particles. In this case, $k=3$: so $S_F$ must be unbreakably $3$-symmetric, and we assume $\lambda$ not to be a multiple of $3$ (cf.~Lemma~\ref{l:shapemin}).
\end{itemize}

If we fix $S_F$ and we fix $k$, then the Leader has to assign super-vertex identifiers to a constant number $m_v/k$ of particles, super-edge identifiers to $(m_e/k)\cdot(\lambda-1)$ particles (where $m_e/k$ is a constant), super-triangle identifiers to $(m_t/k)\cdot(\lambda-1)(\lambda-2)/2$ particles (where $m_t/k$ is a constant), plus perhaps the identifiers corresponding to one partial super-edge or one partial super-triangle.

For a fixed $S_F$ and a fixed $k$, the number of particles needed, as a function of $\lambda$, is a second-degree polynomial function $\mathcal P(\lambda)=a\lambda^2+b\lambda+c$, for some constants $a$, $b$, $c$ that can be easily computed from $S_F$ as linear expressions of $m_v$, $m_e$, and $m_t$. For instance, $a=m_t/(2k)$ if there is no partial super-triangle among the elements, and $a=m_t/6+1$ if there is a partial super-triangle. In particular, the number of particles that are given a super-vertex identifier or a (partial) super-edge identifier is a linear function $\mathcal P'(\lambda)=b'\lambda+c'$, for some constants $b'$ and $c'$, which are again linear expressions of $m_v$, $m_e$, and $m_t$.

\smallskip
\noindent\textbf{Main role assignment algorithm.}
For the following algorithm to work, we assume the number of particles in the system, $n$, to be large enough compared to the base size of $S_F$, which is a constant $m$. As we will show in Theorem~\ref{tp6}, $n$ has to be at least $\Theta(m^3)$. An explicit multiplicative constant for this $\Theta(m^3)$ bound could be computed from the polynomials $\mathcal P(\lambda)$ and $\mathcal P'(\lambda)$ defined above.

We will focus on a single directrix, and thus we will describe the operations of a single Leader. Of course, all Leaders in the system will do similar operations on their respective directrices. 

The goal of the Leader is to find an appropriate scale $\lambda$ for the shape $S'_F$, equivalent to $S_F$, that the particles will form in the shape composition phase. If $\mathcal P(\lambda)$ is smaller than $n/k$ (i.e., the number of particles on the directrix), then the particles will not be able to fit in $S'_F$; if $\mathcal P(\lambda)$ is too large, then there will not be enough particles to form $S'_F$. Of course, there may not be a $\lambda$ such that $\mathcal P(\lambda)$ is exactly $n/k$, but recall that some particles can be expanded in the final configuration: this gives the system the ability to roughly double the area it can occupy. For this reason, each particle has a flag called \emph{Double}, which is set if and only if the particle is going to be expanded in the final configuration. 

As a first thing, the Leader converts the number $n/k$ in binary, by simply scanning every particle on its directrix and incrementing a binary counter every time it reaches a new particle. This binary number is stored in the first $O(\log n)$ particles on the directrix: each particle remembers one digit.

Then, the Leader sets $\lambda=7$, representing the binary number $7$ in the states of the first three particles on the directrix (these particles will therefore have to remember two binary digits: one for $\lambda$ and one for $n/k$). Then it computes $\mathcal P(\lambda)$, again in binary, with standard multiplication and addition algorithms. The result is again stored in the first particles on the directrix. This number is compared with $n/k$: if $n$ is large enough, we may assume that $n/k>\mathcal P(7)$, and so the computation continues.

Since the current estimate of $\lambda$ is too small, the Leader increments its binary representation by $6$, so as to keep it from being a multiple of $2$ or of $3$, in accordance to Lemma~\ref{l:shapemin} (if the scale of $S'_F$ is a multiple of $k$, then $S'_F$ is not unbreakably $k$-symmetric, and the system is unable to form it). The Leader repeats the above steps on this new $\lambda$, thus computing $\mathcal P(\lambda)$ and comparing it with $n/k$. If $\mathcal P(\lambda)$ is still too small, the Leader increments $\lambda$ by $6$ again, and so on. Observe that the Leader has enough space to compute $\mathcal P(\lambda)$, because it only needs a logarithmic amount of particles, which are abundantly available if $n$ is greater than a (small) constant.

Eventually, the Leader finds the first $\lambda$ such that $\mathcal P(\lambda)\geq n/k$: this will be the final scale of $S'_F$. The Leader also computes $d=\mathcal P(\lambda)-n/k$, which is the number of particles that will have to be expanded in the final configuration. So, it sets the Double flag of the \emph{last} $d$ particles on the directrix: this is equivalent to converting the binary representation of $d$ in unary. As we will show in Theorem~\ref{tp6}, if $n$ is large enough, there are enough particles to complete this operation. As a result, $d$ particles have the Double flag set and $s=n/k-d=2n/k-\mathcal P(\lambda)$ particles do not. Note that $s+2d=\mathcal P(\lambda)$: so, if each Double particles occupies two locations in the final configuration, the system covers an area equal to that of $S'_F$.

Now that the Leader has determined the scale $\lambda$ of $S'_F$, it has to subdivide the particles into chunks and assign Role identifiers to all of them. Note that the Leader still has a binary representation of $\lambda$ stored in the states of the first $O(\log \lambda)$ particles, and so it will be able to use it to count. As explained before, the Leader can easily compute the amount of particles that it has to put in the same chunk, because this is a polynomial function of $\lambda$ that only depends on whether the chunk corresponds to a super-vertex, a (partial) super-edge, or a (partial) super-triangle.

The only thing the Leader has to decide is the order in which to arrange the chunks. It begins by assigning the $m_v/k$ identifiers corresponding to super-vertices of $S'_F$ to the first $m_v/k$ particles on the directrix. Then, if there is a partial super-edge among the elements of $S'_F$, the Leader assigns the corresponding identifier to the next $(\lambda-1)/2$ particles. Then it places all the chunks corresponding to the $m_e/k$ super-edges. If there is a partial super-triangle among the elements of $S'_F$, the Leader places its chunk right after the super-edges. Finally, it places all the chunks corresponding to the $m_t/k$ super-triangles. As $m_v+m_e+m_t$ is bounded by a constant, the Leader can keep track of what identifier it has to assign next by using just its internal memory.

Since the Double particles are the last ones on the directrix, they are more likely to be found in super-triangle chunks. Actually, as we will show in Theorem~\ref{tp6}, if $n$ is large enough, all Double particles will belong to (partial) super-triangle chunks, while the (partial) super-edge chunks and the super-vertex chunks will have no Double particles. The only exception is the case in which $S_F$ has no triangles, and so no element of $S'_F$ is a (partial) super-triangle. In this case, if $n$ is large enough, all Double particles will be in the same super-edge chunk.

\smallskip
\noindent\textbf{Correctness.}
\begin{theorem}\label{tp6}
Let $P$ be the system with $k$ Leader particles resulting from Theorem~\ref{tp5}, and let all particles of $P$ execute the role assignment phase of the algorithm with input a representation of a final shape $S_F$ of constant base size $m$. If $k>1$, we assume that $S_F$ is unbreakably $k$-symmetric. Then, if $n$ is at least $\Theta(m^3)$, there is a stage, reached after $O(n^2)$ rounds, where all particles have a Role identifier. Moreover, if $S_F$ has at least one triangle, then all the Double particles have Role identifiers corresponding to (partial) super-triangles; if $S_F$ consists only of edges, then all the Double particles on the same directrix have a Role identifier corresponding to the same (partial) super-edge. No particle moves in this phase.
\end{theorem}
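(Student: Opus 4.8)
The plan is to check, step by step, that the deterministic routine of this phase halts with every particle holding a Role identifier and that the Double particles are placed exactly where the statement asserts, all without any movement. I would lean throughout on the ``Turing machine analogy'': a Leader that walks its directrix by transferring its flag and rewriting the states of the particles it passes can evaluate any fixed polynomial of a binary-encoded argument using $O(\log n)$ particles as working tape, so the arithmetic invoked by the algorithm (building $\mathcal{P}(\lambda)$, comparing it with $n/k$, converting $d$ to unary) is legitimate as soon as the directrix exceeds an absolute constant length, which holds because $n$ is large. When $k>1$ the $k$ Leaders run the identical deterministic routine on congruent directrices and remain in lockstep through the synchronization protocol of Section~\ref{s:3}; since $\mathcal{P}$, the subdivision into elements, and the element labeling are deterministic functions of $S_F$ and $k$, all Leaders compute the same $\lambda$ and the same $d$ and produce symmetric Role assignments, so it suffices to reason about one directrix.

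First I would verify the scale search. Every value it tests lies in the progression $7,13,19,\dots$, i.e.\ $\lambda\equiv 1\pmod 6$, so each candidate scale is coprime to $6$ and in particular is a multiple of neither $2$ nor $3$; thus, whenever $k\in\{2,3\}$ and the minimal $S_F$ is unbreakably $k$-symmetric, Lemma~\ref{l:shapemin} ensures that the scaled copy $S'_F$ is again unbreakably $k$-symmetric, so the target is genuinely formable. Writing $\mathcal{P}(\lambda)=a\lambda^2+b\lambda+c$ with $a,b,c$ fixed linear combinations of $m_v,m_e,m_t$ (hence $a,b,c=O(m)$ and $a\ge 0$), one checks that the derivative $2a\lambda+b$ is positive for $\lambda\ge 7$, so $\mathcal{P}$ is strictly increasing there and tends to infinity; the loop therefore halts at the least admissible $\lambda$ with $\mathcal{P}(\lambda)\ge n/k$. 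For large $n$ we have $n/k>\mathcal{P}(7)=O(m)$, so the loop iterates at least once and $d=\mathcal{P}(\lambda)-n/k\ge 0$ is well defined. Because $\lambda$ is the first value clearing the threshold, $d<\mathcal{P}(\lambda)-\mathcal{P}(\lambda-6)=12a\lambda-36a+6b=O(m\lambda)$, which together with $n/k=\Theta(\mathcal{P}(\lambda))$ yields $d\le n/k$ once $\lambda=\Omega(m)$, so the Double flags can legitimately be placed on the last $d$ particles.

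The core of the proof, and its one delicate point, is locating the Double particles; this is where the cubic hypothesis enters. Since the chunks are laid out in the order super-vertices, partial super-edge, super-edges, partial super-triangle, super-triangles, and the Double particles are the last $d$ on the directrix, each claim reduces to a counting inequality. If $S_F$ contains a triangle, ``all Double particles lie in (partial) super-triangle chunks'' is equivalent to $s=2n/k-\mathcal{P}(\lambda)\ge\mathcal{P}'(\lambda)$, i.e.\ $2n/k\ge\mathcal{P}(\lambda)+\mathcal{P}'(\lambda)$; since $n/k>\mathcal{P}(\lambda-6)$, it suffices that $2\mathcal{P}(\lambda-6)-\mathcal{P}(\lambda)-\mathcal{P}'(\lambda)\ge 0$, and a direct expansion shows this quantity equals $a\lambda^2+O(m)\lambda+O(m)$ with positive leading coefficient $a=\Theta(m_t)$ (the super-edge contributions to the $\lambda$-coefficient cancel), hence is nonnegative for $\lambda$ above a threshold $\Theta(m)$. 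If instead $S_F$ consists only of edges, then $a=0$, $d=O(m)$, and ``all Double particles lie in one super-edge'' amounts to $d\le\lambda-1$, which again holds for $\lambda$ above a threshold $\Theta(m)$. Thus a single uniform threshold $\lambda^\ast=\Theta(m)$ covers both cases. Finally, $\lambda\ge\lambda^\ast$ is forced by $n/k>\mathcal{P}(\lambda^\ast-6)$, and because every coefficient of $\mathcal{P}$ is $O(m)$ while $\lambda^\ast=O(m)$, we have $\mathcal{P}(\lambda^\ast-6)=O(m)\cdot O(m^2)=O(m^3)$. Hence $n\ge\Theta(m^3)$ guarantees $\lambda\ge\lambda^\ast$ and makes all the required inequalities hold simultaneously; this product of a leading coefficient as large as $\Theta(m)$ with a target scale as large as $\Theta(m)$ is precisely the source of the cubic bound.

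It remains to bound the time and confirm immobility. Every operation is realized by transferring the Leader flag and rewriting states via messages, never by an expansion or contraction, which gives the ``no particle moves'' clause. The phase is a constant number of Turing-machine computations on a tape of $n/k$ cells: encoding $n/k$ in binary, the scale search (a loop confined to the first $O(\log n)$ cells, run $O(\lambda)=O(n)$ times), the unary conversion of $d$, and the chunk-by-chunk Role assignment. Each such computation completes in $O(n^2)$ rounds in the worst case — for instance when a counter kept at the front of the directrix forces a round trip per processed particle — so summing over the constantly many computations keeps the phase within $O(n^2)$ rounds. I expect the main obstacle to be the Double-placement counting inequality of the previous paragraph: obtaining the cancellation in the $\lambda$-coefficient, extracting the uniform threshold $\lambda^\ast=\Theta(m)$, and thence the cubic bound on $n$; everything else is bookkeeping on a deterministic, movement-free computation.
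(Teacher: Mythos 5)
Your proposal is correct and follows essentially the same route as the paper's proof: the same case split on whether $S_F$ has a triangle, the same reduction of the Double-placement claims to the counting inequalities $2n/k\geq\mathcal{P}(\lambda)+\mathcal{P}'(\lambda)$ (triangle case) and $d\leq{}$(last chunk size) (edge-only case) via the stopping rule $n/k>\mathcal{P}(\lambda-6)$, and the same Turing-machine accounting for the $O(n^2)$ rounds and the no-movement claim. The only variation is cosmetic: you force $\lambda$ past the $O(m)$ threshold $\lambda^{\ast}$ by monotonicity of $\mathcal{P}$ together with $n>k\,\mathcal{P}(\lambda^{\ast}-6)=O(m^3)$, whereas the paper substitutes $\lambda^2\geq n/\bigl(k(A+B+C)\bigr)$ and demands $n\geq D^2k(A+B+C)/A^2$ — equivalent algebra yielding the same cubic bound.
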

\begin{proof}
Let us first assume that $S_F$ has some triangles, and so $m_t>0$. Recall that, if $\lambda$ is the scale of $S'_F$ computed by the Leader, then exactly $2n/k-\mathcal P(\lambda)$ particles do not have the Double flag set. We want these particles to include all the super-vertex chunks and the (partial) super-edge chunks, which in turn consist of $\mathcal P'(\lambda)$ particles in total. This is true if and only if $2n/k-\mathcal P(\lambda)\geq \mathcal P'(\lambda)$, which is equivalent to
\begin{equation}\label{eq1}
\frac{2n}k\geq \mathcal P(\lambda)+\mathcal P'(\lambda).
\end{equation}

If $\lambda$ is the scale on which the Leader has stopped, it means that $\lambda-6$ was too small, and hence $n/k>\mathcal P(\lambda-6)$. So,~(\ref{eq1}) reduces to
\begin{equation}\label{eq2}
2\mathcal P(\lambda-6)\geq \mathcal P(\lambda)+\mathcal P'(\lambda).
\end{equation}

Recall that $\mathcal P(\lambda)=a\lambda^2+b\lambda+c$, where the coefficients $a$, $b$, $c$ are linear expressions of $m_v$, $m_e$, and $m_t$. To express this fact, we can write $a=A(m_v, m_e, m_t)$, $b=B(m_v, m_e, m_t)$, and $c=C(m_v, m_e, m_t)$, where $A$, $B$, $C$ are linear functions. So, $\mathcal P = A\cdot\lambda^2 + B\cdot\lambda +C$, where for brevity we have omitted the arguments of $A$, $B$, $C$. Similarly, we have $\mathcal P'(\lambda)=B'\cdot\lambda+C'$.

Note that $\mathcal P(\lambda-6)=A\cdot\lambda^2+B''\cdot\lambda+C''$, where the leading coefficient is the same as the one in $\mathcal P(\lambda)$, and $B''$ and $C''$ are again linear functions of $m_v$, $m_e$, and $m_t$. Therefore,~(\ref{eq2}) becomes
\begin{equation*}
2A\lambda^2 + 2B''\lambda + 2C'' \geq A\lambda^2 + B\lambda + C + B'\lambda + C',
\end{equation*}
or
\begin{equation*}
A\lambda^2 \geq (B+B'-2B'')\cdot \lambda + (C+C'-2C'').
\end{equation*}
Since $\lambda\geq 1$, it suffices to obtain
\begin{equation*}
A\lambda^2 \geq (B+B'-2B''+C+C'-2C'')\cdot\lambda,
\end{equation*}
or
\begin{equation}\label{eq3}
A\lambda \geq B+B'-2B''+C+C'-2C''=D,
\end{equation}
where $D$ is a linear function of $m_v$, $m_e$, and $m_t$.

Since $\lambda$ is the scale that caused the Leader to exhaust the $n/k$ particles on the directrix, we have $\mathcal P(\lambda)\geq n/k$. Since $\lambda\geq 1$, 
\begin{equation*}
(A+B+C)\cdot\lambda^2\geq\mathcal P(\lambda)\geq n/k,
\end{equation*}
and so
\begin{equation*}
\lambda^2\geq \frac n{k\cdot(A+B+C)}.
\end{equation*}

Recall that $A(m_v, m_e, m_t)$ is proportional to $m_t$: it is either $m_t/(2k)$ or $m_t/6+1$, depending on $S_F$. Since we are assuming $m_t>0$, then necessarily $A>0$. Thus, we reduce~(\ref{eq3}) to
\begin{equation*}
\frac{A^2\cdot n}{k\cdot(A+B+C)}\geq D^2,
\end{equation*}
or
\begin{equation*}
n\geq \frac{D^2\cdot k\cdot(A+B+C)}{A^2}.
\end{equation*}
The right-hand side of the above inequality is clearly dominated by a cubic function of $m_v$, $m_e$, and $m_t$, which in turn is dominated by $\Theta(m^3)$. So, if $n\geq \Theta(m^3)$, then (\ref{eq1}) is satisfied.

Suppose now that $S_F$ has no triangles, and so $\mathcal P(\lambda)=\mathcal P'(\lambda)=B\cdot\lambda+C$, by definition of $\mathcal P'$. At the end of the role assignment phase, there are exactly $\mathcal P(\lambda)-n/k$ Double particles, and we want all of them to be in the same (partial) super-edge chunk, which has size at least $(\lambda-1)/2$. Thus, we have to obtain
\begin{equation}\label{eq4}
\frac{\lambda-1}2\geq \mathcal P(\lambda)-\frac nk.
\end{equation}

As before, we have $n/k>\mathcal P(\lambda-6)$, which reduces~(\ref{eq4}) to
\begin{equation}\label{eq5}
\frac{\lambda-1}2\geq \mathcal P(\lambda)-\mathcal P(\lambda-6).
\end{equation}
Observe that $\mathcal P(\lambda)-\mathcal P(\lambda-6)=6B$, and so~(\ref{eq5}) becomes
\begin{equation}\label{eq6}
\lambda\geq 12B+1.
\end{equation}
Again, we have $\mathcal P(\lambda)\geq n/k$. So, $(B+C)\cdot\lambda\geq n/k$, and
\begin{equation*}
\lambda\geq \frac n{k\cdot(B+C)}.
\end{equation*}
This reduces~(\ref{eq6}) to
\begin{equation*}
\frac n{k\cdot(B+C)}\geq 12B+1,
\end{equation*}
or
\begin{equation*}
n\geq (12B+1)\cdot k\cdot(B+C),
\end{equation*}
which is dominated by $\Theta(m^2)$, and \emph{a fortiori} by $\Theta(m^3)$, as required.

It remains to prove the upper bound on the number of rounds. Note that, according to our Turing machine analogy, it does not take more than one round to perform a single step of the machine. Indeed, for the machine to make any progress, the Leader has to be activated; when this happens, the Leader changes its state and sends a message to a neighboring particle in order to transfer the leadership. Then, in at most one round the neighboring particle is activated, so it reads the message, becomes the new Leader, modifies its state, and transfers the leadership to another particle. Therefore, it will suffice to prove that a Turing machine can perform all the required operations in $O(n^2)$ steps.

The first operation is the conversion of $n/k$ (i.e., the length of the ``tape'') in binary, and the naive algorithm works in $O(n^2)$ steps: we scan the tape one cell at a time, and every time we reach a new cell we set a flag in it, we go back to the beginning of a tape, and we increment a binary counter.

Then we have to compute the optimal $\lambda$, starting from $\lambda=7$ and incrementing it by $6$ at each iteration. This process continues until $\mathcal P(\lambda)>n/k$, which means that the binary representation of $\mathcal P(\lambda)$, and therefore that of $\lambda$, takes at most $O(\log n)$ cells. Given a binary representation of $\lambda$, we can compute $\mathcal P(\lambda)$ in $O(\log^2 n)$ machine steps: adding two integers of size $O(\log n)$ takes $O(\log n)$ steps, and multiplying them takes $O(\log^2 n)$ steps (with the usual long multiplication algorithm). Once we have $\mathcal P(\lambda)$, we can compare it with $n/k$ by doing a subtraction, which takes $O(\log n)$ steps.

Since the value of $\mathcal P(\lambda)$ strictly increases every time we increment $\lambda$, we have to repeat the above computations at most $n/k$ times, which takes $O(n\log^2 n) = O(n^2)$ steps overall.

When we have found the correct $\lambda$, we have to set the Double flags of the last $d$ particles, which takes $O(n^2)$ steps with the naive algorithm (similar to the one we used to compute the binary representation of $n/k$).

Finally, we have to assign the Roles to all particles. The size of a chunk is polynomial in $\lambda$, so we can compute the size of all chunks in $O(\log^2 n)$ time overall (since there is a constant number of chunks). Once we have the size of a chunk as a binary number, we use it as a counter to assign a Role to the particles in that chunk. Again, with the naive algorithm this can be done in $O(n^2)$ steps overall.
\end{proof}

\subsection{Shape Composition Phase}\label{s:3.7}
At the end of the role assignment phase, the particles are located on $k$ directrices, each of which has a Leader. The scale $\lambda$ of the final shape $S'_F$, equivalent to the input shape $S_F$, has been determined, and $S'_F$ has been subdivided among the Leaders in equal and symmetric parts. The particles on the same directrix, which are said to be a \emph{team}, have been partitioned into contiguous chunks, each of which corresponds to an element of $S'_F$. In the shape composition phase, the particles will finally form all the elements of $S'_F$. If $k>1$, the particles will actually form a copy of $S'_F$ having center in the center of $S_0$.

\smallskip
\noindent\textbf{Moving to the backbone.}
Recall that, if $k>1$, the elements of $S'_F$ have been split among the Leaders based on their intersections with a structure called backbone (see Section~\ref{s:3.6}). That is, the team that lies on the directrix $\gamma_i$ will form the elements of $S'_F$ that lie on the ray $\beta_i$ of the backbone, as well as other carefully chosen contiguous elements of $S'_F$. Thus, as a preliminary step of the shape composition phase, it is convenient to relocate the whole team from $\gamma_i$ onto $\beta_i$. Of course, if $k=1$, this step is skipped.

Let us consider the case $k=2$ first. In this case, the endpoint of $\beta_i$ is located on $\gamma_i$, at distance $(\lambda-1)/2$ on $G_D$ from its endpoint. To relocate the team, the Leader can reach the last particle on its directrix and execute the pulling procedure introduced in Section~\ref{s:3.5} $(\lambda-1)/2$ times, each time with destination the next vertex along the directrix. Recall that the number $\lambda$ is still represented in binary in the states of the some particles in the team (from the role assignment phase of the algorithm). So, the Leader can easily compute a representation of the number $(\lambda-1)/2$ and use it as a counter to know when to stop pulling (refer to Section~\ref{s:3.6}).

Suppose now that $k=3$. Then, $\beta_i$ is parallel to $\gamma_i$, and its endpoint is at distance $2(\lambda-1)/3$ on $G_D$ from the endpoint of $\gamma_i$. To guide the team to $\beta_i$, the Leader first pulls it along $\gamma_i$ for $(\lambda-1)/3$ steps with the technique explained above. Then it turns counterclockwise by $60^\circ$ and moves in that direction for another $(\lambda-1)/3$ steps, always pulling the entire team. At this point, the Leader is located on $\beta_i$ at distance $n/k-1$ from its endpoint. Finally, the Leader turns counterclockwise by $120^\circ$ and moves in that direction, pulling the team, until the entire line of particles is straight (note that the Leader does not have to count to $n/k-1$ to know when to stop pulling). When these operations are complete, the team is all on $\beta_i$, and the Leader is on its endpoint.

In Theorem~\ref{tfinal}, we will show that we do not have to worry about collisions with particles from other teams during this preliminary step of the algorithm, even if different Leaders end up being completely de-synchronized, and one starts composing $S'_F$ while another is still relocating its own team to the backbone.

\smallskip
\noindent\textbf{Formation order.}
Suppose that, if $k>1$, all the particles on $\gamma_i$ have been relocated to the backbone ray $\beta_i$, and now they are all contracted and form a line segment with an endpoint on the endpoint of $\beta_i$. If $k=1$, we define the unique backbone ray $\beta_1$ to be coincident with the unique directrix $\gamma_1$.

Recall that in the role assignment phase the Leader of $\gamma_i$ has selected some elements of $S'_F$: these constitute a shape $(S'_F)_i\subseteq S'_F$, which the team particles that is now on $\beta_i$ is going to form in the current phase of the algorithm.

We have to decide in what order the elements of $(S'_F)_i$ are to be formed. The super-vertices and the super-edges that lie on $\beta_i$ will be formed last, because $\beta_i$ serves as a ``pathway'' for the team to move and get into position. The other elements of $(S'_F)_i$ are formed starting from the ones adjacent to $\beta_i$, and proceeding incrementally; super-vertices and super-edges are formed first.

This is how the ``ordered list'' $\mathcal L_i$ of elements of $(S'_F)_i$ is constructed:
\begin{itemize}
\item If a (partial) super-edge $e$ of $(S'_F)_i\setminus \beta_i$ is combinatorially adjacent to a super-vertex lying on $\beta_i$ or to a super-vertex that has already been included in $\mathcal L_i$, then $e$ is appended to $\mathcal L_i$.
\item If a super-vertex $v$ of $(S'_F)_i\setminus \beta_i$ is combinatorially adjacent to a super-edge that has already been included in $\mathcal L_i$, then $v$ is appended to $\mathcal L_i$.
\item If all the super-vertices and the (partial) super-edges of $(S'_F)_i\setminus \beta_i$ have already been included in $\mathcal L_i$, then the (partial) super-triangles are appended to $\mathcal L_i$ in any order.
\item If all the elements of $(S'_F)_i\setminus \beta_i$ have already been included in $\mathcal L_i$, then the elements lying on $\beta_i$ are appended to $\mathcal L_i$ in increasing order of distance from the endpoint of $\beta_i$.
\end{itemize}

Once again, we remark that the Leader can store $\mathcal L_i$ in its internal memory, since the number of elements of $(S'_F)_i$ is bounded by a constant.

The list $\mathcal L_i$ does not have to be confused with the order in which the chunks are arranged along the backbone: the chunks can be ordered in any way. Next we are going to show how the Leaders operate to bring the chunks into their right positions and finally form all the elements of $S'_F$.

\smallskip
\noindent\textbf{Main shape composition algorithm.}
The idea of the algorithm is that the $i$th Leader guides its team in the formation of $(S'_F)_i\subseteq S_F$, one element at a time, following the list $\mathcal L_i$. At any time, the particles of the team that do not lie in $(S'_F)_i$ are all lined up on $\beta_i$, and constitute a ``repository'' of contiguous chunks, each with a Role identifier corresponding to an element of $(S'_F)_i$.

In this discussion, we will temporarily forget about the presence of Double particles in the repository. The formation of elements by chunks containing Double particles will be treated after the main parts of the algorithm have been explained. Other details of the algorithm will be covered later, as well.

The following steps are executed assuming that the $i$th Leader is on $\beta_i$, within the repository, and are repeated until there are no more elements on $\mathcal L_i$ to form. The repository is assumed to consist of contracted particles forming a connected sub-segment of $\beta_i$; moreover, the part of $\beta_i$ that follows the repository is assumed to be devoid of particles. These conditions are satisfied when the algorithm begins and will be satisfied again every time the steps have been executed.
\begin{itemize}
\item The Leader reads the identifier of the next element $d$ on the list $\mathcal L_i$ (i.e., the identifier of the first element on the list that has not been formed, yet).
\item The Leader locates the particles in the repository that have Role identifier corresponding to $d$ (in our terminology, these particles constitute a chunk) and ``shifts'' them to the beginning of the repository (i.e., the part of the repository that is closest to the endpoint of $\beta_i$). That is, the Leader swaps their Role identifiers and Double flags with the ones of the particles that precede them, until the desired particles are at the beginning. This operation is simple to do, considering the Turing machine analogy pointed out in Section~\ref{s:3.6}. Note that the particles do not have to physically move, but only exchange messages and modify their internal states.
\item Suppose that $d$ does not lie on $\beta_i$. Since $d$ is next on the list, it means that there is a sequence of elements of $(S'_F)_i$ connecting $\beta_i$ with $d$ that have already been formed (refer to the definition of $\mathcal L_i$). More precisely, there is a sequence $W=(v_0, e_0, v_1, e_1, \dots)$, where the $v_j$'s are super-vertices and the $e_j$'s are super-edges, and each element is combinatorially adjacent to the next, such that $v_0$ lies on $\beta_i$, all elements of $W$ except $v_0$ have already been formed, and the last element $d'$ (which could be a super-vertex or a super-edge) is combinatorially adjacent to $d$. Note that $W$ induces a path in $(S'_F)_i$, because it consists of super-edges and the super-vertices between them. Let $q$ be the first point along this path that neighbors a point of $d$.

In the special case in which $d$ is a super-triangle combinatorially adjacent to a super-edge $e$ lying on $\beta_i$, the above does not hold. In this case, we take both $v_0$ and $q$ to be the same endpoint of $e$.

Then the following steps are performed:
\begin{itemize}
\item The Leader pulls the entire repository along $\beta_i$ until the particle that is closest to the endpoint of $\beta_i$ coincides with $v_0$ (we will explain how the Leader can find $v_0$ later). No ``obstructions'' are found on $\beta_i$, because at this stage it does not contain formed elements, yet.
\item The Leader shifts along $W$ all the particles of the chunk corresponding to $d$, in such a way that the first particle of the chunk goes from $v_0$ to $q$ (and the other particles of the chunk occupy the positions on $W$ before $q$, and perhaps also on $\beta$, if the chunk is too long). As a consequence, all the particles of $W$ are shifted back along $W$ and into $\beta$ by as many positions as the size of the chunk.
\item The Leader pulls the chunk into $d$, along with all of $W$ and the rest of the repository. As a result, the chunk forms $d$ and the particles of $W$ are back into their original positions (i.e., the ones they occupied in the previous step before being shifted). The details of how the Leader arranges the chunk to form $d$, in case $d$ is a (partial) super-triangle, will be explained later.
\item The Leader returns to $v_0$ along $d$ and $W$.
\end{itemize}
\item Suppose now that $d$ lies on $\beta_i$. The following steps are performed:
\begin{itemize}
\item The Leader pulls the entire repository along $\beta_i$ until the particle that is closest to the endpoint of $\beta_i$ coincides with the vertex of $d$ that is farthest from the endpoint of $\beta_i$. Since the elements of $(S'_F)_i$ lying on $\beta_i$ have been inserted in $\mathcal L_i$ in order of distance from the endpoint of $\beta_i$, there are no particles on $\beta_i$ ``obstructing'' the repository while it is being pulled. Again, the details of how the Leader finds this point on $\beta_i$ will be explained later.
\item If $d$ is a super-vertex, it has already been formed. Otherwise, $d$ is a super-edge: the Leader forms it by pulling the entire repository toward the endpoint of $\beta_i$ for $\lambda-2$ times. As usual, the Leader can easily count to $\lambda-2$ using the states of the $\lambda-1$ particles in the chunk, and therefore it knows when to stop pulling.
\end{itemize}
\end{itemize}

The Leader always makes sure to keep a binary representation of $\lambda$ in the repository for as long as possible (i.e., as long as there are enough particles in the repository). So, before removing a chunk that contains part of this information, the Leader copies it to other chunks.

\smallskip
\noindent\textbf{Traveling long distances on the backbone.}
In the algorithm above, the Leader is supposed to pull the repository along $\beta_i$ until it reaches a certain element of $(S'_F)_i$, which can be far away. We have to explain how this element can be found, considering that the Leader may not be able to measure this distance by counting. We can assume this element to be a super-vertex: if it is a super-edge lying on $\beta_i$, then $\beta_i$ also contains the two super-vertices that bound it, and the Leader may as well reach one of those instead.

Observe that, as the Leader executes the above steps, it always knows in which element of $(S'_F)_i$ it is located, because it can keep track of it using only a constant amount of memory.

Now, suppose that the Leader is located on a super-vertex $u_1$ on $\beta_i$ and has to move to another super-vertex $u_2$, always on $\beta_i$, while pulling the repository. Obviously, the distance between $u_1$ and $u_2$ is $\lambda$ times the distance between the corresponding vertices in the minimal shape $S_F$, which in turn is a known value that is bounded by a constant (because the base size of $S_F$ is a constant). It follows that the Leader can measure this distance if it can count to $\lambda$.

If the current repository contains a chunk corresponding to a (partial) super-edge or a (partial) super-triangle, then the Leader has enough particles at its disposal to count to $\lambda$ in binary, and the problem is solved.

So, let us study the case in which the current repository only contains chunks corresponding to super-vertices. We deduce that all the super-edges of $(S'_F)_i$ have already been formed, and the Leader has to reach $u_2$ to form a super-vertex $v$.

Suppose first that $v$ is not on $\beta_i$. Then, there is a path consisting of super-edges and super-vertices of $(S'_F)_i$ that connects $u_2$ with $v$ (due to the way the elements of $(S'_F)_i$ have been selected by the Leader; see Section~\ref{s:3.6}). In particular, there is an edge $e$, combinatorially adjacent to $u_2$, that has already been formed. So, the Leader can simply proceed along $\beta_i$ until it finds a particle with Role identifier corresponding to $e$ among its neighbors. When it finds such a particle, the Leader is in $u_2$.

Now suppose that $v$ is on $\beta_i$, and so $u_2=v$. Recall from Section~\ref{s:3.6} that, if the Leader has selected $u_2$ to be part of $(S'_F)_i$, then it has also selected a (partial) super-edge $e$ that is combinatorially adjacent to it. Again, since $e$ has already been formed, the Leader can proceed along $\beta_i$ until it finds a neighbor with Role identifier corresponding to $e$.

\smallskip
\noindent\textbf{Forming shapes with no triangles.}
So far, we have ignored the presence of Double particles, i.e., particles that have to be expanded in the final configuration. Next we will explain how to handle them in the shape composition algorithm.

The issues arise from the fact that a Leader has to be able to move through contracted particles to reach different elements of $(S'_F)_i$, as well as pull chains of particles that are supposed to be contracted. In the shape composition algorithm, this happens when the Leader has to form an element that is reachable from $\beta_i$ through a path $W$ consisting of super-vertices and super-edges. If $W$ contains Double particles, which are either expanded or leave gaps between particles, then these operations are not straightforward.

We first consider the case in which $S_F$ has no triangles and consists only of edges. Recall from Theorem~\ref{tp6} that, in this case, all the Double particles are in a single chunk $c$ that corresponds to a (partial) super-edge $e$. Note that which chunk actually contains the Double particles is irrelevant for the purposes of shape formation, and so we can choose to put them in a convenient chunk.

The chunk $c'$ we choose is the one corresponding to the super-edge $e'$ that is last in the list $\mathcal L_i$. To do the switch, the Leader simply checks the Role identifiers of all the particles in the repository, and changes each occurrence of the identifier corresponding to $e$ into the one corresponding to $e'$, and vice versa.

The advantage of choosing $e'$ is that it will never be part of a path $W$ that the Leader has to follow to reach the next element to form, except perhaps if such an element is a specific super-vertex $v'$ that is combinatorially adjacent to $e'$. So, if $v'$ appears in the list $\mathcal L_i$ after $e'$, the Leader first sees if it can move $v'$ before $e'$ while respecting the constraints that define $\mathcal L_i$. This is possible if and only if $(S'_F)_i$ has a super-edge distinct from $e'$ that is combinatorially adjacent to $v'$. If this is not the case, the Leader merges the two chunks corresponding to $e'$ and $v'$, and hence it will form $e'$ and $v'$ in a single step, as if they were a slightly longer super-edge.

So, all the elements of $(S'_F)_i$ except $e'$ and perhaps $v'$ are formed as explained in the main shape composition algorithm. Indeed, if $v'$ has to be part of a path $W$ as defined above, then it means that $v'$ has a combinatorially adjacent super-edge in $(S'_F)_i$ other than $e'$, and so $v'$ has been formed as normal, and its chunk has not been merged with the one of $e'$.

We just have to show how to form $e'$, and perhaps $v'$ if it is formed in the same step. The algorithm works as normal, until the chunk that is going to form $e'$ (and perhaps $v'$) has been pulled up to a point where one of its particles neighbors an endpoint of $e'$. Then, the following steps are executed:
\begin{itemize}
\item The Leader pulls the chunk along $e'$ (also pulling the path $W$ and the repository, as usual) until the last particle of the chunk has entered $e'$.
\item The Leader makes the particle on which it currently is into a \emph{Puller} (by setting an internal flag).
\item The Leader leaves $e'$ and proceeds with the algorithm as normal. The Puller waits for the Leader to leave $e'$.
\item The Puller starts another pulling procedure. When the last Double particle of $e'$ is pulled, it expands and sends a Movement-Done message to its Predecessor without contracting again.
\item The above step is repeated until all the Double particles of the chunk are expanded.
\end{itemize}

\smallskip
\noindent\textbf{Forming shapes with triangles.}
Finally, we consider the case in which $S_F$ has at least one triangle. According to Theorem~\ref{tp6}, in this case all the Double particles are in chunks corresponding to (partial) super-triangles. So, all the paths consisting of super-vertices and super-edges that the Leader has to traverse to reach new elements of $(S'_F)_i$ are unaffected by Double particles. Therefore, the shape composition algorithm works as we already explained, except for the formation of (partial) super-triangles.

In the following, we will explain how to form a (partial) super-triangle $t$ whose corresponding chunk $c$ may contain Double particles. Recall that $c$ enters $t$ through an endpoint of a combinatorially adjacent (partial) super-edge, and therefore it starts covering $t$ from one of its three corners.

Suppose first that $t$ is a super-triangle. The algorithm is roughly the same as the one already used above for the super-edges, except that now the Leader has to fill a triangle $t$. The steps are as follows (refer to Figure~\ref{f:triangle}):
\begin{itemize}
\item The Leader pulls $c$ (as well as the path $W$ and the repository) following the boundary of $t$ in the counterclockwise direction for $\lambda-2$ steps, thus covering one of its sides. The Leader can count to $\lambda-2$ as usual, representing $\lambda-2$ in the states of the particles of $c$.
\item The Leader turns counterclockwise by $120^\circ$ and pulls for another $\lambda-3$ steps, covering another side of $t$.
\item The Leader turns counterclockwise by $120^\circ$ and pulls until it finds a particle in front of it.
\item The previous step is repeated until the last particle of $c$ enters $t$.
\item The Leader makes the particle on which it currently is into a Puller.
\item The Leader leaves $t$ (following the chain it just pulled) and proceeds with the shape composition algorithm. The Puller waits for the Leader to leave $t$.
\item The Puller pulls $c$, turning counterclockwise by $120^\circ$ if it finds a particle in front of it. When the last Double particle of $c$ is pulled, it expands and sends a Movement-Done message to its Predecessor without contracting again.
\item The above step is repeated until all the Double particles of the $c$ are expanded.
\end{itemize}

\begin{figure}[ht!]
\begin{center}
  \subfloat[The chunk is pulled along the arrow.]{
  \includegraphics[width=0.65\textwidth]{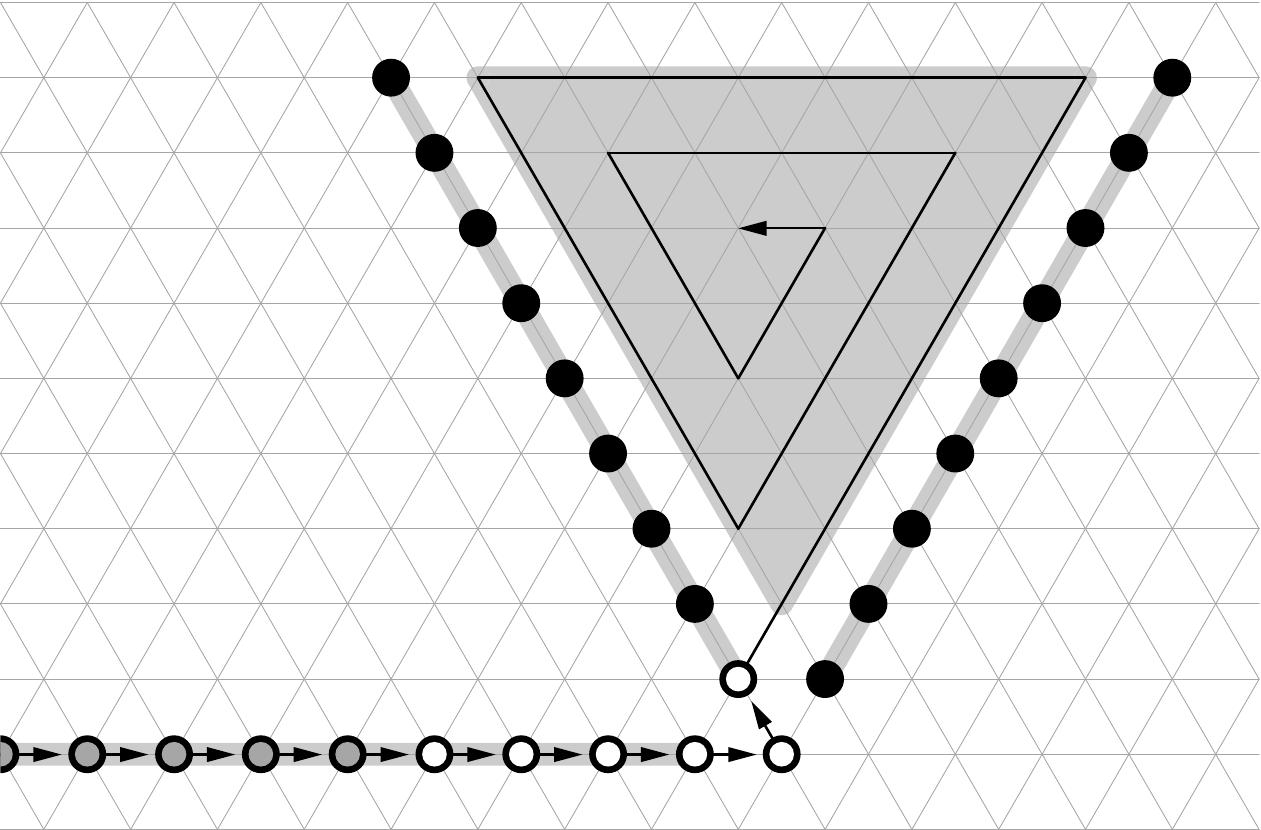}
  \label{f:triangle1}
  }
  \\
  \subfloat[The Double particles expand to cover the super-triangle.]{
  \includegraphics[width=0.65\textwidth]{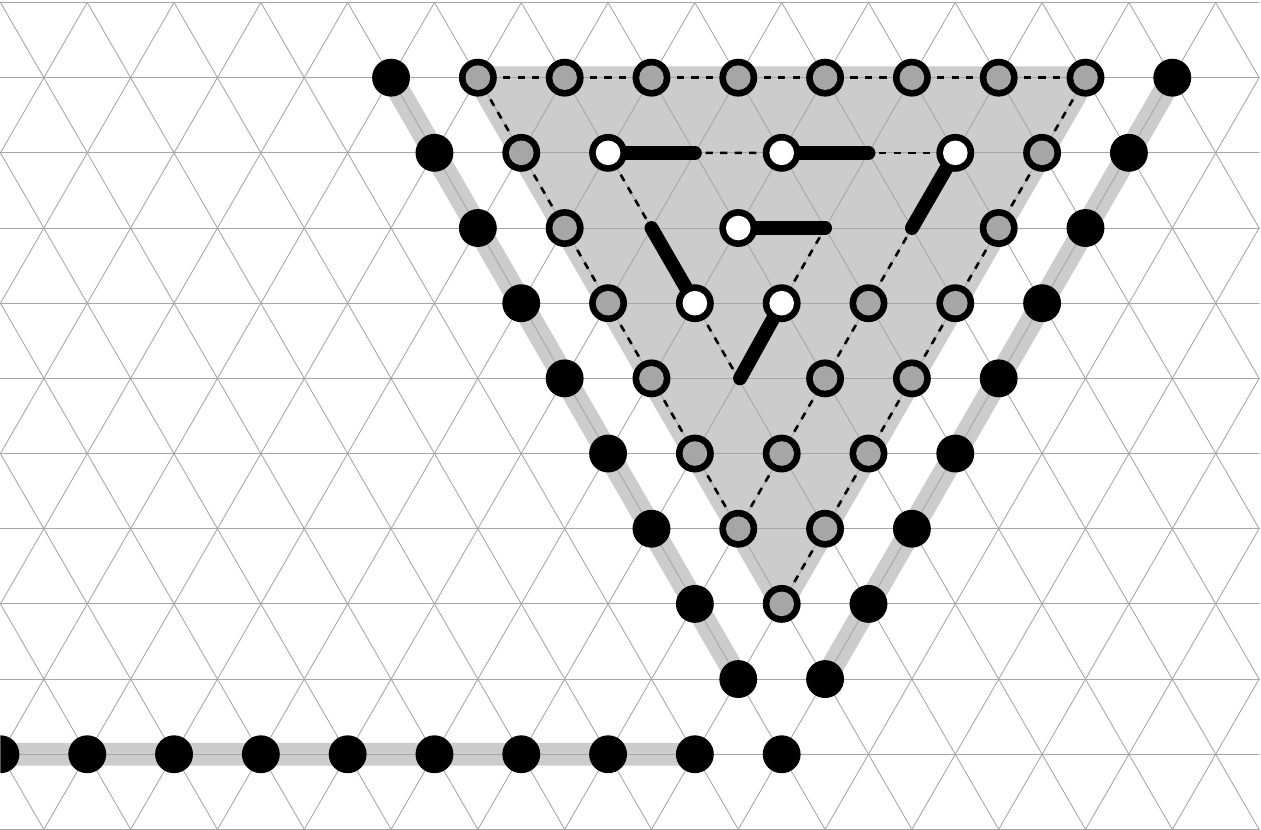}
  \label{f:triangle2}
  }
\end{center}
\caption{Formation of a super-triangle by a chunk containing some Double particles. The Double particles are drawn in white; the other particles in the chunk are drawn in gray.}
\label{f:triangle}
\end{figure}

If $t$ is a partial super-triangle, the algorithm is identical, with the only difference that now the Leader does not have to cover the perimeter of an equilateral triangle with sides of length $\lambda-3$, but of an isosceles trapezoid with sides of length $(\lambda-4)/3$, $(\lambda-4)/3$, $(\lambda-4)/3$, and $2(\lambda-4)/3$.

\smallskip
\noindent\textbf{Correctness.}
We can now prove the correctness of the universal shape formation algorithm.
\begin{theorem}\label{tfinal}
Let $P$ be a system of $n$ particles forming a simply connected shape $S_0$ at stage 0. Let $S_F$ be a shape of constant base size $m$ that is unbreakably $k$-symmetric if $S_0$ is unbreakably $k$-symmetric. If all particles of $P$ execute the universal shape formation algorithm with input a representation of the final shape $S_F$, and if $n$ is at least $\Theta(m^3)$, then there is a stage, reached after $O(n^2)$ rounds, where $P$ forms a shape equivalent to $S_F$. The total number of moves performed by $P$ up to this stage is $O(n^2)$, which is asymptotically optimal; particles no longer move afterwards. 
\end{theorem}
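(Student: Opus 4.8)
The plan is to obtain Theorem~\ref{tfinal} as the composition of the six phase-correctness results already established, together with a fresh analysis of the seventh (shape composition) phase. First I would chain Theorems~\ref{tp1}--\ref{tp6}: the lattice consumption phase (Theorem~\ref{tp1}) produces $k\in\{1,2,3\}$ pairwise adjacent candidate leaders while keeping the Eligible sub-system simply connected; the spanning forest phase (Theorem~\ref{tp2}) roots a tree at each; the handedness agreement phase (Theorem~\ref{tp3}) makes all particles share the leaders' handedness and restores $S_0$; the leader election phase (Theorem~\ref{tp4}) either elects a unique leader or leaves exactly $k$ symmetric leaders, in which case $S_0$ is unbreakably $k$-symmetric; the straightening phase (Theorem~\ref{tp5}) produces $k$ equally long segments arranged around the center of $S_0$; and the role assignment phase (Theorem~\ref{tp6}) fixes the scale $\lambda$ and assigns every particle a Role, using the hypothesis $n\ge\Theta(m^3)$ and placing all Double particles in (partial) super-triangle chunks (or, if $S_F$ has no triangle, in a single (partial) super-edge chunk). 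The feasibility hypothesis is exactly what makes the last two phases well-posed: whenever Theorem~\ref{tp4} returns $k>1$ leaders, $S_0$ is unbreakably $k$-symmetric, so by assumption $S_F$ is too, and hence $S'_F$ can be chosen unbreakably $k$-symmetric with $\lambda$ not a multiple of $k$ (Lemma~\ref{l:shapemin}), which is precisely the regime handled by the role-assignment and shape-composition subdivisions.

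The core new work is the correctness of the shape composition phase. I would argue that, on each directrix, the Leader processes its ordered list $\mathcal L_i$ so that every element $d$ is reachable from $\beta_i$ along a path $W$ of already-formed super-vertices and super-edges; this is an invariant of the way $\mathcal L_i$ is built, and it guarantees that when a chunk is pulled toward $d$ the path $W$ is restored afterward. I would then verify the Double-particle handling separately in the two cases (only edges versus at least one triangle), showing that the Puller mechanism expands exactly the Double particles of a (partial) super-edge or (partial) super-triangle chunk while leaving the formed shape geometrically correct, and that the chunk-relabeling trick keeps all traversal paths $W$ free of expanded particles; this is what makes the final configuration, with the $s$ contracted and $d$ expanded particles, cover $S'_F$ exactly. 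When $k>1$, the decisive point is collision-freedom between the $k$ teams: because the Leaders' selections are deterministic and $S'_F$ is rotationally symmetric, the $k$ regions $(S'_F)_i$ and the $k$ backbone rays are symmetric and pairwise disjoint, so the motions of symmetric teams remain symmetric and never contend for the same vertex---and this holds even if the teams are de-synchronized, including during the preliminary relocation from $\gamma_i$ to $\beta_i$, since the swept regions of distinct teams are separated by the geometry of the backbone and co-backbone. I expect this collision analysis, together with the Double-particle bookkeeping, to be the main obstacle, because it is the one step not reducible to a previously proved lemma and it must be checked uniformly for $k=1,2,3$ and for both the triangle and triangle-free regimes.

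Finally I would tally the complexity. Each phase takes $O(n^2)$ rounds (phases~1--3 even take $O(n)$), and for the shape composition phase I would bound the rounds by noting that there are $O(1)$ elements per team, each formed by $O(\lambda)=O(\sqrt n)$ pulling procedures, and that a pulling procedure completes in $O(n)$ rounds; summing gives $O(n^2)$ rounds overall. For moves, the straightening phase already contributes $O(n^2)$ (Theorem~\ref{tp5}), the handedness phase $O(n)$ (Theorem~\ref{tp3}), and the shape composition phase at most $O(n^{3/2})$ (each of the $O(1)$ elements costs $O(\lambda)$ pulls of $O(n)$ moves each), while phases~1,~2,~4,~6 make no moves; the total is therefore $O(n^2)$, dominated by straightening. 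Comparing with the $\Omega(n^2)$ lower bound of Theorem~\ref{t:lower} yields asymptotic optimality. Termination and the ``no further moves'' clause follow because, once every Leader has exhausted $\mathcal L_i$, the system forms a copy of $S'_F$ (centered at the center of $S_0$ when $k>1$), all Leaders stop issuing Pull commands, and hence no particle ever moves again.
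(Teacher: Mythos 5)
Your high-level structure is the same as the paper's: chain Theorems~\ref{tp1}--\ref{tp6}, then argue the shape composition phase directly (reachability of each new element along a path $W$ of already-formed super-vertices and super-edges, Double-particle handling split into the triangle and triangle-free regimes, and collision-freedom between the $k$ symmetric teams---including during the relocation from $\gamma_i$ to $\beta_i$, which the paper also settles via the co-backbone selection argument), and finally compare against the lower bound of Theorem~\ref{t:lower}. One small omission: the paper explicitly notes that, since the agreed handedness may be the mirror of the ``true'' clockwise orientation, the formed shape $S'_F$ may be a \emph{reflected} copy of $S_F$; this is harmless only because the notion of equivalence admits reflections, and your write-up should say so.

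There is, however, a genuine error in your complexity accounting of the shape composition phase. You claim each element costs $O(\lambda)=O(\sqrt n)$ pulling procedures, hence $O(n^{3/2})$ moves for the phase. This fails on two counts. First, a (partial) super-triangle contains $\Theta(\lambda^2)$ vertices, and since each pulling procedure advances the chain by exactly one position, filling it takes $\Theta(\lambda^2)$ pulls, not $O(\lambda)$; with $\lambda=\Theta(\sqrt n)$ this is $\Theta(n)$ pulls of $O(n)$ moves each, i.e., $\Theta(n^2)$ moves for a \emph{single} super-triangle. Second, the estimate $\lambda=\Theta(\sqrt n)$ itself only holds when $S_F$ contains a triangle ($m_t>0$); in the triangle-free case $\mathcal P(\lambda)$ is linear in $\lambda$, so $\lambda=\Theta(n)$, and forming one super-edge of $\Theta(n)$ particles already takes $\Theta(n)$ pulls, again $\Theta(n^2)$ moves. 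The same objection hits your round count: from your own premises the arithmetic gives $O(n^{3/2})$, not the $O(n^2)$ you wrote. The correct accounting (the paper's) is that each of the constantly many elements costs at most $O(n)$ pulls for positioning plus a number of pulls equal to the element's size, which is $O(n)$, each pull costing $O(n)$ moves and rounds, for $O(n^2)$ per element and $O(n^2)$ overall. Your final statement survives only by accident---both the corrected composition cost and the straightening phase are $O(n^2)$---but your claim that composition costs $O(n^{3/2})$ and that straightening is the bottleneck misattributes where the quadratic cost comes from, and would mislead any attempt (such as the one mentioned in Section~\ref{s:4}) to sharpen the round complexity.
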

\begin{proof}
Under these assumptions, Theorems~\ref{tp1}--\ref{tp6} apply. So, we can assume that at some stage the particles will be found on $k$ directrices, each containing exactly one Leader particle. Moreover, if $k>1$, then $S_F$ is unbreakably $k$-symmetric. The Leaders have implicitly agreed on a shape $S'_F$ and have split its elements among each other. Since the handedness on which the particles agree (cf.~Theorem~\ref{tp3}) may not be the ``real'' one, $S'_F$ may actually be a reflected copy of $S_F$. However, the definition of the shape formation problem allows for any similarity transformation of the shape, which include reflections (see Section~\ref{s:2}).

If $k=1$, the correctness of the shape composition algorithm follows by construction. If there is more than one Leader, we only have to show that different Leaders will never interfere with each other, and their respective teams will never get in each other's way. This is because different teams are confined to move within different regions of $G_D$ throughout the phase. This is obvious if the $k$ teams are all executing the preliminary relocation step or if they are all executing the main composition algorithm. Suppose now that $k=3$, and one team is moving from its directrix $\gamma_i$ to the backbone $\beta_i$, while another team is already executing the main composition algorithm. Recall from Section~\ref{s:3.6} that all the elements of $S'_F$ that are incident to the co-backbone ray $\beta'_i$ are selected by the Leader of $\gamma_i$ to be part of $(S'_F)_i$: these are precisely the elements that are incident with $\gamma_i$, as well. So, as particles are being pulled from $\gamma_i$ to $\beta_i$, they only pass through elements that have been selected by their Leader, making it impossible for them to encounter particles from other teams. 

Let us count the total number of moves of $P$ and the number of rounds it takes to form $S'_F$. Up to the beginning of the shape composition phase, $P$ performs at most $O(n^2)$ moves in at most $O(n^2)$ rounds, as Theorems~\ref{tp1}--\ref{tp6} imply. When a Leader relocates its team onto the backbone, it pulls all the particles at most $O(n)$ times, and the total number of moves, as well as rounds, is at most $O(n^2)$. Then, in order to form one element of $S'_F$, a Leader may have to pull at most $O(n)$ particles for at most $O(n)$ times along the backbone to get the chunk into position: this yields at most $O(n^2)$ moves and rounds. Then it has to pull at most $O(n)$ particles for a number of times that is equal to the size of the element of $S'_F$, which is $O(n)$. Since the number of elements of $S'_F$ is bounded by a constant, this amounts to at most $O(n^2)$ moves and rounds, again. All other operations involve only message exchanges and no movements, so the $O(n^2)$ upper bound on the number of moves follows. Due to the matching lower bound given by Theorem~\ref{t:lower}, our universal shape formation algorithm is asymptotically optimal with respect to the number of moves.

To conclude, observe that shifting chunks within a repository, computing polynomial functions of $\lambda$, and using them as counters takes $O(n^2)$ rounds overall, since this has to be done at most once per chunk, i.e., a constant number of times. So, the upper bound of $O(n^2)$ rounds follows, as well.
\end{proof}

\section{Conclusion and Further Work}\label{s:4}
We have described a universal shape formation algorithm for systems of particles that performs at most $O(n^2)$ moves, which is asymptotically optimal. The number of rounds taken to form the shape is $O(n^2)$ as well: with a slight improvement on the last phases of out algorithm, we can reduce it to $O(n\log n)$ rounds, and the example described in Theorem~\ref{t:lower} yields a lower bound of $\Omega(n)$ rounds. Determining an asymptotically optimal bound on the number of rounds is left as an open problem.

We have established that, given a shape $S_F$ of constant size $m$, a system of $n$ particles can form a shape geometrically similar to $S_F$ (i.e., essentially a scaled-up copy of $S_F$) starting from any simply connected configuration $S_0$, provided that $S_F$ is unbreakably $k$-symmetric if $S_0$ is, and provided that $n$ is large enough compared to $m$. We only determined a bound of $\Theta(m^3)$ for the minimum $n$ that guarantees the formability of $S_F$. We could improve it to $\Theta(m)$ by letting the Double particles be in any chunk and adopting a slightly more sophisticated pulling procedure in the last phase. We may wonder if this modification would make our bound asymptotically optimal. 

When discussing the role assignment phase, when the particles are arranged along straight lines, we have argued that the system can compute any predicate that is computable by a Turing machine on a tape of limited length. If we allow the particles to move back and forth along these lines to simulate registers, we only need a (small) constant number of particles to implement a full-fledged Turing machine with an infinite tape. So, in the role assignment phase, we are actually able to compute any Turing-computable predicate (although we would have to give up our upper bounds of $O(n^2)$ moves and stages).

With this technique, we are not only able to replace our $\Theta(m^3)$ with the best possible asymptotic bound in terms of $m$, but we have a universal shape formation algorithm that, for every $n$ and every $S_F$, lets the system determine if $n$ particles are enough to form a shape geometrically similar to $S_F$. This is done by examining all the possible connected configurations of $n$ particles and searching for one that matches $S_F$, which is of course a Turing-computable task.

Taking this idea even further, we can extend our notion of shape to its most general form. Recall that the shapes considered in~\cite{spaa} were sets of ``full'' triangles: when a shape is scaled up, all its triangles are scaled up and become larger full triangles. In this paper, we extended the notion of shape to sets of full triangles and edges: when an edge is scaled up, it remains a row of points. Of course, we can think of shapes that are not modeled by full triangles or edges, but behave like fractals when scaled up. For instance, we may want to include a discretized version of the Sierpinski triangle as a ``building block'' of our shapes, alongside full triangles and edges. Scaling up a discretized Sierpinski triangle is equivalent to increasing its ``resolution'', which causes finer details to appear inside it. Clearly, these scaled-up copies of the discretized Sierpinski triangle are Turing-computable.

Generalizing, we can replace our usual notion of geometric similarity between shapes with any Turing-computable equivalence relation $\sim$. Then, the shape formation problem, with input a shape $S_F$, asks to form any shape $S'_F$ such that $S_F\sim S'_F$. This definition of shape formation problem includes and greatly generalizes the one studied in this paper, and even applies to scenarios that are not of a geometric nature. Nonetheless, this generalized problem is still solvable by particles, thanks to the technique outlined above.

\smallskip
\noindent\textbf{Acknowledgments.}
This research has been supported in part by the Natural Sciences and Engineering Research Council of Canada under the Discovery Grant program, by Prof.~Flocchini's University Research Chair, and by Prof.~Yamauchi's Grant-in-Aid for Scientific Research on Innovative Areas ``Molecular Robotics'' (No.\ 15H00821) of MEXT, Japan and JSPS KAKENHI Grant No.\ JP15K15938.

\bibliographystyle{plain}

\end{document}